\theoremstyle{plain}
\newtheorem{lemma}{Lemma}
\newtheorem{proposition}[lemma]{Proposition}
\newtheorem{theorem}[lemma]{Theorem}
\newtheorem{corollary}[lemma]{Corollary}
\newtheorem*{mainthm}{Theorem}
\theoremstyle{definition}
\newtheorem{remark}[lemma]{Remark}
\newtheorem{definition}[lemma]{Definition}
\DeclareMathOperator{\im}{im}
\DeclareMathOperator{\tr}{tr}
\DeclareMathOperator{\Ric}{Ric}
\DeclareMathOperator{\Ad}{Ad}
\newcommand{\Hom}{\mathrm{Hom}}
\newcommand{\GL}{\mathrm{GL}}
\newcommand{\SL}{\mathrm{SL}}
\newcommand{\SO}{\mathrm{SO}}
\newcommand{\Spin}{\mathrm{Spin}}
\newcommand{\CSpin}{\mathrm{CSpin}}
\newcommand{\Cl}{C\ell}
\newcommand{\ad}{\mathrm{ad}}
\newcommand{\ev}{\mathrm{ev}}
\newcommand{\stab}{\mathfrak{stab}}
\newcommand{\fso}{\mathfrak{so}}
\newcommand{\fg}{\mathfrak{g}}
\newcommand{\fh}{\mathfrak{h}}
\newcommand{\fk}{\mathfrak{k}}
\newcommand{\fp}{\mathfrak{p}}
\newcommand{\fa}{\mathfrak{a}}
\newcommand{\fD}{\mathfrak{D}}
\newcommand{\1}{\mathbb{1}}
\newcommand{\RR}{\mathbb{R}}
\newcommand{\ZZ}{\mathbb{Z}}
\newcommand{\eE}{\mathscr{E}}
\newcommand{\eL}{\mathscr{L}}
\newcommand{\eX}{\mathscr{X}}
\newcommand{\vol}{\operatorname{vol}}
\newcommand{\sk}{\operatorname{Skew}}
\begin{document}

\title{On the algebraic structure of Killing superalgebras}
\author{José Figueroa-O'Farrill}
\author{Andrea Santi}
\address{Maxwell Institute and School of Mathematics, The University
  of Edinburgh, James Clerk Maxwell Building, Peter Guthrie Tait Road,
  Edinburgh EH9 3FD, United Kingdom}
\thanks{EMPG-16-13}
\begin{abstract}
  We study the algebraic structure of the Killing superalgebra of a
  supersymmetric background of $11$-dimensional supergravity and show
  that it is isomorphic to a filtered deformation of a $\ZZ$-graded
  subalgebra of the Poincaré superalgebra.  We are able to map the
  classification problem for highly supersymmetric backgrounds (i.e.,
  those which preserve more than half the supersymmetry) to the
  classification problem of a certain class of filtered deformations
  of graded subalgebras of the Poincaré superalgebra.  We show that
  one can reconstruct a highly supersymmetric background from its
  Killing superalgebra; in so doing, we relate the bosonic field
  equations of $11$-dimensional supergravity to the Jacobi identity of
  the Killing superalgebra and show in this way that preserving more
  than half the supersymmetry implies the bosonic field equations.
\end{abstract}
\maketitle
\tableofcontents

\section{Introduction}
\label{sec:introduction}

Arguably the most interesting open problem in eleven-dimensional
supergravity is the classification of (supersymmetric, bosonic)
backgrounds. This problem has a long pedigree. It started in the
1980s, where it took the form of the classification problem for
Freund--Rubin backgrounds (and generalisations thereof) in the context
of Kaluza--Klein supergravity. The substantial progress made during
this time is fairly well documented in the review \cite{DNP}. One
problem with Freund--Rubin backgrounds from a Kaluza--Klein
perspective is that the spacetime and the compact extra dimensions
have commensurate radii of curvature, but they resurfaced in the 1990s
as near-horizon geometries of branes, which is perhaps their most
popular interpretation today. The advent in the mid-1990s of the
``branes and duality'' paradigm led to a renewed effort in the study
of supersymmetric backgrounds. Many such constructions emerged, but by
the end of the decade there was still no systematic approach to the
classification. Since the definition of a supersymmetric background
entails the existence of Killing spinors, which are parallel with
respect to a connection on the spinor bundle, an obvious approach is
via the study of the holonomy of that connection. A first step in that
direction was taken in \cite{JMWaves}, which studied purely
gravitational supersymmetric backgrounds in terms of the possible
lorentzian holonomy groups of eleven-dimensional manifolds admitting
parallel spinors, but it was not clear how to re-introduce the flux in
that approach. Indeed, since the connection with nonzero flux is not
induced from a connection on the spin bundle, there are no theorems
concerning the possible holonomy groups, except that the generic
(restricted) holonomy group is $\SL(32,\RR)$ \cite{HullHolonomy};
although see, e.g.,
\cite{Duff:2003ec,Papadopoulos:2003pf,Bandos:2003us} for some of the
groups that can appear.

One fares a little better starting not from the generic holonomy, but
from the trivial holonomy.  In \cite{FOPMax} the maximally
supersymmetric backgrounds --- i.e., those with trivial (restricted)
holonomy --- were classified, recovering the known maximally
supersymmetric backgrounds: the Freund--Rubin backgrounds
\cite{FreundRubin,AdS7S4} and the gravitational wave of \cite{KG}, in
addition to the trivial Minkowski background.  Attempts to extend this
classification to sub-maximally supersymmetric backgrounds yielded
some negative results: absence of backgrounds with precisely $n=31$
\cite{NoMPreons,FigGadPreons} and $n=30$ \cite{Gran:2010tj}, but the
methods (based on so-called spinorial geometry) become impractical
already for $n=29$.  In fact, we do not even know the size of the
``supersymmetry gap'': the highest known sub-maximal background is a
pp-wave with $n=26$ \cite{Michelson26}, but nothing is known about
$n=27,28,29$.  Methods of spinorial geometry (also confusingly known
as $G$-structures) have also yielded some information at the opposite
end, with local forms of backgrounds for $n=1$ \cite{GauPak,GauGutPak}
in terms of ingredients (such as, Calabi--Yau 5-folds) which offer
little hope of classification.

In this paper we would like to propose a different approach to the
classification, based on the classification of the Killing
superalgebra of the background.  Indeed, every supersymmetric
supergravity background has an associated Lie superalgebra which is
generated by its Killing spinors.  Its construction is reviewed in
Section~\ref{sec:kill-super-supersymm} below.  Its origin is lost in
the mists of time and probably was already understood, at least in
special cases, in the early days of Kaluza--Klein supergravity.  In
more recent times, it made its appearance in the context of the
AdS/CFT correspondence \cite{AFHS,JMFKilling}, brane solutions
\cite{GMT1,GMT2,PKT}, plane waves \cite{FOPFlux,NewIIB} and
homogeneous backgrounds \cite{ALOKilling}, with the general
construction appearing for the first time in \cite{FMPHom} for
$d{=}11$ and \cite{EHJGMHom} for $d{=}10$ supergravities.  Since then,
a number of other supergravity theories have been treated, such as
$d{=}6$ \cite{Figueroa-O'Farrill:2013aca}, $d{=}10$ conformal in
\cite{deMedeiros:2015zrh} and $d{=}4$ (off-shell, minimal) in
\cite{deMedeiros:2016srz}.

The Killing superalgebra has proved to be a very useful invariant of a
supersymmetric supergravity background.  First of all, it
``categorifies'' the fraction of supersymmetry preserved by the
background.  In addition it behaves well under geometric limits, such
as asymptotic and near-horizon limits, but also plane-wave limits.  It
also underlies the (local) homogeneity theorem \cite{FMPHom, EHJGMHom,
  JMF-HC-Lecs, FigueroaO'Farrill:2012fp, Figueroa-O'Farrill:2013aca}
which states that a supergravity background preserving more than half
of the supersymmetry is (locally) homogeneous, which is one of the few
general structural results known about supersymmetric supergravity
backgrounds.

The purpose of this paper is to show that the Killing superalgebra has
a very precise algebraic structure --- one which had passed unnoticed
until recently --- and to derive some of its consequences.  In
particular, we will show that the Killing superalgebra is a filtered
deformation of a $\ZZ$-graded subalgebra of the Poincaré superalgebra.
Let us explain this statement.

Let $(V,\eta)$ denote the lorentzian vector space on which Minkowski
space is modelled, $\fso(V)$ the Lie algebra of the Lorentz group and
$S$ its spinor representation.  In our conventions the inner product
$\eta$ has signature $(1,10)$, i.e., it is ``mostly minus'', and
$S\cong\mathbb R^{32}$ is an irreducible module of the Clifford
algebra $\Cl(V)\cong 2\RR(32)$.  (There are two such modules up to
isomorphism, and they are equivalent as $\fso(V)$-representations.  We
have chosen the module for which the centre acts nontrivially; that
is, for which the action of the volume element $\vol\in\Cl(V)$ is
$\vol\cdot s=-s$ for all $s\in S$.) We recall that $S$ has an
$\fso(V)$-invariant symplectic structure $\left<-,-\right>$ satisfying
\begin{equation*}
  \left<v\cdot s_1, s_2\right> = - \left<s_1, v \cdot s_2\right>,
\end{equation*}
for all $s_1,s_2 \in S$ and $v \in V$, where $\cdot$ refers to the
Clifford action. 

The \emph{Poincaré superalgebra} $\fp$ has underlying vector space
$\fso(V) \oplus S \oplus V$ and nonzero Lie brackets given by the
following expressions, for $A,B \in \fso(V)$, $v \in V$ and $s \in S$:
\begin{equation}
  \label{eq:PSA}
    [A,B] = AB - BA, \qquad [A,s] = \sigma(A)s, \qquad [A,v] = Av, \qquad
    [s,s] = \kappa(s,s).
\end{equation}
Here $\sigma$ is the spinor representation of $\fso(V)$ and
$\kappa(s,s) \in V$ is the \emph{Dirac current} of $s$,
defined by
\begin{equation} 
\label{eq:DiracCurrent}
  \eta(\kappa(s,s),v) = \left<s, v\cdot s\right>,
\end{equation}
for all $v\in V$.  One important property of the Dirac current
$\kappa: \odot^2 S \to V$ is that its restriction to a subspace
$\odot^2 S'$ is still surjective on $V$, provided that the vector
subspace $S'\subset S$ has dimension $\dim S' > 16$.  We shall refer
to this linear algebraic fact as ``local homogeneity'', due to the
crucial rôle it plays in the proof of the local homogeneity theorem of
\cite{FigueroaO'Farrill:2012fp}.

If we grade $\fp$ by declaring $\fso(V)$, $S$ and $V$ to have degrees
$0$, $-1$ and $-2$, respectively, then the above Lie brackets turn
$\fp$ into a ($\ZZ$-)graded Lie superalgebra
\begin{equation*}
  \fp = \fp_{0} \oplus \fp_{-1} \oplus \fp_{-2}, \qquad \fp_0=\fso(V),
  \qquad \fp_{-1}=S,\qquad\fp_{-2}=V.
\end{equation*}
The $\ZZ_2$ grading is compatible with the $\ZZ$ grading, in that
$\fp_{\bar 0} = \fp_0 \oplus \fp_{-2}$ and $\fp_{\bar 1} = \fp_{-1}$;
that is, the parity is the reduction modulo $2$ of the $\ZZ$ degree.

Let now $\fa < \fp$ be a graded subalgebra
that is, $\fa = \fa_0 \oplus \fa_{-1} \oplus \fa_{-2}$, with $\fa_i
\subset \fp_i$.  Recall that a Lie superalgebra $\fg$ is said to be
filtered, if it is admits a vector space filtration
\begin{equation*}
  \fg^\bullet~:\qquad \cdots \supset \fg^{-2} \supset \fg^{-1} \supset
  \fg^0 \supset \cdots,
\end{equation*}
with $\cup_i \fg^i = \fg$ and $\cap_i \fg^i = 0$, which is compatible
with the Lie bracket: that is, $[\fg^i, \fg^j] \subset \fg^{i+j}$.
Associated canonically to every filtered Lie superalgebra
$\fg^\bullet$ there is a graded Lie superalgebra
$\fg_\bullet = \bigoplus_i \fg_i$, where $\fg_i = \fg^i/\fg^{i+1}$.
It follows from the fact that $\fg^\bullet$ is filtered that
$[\fg_i,\fg_j] \subset \fg_{i+j}$, hence $\fg_\bullet$ is graded.  We
say that a Lie superalgebra $\fg$ is a \emph{filtered deformation} of
$\fa < \fp$, if it is filtered and its associated graded superalgebra
is isomorphic (as a graded Lie superalgebra) to $\fa$.  If we do not
wish to mention the subalgebra $\fa$ explicitly, we simply say that
$\fg$ is a \emph{filtered subdeformation} of $\fp$. The first main
result of this paper is the following, which is part of
Theorem~\ref{thm:KSAisFSD}.   That theorem is in turn part of the more
general Theorem~\ref{thm:firstmain} in
Section~\ref{sec:admfilsubdefII}.

\begin{mainthm}
  The Killing superalgebra $\fk=\fk_{\bar 0}\oplus\fk_{\bar 1}$ of an
  $11$-dimensional supergravity background $(M,g,F)$ is a filtered
  subdeformation of the Poincaré superalgebra $\fp$.
\end{mainthm}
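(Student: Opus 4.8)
The plan is to build the filtration by evaluation at a point. Fix $o\in M$ and an isometry $(T_oM,g_o)\xrightarrow{\ \sim\ }(V,\eta)$ (one exists since the signatures agree); this identifies $\fso(T_oM)$ with $\fso(V)$ and the spinor fibre at $o$ with $S$, compatibly with the Clifford action, $\langle-,-\rangle$ and hence $\kappa$. With these identifications fixed, put
\begin{equation*}
  \fk^{1}=0,\qquad
  \fk^{0}=\{X\in\fk_{\bar 0}\mid X_o=0\},\qquad
  \fk^{-1}=\fk^{0}\oplus\fk_{\bar 1},\qquad
  \fk^{-j}=\fk\quad(j\ge 2),
\end{equation*}
so that $\fk^{0}$ is the isotropy subalgebra of the even part at $o$. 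The filtration is finite, hence automatically exhausting and Hausdorff, and each step is a $\ZZ_2$-graded subspace, so $\gr\fk$ inherits a parity which is the reduction mod $2$ of its $\ZZ$-degree. Compatibility with the bracket, $[\fk^{i},\fk^{j}]\subseteq\fk^{i+j}$, reduces — since $\fk^{-2}=\fk$ — to $[\fk^{0},\fk^{0}]\subseteq\fk^{0}$, which holds because the Lie bracket of two vector fields vanishing at $o$ vanishes at $o$, together with $[\fk^{0},\fk^{-1}]\subseteq\fk^{-1}$, which follows from that and $[\fk^{0},\fk_{\bar 1}]\subseteq\fk_{\bar 1}$.

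Next I identify $\gr\fk$. By construction $\gr_{0}\fk=\fk^{0}$, $\gr_{-1}\fk=\fk_{\bar 1}$ and $\gr_{-2}\fk=\fk_{\bar 0}/\fk^{0}$, and the three maps
\begin{equation*}
  \beta\colon\fk^{0}\to\fso(V),\ X\mapsto -(\nabla X)_o;\qquad
  \alpha\colon\fk_{\bar 1}\to S,\ \varepsilon\mapsto\varepsilon_o;\qquad
  \gamma\colon\fk_{\bar 0}/\fk^{0}\to V,\ [X]\mapsto X_o
\end{equation*}
are well defined and injective: for $\gamma$ this is the definition of $\fk^{0}$; for $\beta$ one uses that $\nabla X$ is pointwise skew-symmetric for Killing $X$ (so the target is indeed $\fso(V)$) and that a Killing field vanishing to first order at a point of the connected $M$ vanishes identically — $\beta$ being, up to the customary sign, the linear isotropy representation at $o$; for $\alpha$ one uses that an odd element of $\fk$ is parallel for the supercovariant connection $\mathcal D$ and so is determined by its value at $o$. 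Setting $\fa=\fa_0\oplus\fa_{-1}\oplus\fa_{-2}$ with $\fa_0=\im\beta$, $\fa_{-1}=\im\alpha$, $\fa_{-2}=\im\gamma$, the triple $(\beta,\alpha,\gamma)$ is then a linear isomorphism $\gr\fk\xrightarrow{\ \sim\ }\fa$.

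It remains to check that $(\beta,\alpha,\gamma)$ carries the brackets induced on $\gr\fk$ to the restriction to $\fa$ of the Poincaré brackets \eqref{eq:PSA}; granting this, $\fa$ is a graded subalgebra of $\fp$ (being the image of the Lie superalgebra $\gr\fk$ under a bracket-carrying map), and $\fk$ is by definition a filtered deformation of $\fa<\fp$, hence a filtered subdeformation of $\fp$. The brackets landing outside degrees $0,-1,-2$ vanish on both sides, leaving four identities, each obtained by evaluating the relevant $\fk$-bracket at $o$ using the description of the Killing superalgebra recalled in Section~\ref{sec:kill-super-supersymm}: for $\fk^{0}\otimes\fk_{\bar 0}$ one has $[X,Y]_o=(\nabla_X Y-\nabla_Y X)_o=-(\nabla X)_o(Y_o)$, matching $[A,v]=Av$; for $\fk^{0}\otimes\fk^{0}$ the analogous computation gives $(\nabla[X,Y])_o=-[(\nabla X)_o,(\nabla Y)_o]$, whence $\beta([X,Y])=[\beta(X),\beta(Y)]$, matching $[A,B]=AB-BA$; for $\fk^{0}\otimes\fk_{\bar 1}$, evaluating the spinorial (Kosmann) Lie derivative $\mathcal L_X\varepsilon$ at the zero $o$ of $X$, where the $\nabla_X\varepsilon$ term drops, leaves $\sigma\bigl(-(\nabla X)_o\bigr)\varepsilon_o$, matching $[A,s]=\sigma(A)s$; and for $\fk_{\bar 1}\otimes\fk_{\bar 1}$ the odd--odd bracket of $\fk$ is the Killing vector whose value at each point is the Dirac current of the spinors there, so $\gamma\bigl([\varepsilon,\varepsilon]\bigr)=\kappa(\varepsilon_o,\varepsilon_o)$, matching $[s,s]=\kappa(s,s)$. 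The bulk of the substance is already absorbed into the construction of $\fk$; given that, the only real care needed here is aligning the sign and normalisation conventions for $\sigma$, the spinorial Lie derivative and the Dirac current — the sign in $\beta$ being forced by requiring $X\mapsto\sigma(\beta(X))$ to reproduce $\mathcal L_X$ at $o$ — while the geometric facts invoked ($\nabla X$ pointwise skew for Killing $X$, a Killing field determined by its $1$-jet at a point, a Killing spinor by its value at a point) are all classical. Finally, the base point plays no role in the statement itself: any admissible choice yields a filtration exhibiting $\fk$ as a filtered deformation of some graded subalgebra of $\fp$, which is all that is claimed.
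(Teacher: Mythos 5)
Your proof is correct and follows essentially the same route as the paper's (Section~\ref{sec:admfilsubdef}): both fix a basepoint $o$, take the degree-zero filtration piece to be the isotropy subalgebra at $o$, and use the evaluation maps together with the linear isotropy representation to identify $\gr\fk$ with a graded subalgebra $\fa=\fh\oplus S'\oplus V'$ of $\fp$. The only cosmetic difference is that you build the filtration intrinsically on $\fk$ and verify the Poincaré brackets on $\gr\fk$ directly, whereas the paper first chooses a splitting $v\mapsto(v,X_v)$ to realise $\fk$ as a graded subspace of $\fp$, inherits the filtration from $\fp$, and then reads off that the correction terms $\alpha,\beta,\gamma,\delta,\rho$ in~\eqref{eq:KSAina} all have positive filtration degree.
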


Although by ``background'' one typically means a solution of the
(bosonic) field equations, the above result actually only uses the
form of the Killing spinor equation and the fact that
$F \in \Omega^4(M)$ is closed.  A natural question of long standing is
whether some amount of supersymmetry implies the bosonic field
equations.  It is known to be the case for maximal supersymmetry: the
bosonic field equations are equivalent to the vanishing of the
Clifford trace of the gravitino connection, whereas maximal
supersymmetry is equivalent to flatness.  It is also known to fail for
$\leq\tfrac12$-BPS backgrounds, but it has long been suspected that
there is some critical fraction of supersymmetry which forces the
equations of motion.  We give a positive answer to this question in
this paper, where in Section~\ref{sec:field-eqnsII} we prove the
following theorem (see Theorem~\ref{thm:mainII}).

\begin{mainthm}
  Let $(M,g,F)$ be an $11$-dimensional lorentzian spin manifold endowed
  with a closed $4$-form $F\in\Omega^4(M)$.  If the real vector space
  \begin{equation*}
    \fk_{\bar 1} = \left\{ \varepsilon \in \Gamma(\$) ~\middle |~
      \nabla_X \varepsilon = \tfrac1{24} (X \cdot F - 3 F \cdot X)
      \cdot \varepsilon\right\}
  \end{equation*}
  of Killing spinors has dimension $\dim\fk_{\bar 1} > 16$, then
  $(M,g,F)$ satisfies the bosonic field equations of $11$-dimensional
  supergravity.
\end{mainthm}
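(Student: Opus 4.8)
The plan is to combine the local homogeneity lemma (which reduces the field equations to a single point), the integrability of the Killing spinor equation (which, after a Clifford trace, produces the bosonic field equations acting on Killing spinors), and a Clifford-algebraic rigidity statement (which upgrades ``annihilates more than half of the spinor space'' to ``vanishes identically''). First I would introduce the supercovariant connection $\mathcal D$ on the spinor bundle $\$$, defined by $\mathcal D_X\varepsilon:=\nabla_X\varepsilon-\tfrac1{24}(X\cdot F-3F\cdot X)\cdot\varepsilon$, so that $\fk_{\bar1}=\ker\mathcal D$. Since a $\mathcal D$-parallel section is determined by its value at any point, the evaluation $\varepsilon\mapsto\varepsilon_p$ is injective, so $W_p:=\{\varepsilon_p\mid\varepsilon\in\fk_{\bar1}\}\subseteq\$_p$ is a subspace with $\dim W_p=\dim\fk_{\bar1}>16=\tfrac12\dim S$ for every $p\in M$. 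By the local homogeneity property of the Dirac current, $\kappa$ restricted to $\odot^2 W_p$ is already onto $T_pM$, so the even subalgebra $[\fk_{\bar1},\fk_{\bar1}]\subseteq\fk_{\bar0}$ acts locally transitively on $M$; the underlying Killing vector fields here are Dirac currents of Killing spinors, hence preserve both $g$ and $F$, and therefore preserve the two tensors making up the bosonic field equations, namely the symmetric $2$-tensor $\mathsf E$ expressing the Einstein equation and the Maxwell $4$-form $\mathsf M:=d\star F+\tfrac12 F\wedge F$. By transitivity it therefore suffices to prove $\mathsf E_p=0$ and $\mathsf M_p=0$ at one point $p$.

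Next I would use that the curvature $\mathcal R$ of $\mathcal D$ annihilates all Killing spinors: $\mathcal R(X,Y)\varepsilon=0$ for every $\varepsilon\in\fk_{\bar1}$ and $X,Y\in\Gamma(TM)$. This integrability condition is, modulo lower-order terms, a component of the super-Jacobi identity of $\fk$ --- the precise sense in which the bosonic field equations are tied to the Lie structure of the Killing superalgebra. Taking the Clifford trace over the first slot, the standard computation in $11$-dimensional supergravity shows that $\sum_a e^a\cdot\mathcal R(e_a,Y)$ (sum over an orthonormal frame) is a nonzero multiple of an operator $\mathcal E(Y):=\gamma(\mathsf E(Y))+\mathcal M(Y)\in\End(\$_p)$ modulo terms built from $dF$; here $\gamma(\mathsf E(Y))$ is Clifford multiplication by the contraction of $\mathsf E$ with $Y$, and $\mathcal M(Y)$ is an operator of higher Clifford degree assembled from $\mathsf M$. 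Since $dF=0$ by hypothesis the $dF$-terms drop out, and since $\mathcal R(X,Y)$ kills all of $W_p$, so does $\mathcal E(Y)$ for every $Y\in T_pM$.

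It remains to deduce $\mathsf E_p=0$ and $\mathsf M_p=0$ from the vanishing of $\mathcal E(Y)$ on the subspace $W_p$ of dimension $>16$. Here one uses that $\mathcal E(Y)$ is far from an arbitrary endomorphism of $\$_p$. For the Einstein part: Clifford multiplication $\gamma(v)\colon s\mapsto v\cdot s$ by a vector $v\in T_pM$ is invertible when $v$ is non-null and has kernel of dimension exactly $16$ when $v$ is null and nonzero, so $\gamma(v)$ cannot vanish on a subspace of dimension $>16$ unless $v=0$. Isolating the Clifford-degree-one part of $\mathcal E(Y)|_{W_p}=0$ therefore forces $\mathsf E(Y)=0$ for all $Y$, i.e.\ $\mathsf E_p=0$; feeding this back one is left with $\mathcal M(Y)|_{W_p}=0$, and the analogous --- but more delicate --- Clifford-rank estimate for $\mathcal M(Y)$ then gives $\mathsf M_p=0$. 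By the homogeneity reduction of the first step, both equations hold on all of $M$, which is the assertion.

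The main obstacle is this last step: disentangling the Einstein and Maxwell contributions inside the single identity $\mathcal E(Y)|_{W_p}=0$, and proving the Clifford-rank estimate for the higher-degree operator $\mathcal M(Y)$ --- equivalently, identifying precisely which elements of $\Cl(T_pM)$ of the relevant shape can annihilate a subspace of $S$ of more than half its dimension. It is exactly here that the hypothesis $\dim\fk_{\bar1}>16$ is used in an essential way; no such rigidity is available when $\dim\fk_{\bar1}\le16$, in agreement with the known failure of the implication for backgrounds preserving at most half the supersymmetry.
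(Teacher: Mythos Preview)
Your overall architecture is sound and close to the paper's, but the step you flag as the ``main obstacle'' is a genuine gap, and the way you propose to handle it does not work as written.  The problem is the sentence ``Isolating the Clifford-degree-one part of $\mathcal E(Y)|_{W_p}=0$ therefore forces $\mathsf E(Y)=0$''.  You cannot isolate Clifford-degree components of an endomorphism acting on a subspace $W_p\subsetneq S$ which is not $\fso(V)$-invariant: the degree-$1$ and higher-degree pieces of $\mathcal E(Y)$ can conspire on $W_p$ without either vanishing separately.  The paper makes exactly this point (the embedding $\odot^2 S'\subset\Lambda^1 V\oplus\Lambda^2 V\oplus\Lambda^5 V$ is diagonal in general, so the three form-bilinear components of the identity need not vanish individually).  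So neither your Einstein step nor the subsequent ``Clifford-rank estimate for $\mathcal M(Y)$'' is available in the form you suggest.

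The paper resolves this by reversing the order and changing the method for Maxwell.  It does \emph{not} extract Maxwell from the Clifford-traced curvature at all; instead it uses the Gauntlett--Pakis differential identities on the Killing-spinor bilinears to show that every Dirac current $K=\kappa(\varepsilon,\varepsilon)$ satisfies $\imath_K\bigl(d\star F-\tfrac12 F\wedge F\bigr)=0$, and then invokes surjectivity of $\kappa|_{\odot^2 S'}$ (high supersymmetry) to conclude $d\star F=\tfrac12 F\wedge F$.  Only then does it return to the traced integrability/Jacobi identity: the $5$-form-bilinear piece is proportional to $dF$ and hence vanishes by hypothesis, the $2$-form-bilinear piece is proportional to the Maxwell expression and hence vanishes by what was just proved, and what survives is precisely the Einstein tensor paired with the Dirac current, which again by surjectivity gives the Einstein equation.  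If you want to salvage your route, the missing ingredient is an independent derivation of Maxwell (the bilinear identities do this cleanly); the hoped-for rank estimate on $\mathcal M(Y)$ is neither needed nor, as far as the paper shows, available.
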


The above condition on the dimension of the space of Killing spinors is
crucial to many of our results and we have tentatively given it the
name of ``high supersymmetry''.  We will therefore refer to ``highly
supersymmetric backgrounds'' when talking about backgrounds preserving
more than half of the supersymmetry.  Similarly, we will say that a
filtered subdeformation $\fg=\fg_{\bar 0} \oplus \fg_{\bar 1}$ is ``highly
supersymmetric'' if $\dim\fg_{\bar 1} > 16$.

The two theorems quoted above suggest an approach to the
classification of highly supersymmetric backgrounds via the
classification of their Killing superalgebras, which as mentioned
above are (certain) filtered subdeformations of the Poincaré
superalgebra.  A first step in such a research programme was taken in
\cite{Figueroa-O'Farrill:2015efc,Figueroa-O'Farrill:2015utu}, where we
recovered the classification in \cite{FOPMax} of maximally
supersymmetric supergravity backgrounds. We also refer to the
introduction in \cite{Figueroa-O'Farrill:2015efc} and to
\cite{MR3056953,MR2798219,MR3218266,MR3255456} for more details on the
underlying geometric interpretation of the Killing superalgebra in the
context of ``nonholonomic'' $G$-structures on supermanifolds.

Of course, there is no reason to believe that \emph{any} filtered
subdeformation of the Poincaré superalgebra is the Killing
superalgebra of a supersymmetric background and one of the aims of
this paper is to characterise algebraically those filtered
subdeformations which are Killing superalgebras of highly
supersymmetric backgrounds.  It would be interesting to characterise
the filtered deformations which are Killing superalgebras of
supersymmetric backgrounds, regardless the amount of supersymmetry
preserved, but we don't do that in this paper.

Thus we will narrow down the class of filtered subdeformations
$\fg = \fg_{\bar 0} \oplus \fg_{\bar 1}$ to those which are highly
supersymmetric ($\dim\fg_{\bar 1} > 16$) and which satisfy additional
criteria set out in Definition~\ref{def:realizable}.  These criteria
essentially amount to demanding that $\fg$ should be constructed out
of a closed $4$-form in a way consistent with supergravity.  We will
say that those highly supersymmetric filtered subdeformations are
(geometrically) realizable.

The Killing superalgebra of a highly supersymmetric background is
realizable (see Theorem \ref{thm:KSAisFSD}).  Conversely, if a highly
supersymmetric filtered subdeformation $\fg$ is realizable, it is then
possible to reconstruct a background whose Killing superalgebra
contains $\fg$.  This result is contained in a partial converse of the
first quoted theorem above, which also forms part of
Theorem~\ref{thm:firstmain}.

\begin{mainthm}
  Any realizable highly supersymmetric filtered subdeformation $\fg$
  of the Poincaré superalgebra $\fp$ is a subalgebra of the Killing
  superalgebra of a highly supersymmetric $11$-dimensional
  supergravity background.
\end{mainthm}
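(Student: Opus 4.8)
The plan is to build a homogeneous background directly out of $\fg$, exploiting that high supersymmetry already rigidifies the situation. I would start by reducing the data: writing $\fa=\gr\fg$, the filtration is $\fg=\fg^{-2}\supseteq\fg^{-1}\supseteq\fg^0\supseteq\fg^1=0$ with $\fg^0\cong\fa_0=:\fh\subset\fso(V)$ a subalgebra, $\fg^{-1}/\fg^0\cong\fa_{-1}=:S'\subset S$, and $\fg/\fg^{-1}\cong\fa_{-2}$. Since $\dim S'>16$, local homogeneity gives $\kappa(\odot^2 S')=V$, and as $\fa$ is a graded subalgebra $[\fa_{-1},\fa_{-1}]\subseteq\fa_{-2}$; hence $\fa_{-2}=V$, in particular $\dim\fa_{-2}=11$. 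Thus $\fg_{\bar0}$ is an ordinary Lie algebra containing $\fh$ as a subalgebra with $\fg_{\bar0}/\fh\cong V$ as $\fh$-modules (the restriction of the defining representation). I would then take $G$ a Lie group integrating $\fg_{\bar0}$ and $H<G$ the connected subgroup integrating $\fh$; passing to a cover of $G$ so that $\fh\hookrightarrow\fso(V)$ exponentiates to $H\to\Spin(V)$, and to a germ of homogeneous space (which costs nothing, Killing superalgebras being local invariants) so that $H$ is closed, I obtain an $11$-dimensional manifold $M:=G/H$ with basepoint $o$. The $\fh$-invariant lorentzian form $\eta$ on $T_oM\cong\fg_{\bar0}/\fh\cong V$ yields a $G$-invariant lorentzian metric $g$, and the $\fh$-invariant $4$-form $\varphi$ on $V$ provided by realizability yields a $G$-invariant $F\in\Omega^4(M)$; since $F$ is $G$-invariant, $dF=0$ reduces to an algebraic identity in $\varphi$ and the structure constants of $\fg_{\bar0}$, which I expect to be one of the Jacobi identities of $\fg$ (in the spirit of the Bianchi-identity-as-Jacobi-identity theme of the Introduction).

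Next I would produce the Killing spinors. The spinor bundle of $(M,g)$ is $\$=G\times_H S$, and realizability should guarantee that the bracket $\fg_{\bar0}\otimes\fg_{\bar1}\to\fg_{\bar1}$ makes $S'$ into a $\fg_{\bar0}$-module built from $\varphi$ in exactly the way the ``gravitino'' operator $\Omega(X):=\tfrac1{24}(X\cdot F-3F\cdot X)\cdot$ and the Levi-Civita connection combine in the Killing spinor equation, while restricting on $\fh$ to $\sigma|_\fh$. Exponentiating gives a representation $\rho\colon G\to\GL(S')$ with $\rho|_H=\sigma|_H$, so that for each $s\in\fg_{\bar1}$ the function $\varepsilon_s\colon G\to S$, $\varepsilon_s(g):=\rho(g)^{-1}s$, descends to a section of $\$$ with $\varepsilon_s(o)=s$; in particular $s\mapsto\varepsilon_s$ is injective. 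The crucial claim would be that $\varepsilon_s$ is a Killing spinor, $\nabla_X\varepsilon_s=\Omega(X)\cdot\varepsilon_s$. Since the operator $D:=\nabla-\Omega$ on $\$$ is $G$-invariant and $s\mapsto\varepsilon_s$ is $G$-equivariant (one checks $g\cdot\varepsilon_s=\varepsilon_{\rho(g)s}$), it would suffice to verify $D\varepsilon_s=0$ at $o$ for all $s$, where the Nomizu formula for the Levi-Civita connection of the reductive space $(G/H,g)$ turns the equation into an identity among $\varphi$, the brackets $[\fm,\fm]$ of the reductive complement $\fm\cong V$, and the $\fg_{\bar0}$-action on $S'$. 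This algebraic verification is the step I expect to be the main obstacle: it should consume essentially all of the Jacobi identities of $\fg$ together with the precise content of Definition~\ref{def:realizable}, and it is delicate because one must track which connection on $\$$ is in play --- the bracket $[\fg_{-2},\fg_{-1}]$ encodes the Killing spinor operator only \emph{relative to} the Levi-Civita Nomizu map --- as well as the Clifford-theoretic constants; this is presumably exactly why the realizability criteria are tailored the way they are.

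Finally I would assemble the pieces. The background $(M,g,F)$ has $F$ closed and $\fk_{\bar1}\supseteq\{\varepsilon_s:s\in\fg_{\bar1}\}\cong\fg_{\bar1}$ of dimension $>16$, so it is highly supersymmetric; hence, by the second theorem quoted above (Theorem~\ref{thm:mainII}), it satisfies the bosonic field equations of $11$-dimensional supergravity and is a genuine highly supersymmetric background. It then remains to check that $\fg_{\bar0}\to\fk_{\bar0}$ (a Lie algebra element to its fundamental vector field, which is Killing and annihilates $F$; injective since $\fh\subset\fso(V)$ acts faithfully on $V$, so the $G$-action on $G/H$ is almost-effective) and $\fg_{\bar1}\to\fk_{\bar1}$, $s\mapsto\varepsilon_s$, together form a homomorphism of Lie superalgebras: the $\fg_{\bar0}\otimes\fg_{\bar0}$-bracket matches because $G\to\mathrm{Diff}(M)$ is a homomorphism; $[X,\varepsilon_s]_\fk=\mathbb{L}_X\varepsilon_s=\varepsilon_{[X,s]_\fg}$ by $G$-equivariance; and $[\varepsilon_{s_1},\varepsilon_{s_2}]_\fk$ is the Killing vector with value $\kappa(s_1,s_2)$ and $\nabla$-part (at $o$) the $\fh$-component of $[s_1,s_2]_\fg$, which agrees with $[s_1,s_2]_\fg$ precisely because realizability forces the latter into the supergravity form. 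This would exhibit $\fg$ as a subalgebra of the Killing superalgebra of $(M,g,F)$. (Conceptually the construction is the reduction to the body of the super-coset $\mathcal G/H$ of a super Lie group $\mathcal G$ with $\mathrm{Lie}(\mathcal G)=\fg$, and fits the nonholonomic-$G$-structures-on-supermanifolds picture mentioned in the Introduction.)
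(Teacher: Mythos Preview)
Your plan is essentially the paper's own construction: form the homogeneous space $M=G_{\bar0}/H$, transport $\eta$ and the admissible $\varphi$ to $G_{\bar0}$-invariant $g$ and $F$, produce Killing spinors from $\fg_{\bar1}$ via $\varepsilon^{(s)}(g)=\Ad_{g^{-1}}s$, verify the Killing spinor equation through the Nomizu map, and then invoke Theorem~\ref{thm:mainII} for the field equations.  Three points, however, need correction.

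First, you must not assume $G_{\bar0}/H$ is reductive.  The bracket $[A,v]=Av+\delta(A,v)$ with $\delta(A,v)=[A,X_v]-X_{Av}\in\fh$ has in general a nonzero $\fh$-component, so no $\fh$-stable complement $\fm$ exists and the reductive Nomizu formula you invoke is unavailable.  The paper instead uses the general Nomizu map $L:\fg_{\bar0}\to\fso(V)$ for possibly non-reductive homogeneous spaces and identifies it explicitly as $L(A)=\ad A$ on $\fh$ and $L(v)=-X_v$ on $V$; with this the Killing spinor check becomes the one-line computation $\nabla_\xi\varepsilon^{(s)}=\pi_*\bigl(-\ad(\widetilde\xi)+\sigma(\widetilde L(\widetilde\xi))\bigr)\varepsilon^{(s)}=\beta^\varphi(\xi,\varepsilon^{(s)})$, using only the bracket $[v,s]=\beta^\varphi(v,s)+\sigma(X_v)s$ and no further Jacobi identities.

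Second, $dF=0$ is not a Jacobi identity of $\fg$: it is precisely condition~(ii) of Definition~\ref{def:realizable}, the closedness requirement~\eqref{eq:Adm} built into the admissibility of $\varphi$.

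Third, the fundamental-vector-field map is a Lie algebra \emph{anti}-homomorphism, and the sign propagates to the odd brackets; the paper absorbs this by taking $\fg$ to be the opposite superalgebra at the outset and obtaining an anti-embedding $\Phi:\fg\to\fk$.
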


We make no claims about the existence of a similar result when we drop
the ``highly supersymmetric'' condition: our reconstruction theorem
requires (local) homogeneity, which only high supersymmetry
guarantees.

As a consequence of these results we will be able to map the
classification problem of highly supersymmetric backgrounds of
$11$-dimensional supergravity to the classification problem of
(maximal) realizable highly supersymmetric filtered subdeformations of
the Poincaré superalgebra.  A refined version of this approach, where
we restrict to the classification of Killing ideals (see below),
corresponding to classifying realizable highly supersymmetric filtered
subdeformations which are odd-generated, seems slightly more
tractable.

This paper is organised as follows.  In Section~\ref{sec:filt-def} we
summarise the basic notions and results about filtered deformations of
Lie superalgebras and in particular of the Poincaré superalgebra.  We
also define the notion of a (highly supersymmetric) realizable
filtered subdeformation, since those are the ones which can correspond
to Killing superalgebras of highly supersymmetric $11$-dimensional
supergravity backgrounds.  In Section~\ref{sec:kill-super-supersymm}
we review the geometric construction of the Killing superalgebra and
in Section~\ref{sec:admfilsubdef} we prove that the Killing
superalgebra is a filtered subdeformation of $\fp$.  In
Section~\ref{sec:admfilsubdefII}, we prove our first main result:
Theorem~\ref{thm:firstmain}.  In Section~\ref{sec:class-probl-kill} we
consider the Jacobi identity of the Killing superalgebra and define
the classification problem for highly supersymmetric Killing
superalgebras as the classification of realizable highly supersymmetric
filtered subdeformations.  We then describe the latter in
terms of simpler objects.  In Section~\ref{sec:field-eqns} we relate
the Jacobi identity to the supergravity field equations. We first
recall some useful algebraic and differential identities from
\cite{GauPak,GauGutPak}, and then show that high supersymmetry implies
the field equations (see Theorem~\ref{thm:mainII}). We conclude with
some observations.

\section{Preliminaries on filtered deformations}
\label{sec:filt-def}

Let $\fp$ be the Poincaré superalgebra.  In this section we discuss
filtered deformations of its $\ZZ$-graded subalgebras.

\subsection{Basic definitions and results}

Let $\fa=\fa_0\oplus\fa_{-1}\oplus\fa_{-2}$ be a $\mathbb Z$-graded
subalgebra of $\fp$, where $\fa_{-2}=V'\subset V$,
$\fa_{-1}=S'\subset S$ and $\fa_0=\fh$ is a Lie subalgebra of
$\fso(V)$. We denote the negatively graded part of $\fa$ by
$\fa_-=\fa_{-1}\oplus\fa_{-2}$; in particular
$\fp_-= \fp_{-1} \oplus \fp_{-2}$, $\fp_{-2}=V$ and $\fp_{-1}=S$, is
the usual (2-step nilpotent) supertranslation ideal of the Poincaré
superalgebra.

\begin{definition}
  The subalgebra $\fa$ of $\fp$ is called \emph{highly supersymmetric}
  if $\dim S'>16$.
\end{definition}

We recall that a $\mathbb Z$-graded Lie superalgebra $\fa=\oplus\fa_p$
with negatively graded part $\fa_-=\oplus_{p<0}\fa_p$ is called
\emph{fundamental} if $\fa_-$ is generated by $\fa_{-1}$ and
\emph{transitive} if for any $x\in\fa_p$ with $p\geq 0$ the condition
$[x,\fa_-]=0$ implies $x=0$. It is not hard to exhibit graded
subalgebras $\fa$ of $\fp$ for which $S'$ has dimension $16$ and $V'$
is a proper subspace of $V$.  On the other hand, we have the following

\begin{lemma}
  \label{lem:fil1}
  Let $\fa$ be a highly supersymmetric graded subalgebra of
  $\fp$. Then $\fa_{-2}=V$ and $\fa$ is fundamental and transitive.
\end{lemma}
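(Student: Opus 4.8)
The plan is to exploit the "local homogeneity" property of the Dirac current stated in the introduction: since $\dim S' > 16$, the restriction $\kappa|_{\odot^2 S'}\colon \odot^2 S' \to V$ is already surjective. First I would observe that $[S', S'] = \kappa(S' \odot S') = V$ as a subspace of $\fp_{-2} = V$. Since $\fa$ is a subalgebra, $[S',S'] \subset \fa_{-2}$, and hence $\fa_{-2} = V$; combined with $\fa_{-2} \subset \fp_{-2} = V$ this gives equality. This is the one substantive input, and it is handed to us; everything else is formal.

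Given $\fa_{-2} = V$, fundamentality is immediate: $\fa_- = \fa_{-1} \oplus \fa_{-2} = S' \oplus V$, and we have just shown $V = [S', S'] = [\fa_{-1}, \fa_{-1}]$, so $\fa_-$ is generated by $\fa_{-1}$ as required. For transitivity, since $\fa$ lives in degrees $0, -1, -2$, the only nonnegative degree is $p = 0$, so I must show that if $A \in \fa_0 = \fh \subset \fso(V)$ satisfies $[A, \fa_-] = 0$ then $A = 0$. From the Poincaré brackets \eqref{eq:PSA}, $[A, v] = Av$ for $v \in V = \fa_{-2}$, so $[A, \fa_{-2}] = 0$ forces $Av = 0$ for all $v \in V$, i.e. $A = 0$ as an element of $\fso(V)$. (One does not even need the condition on $\fa_{-1}$.) Hence $\fa$ is transitive.

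The argument has no real obstacle: the only place where "high supersymmetry" enters is the surjectivity of the Dirac current on $\odot^2 S'$, and the rest is a direct reading of the bracket relations \eqref{eq:PSA}. If one wanted to be careful, the single point to verify is that $\kappa(S' \odot S')$ genuinely equals all of $V$ and not merely spans a subspace — but this is exactly the stated local homogeneity fact for $\dim S' > 16$, so it may be invoked directly.
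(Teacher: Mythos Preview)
Your argument is correct and matches the paper's own proof essentially line for line: invoke the local homogeneity fact to get $\kappa(\odot^2 S') = V$, hence $\fa_{-2}=V$ and fundamentality, and then observe that transitivity holds because $V$ is a faithful $\fso(V)$-module (equivalently, $[A,v]=Av$ forces $A=0$). There is nothing to add.
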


\begin{proof}
  The algebraic fact underlying the local homogeneity theorem in
  \cite{FigueroaO'Farrill:2012fp} says precisely that the image of $\kappa$
  restricted to $S'\otimes S'$ equals $V$ if $\dim S'>16$ . It follows
  that $\fa_{-2}=V$ and $\fa$ is fundamental.  The transitivity of
  $\fa$ follows from the fact that $V$ is a faithful representation of
  any Lie subalgebra of $\fso(V)$.
\end{proof}

To proceed further, we first need to recall the definition of an
appropriate complex associated with $\fa$. It is called the
(generalised) Spencer complex and it is a refinement (by degree) of
the usual Chevalley--Eilenberg complex of a Lie superalgebra.

\begin{table}[h!]
  \centering
  \caption{Even $q$-cochains of small degree}
  \label{tab:even-cochains-small}
  \setlength\extrarowheight{3pt}
  \begin{tabular}{c|*{5}{>{$}c<{$}}}
    \multicolumn{1}{c|}{} & \multicolumn{5}{c}{$q$} \\
    deg & 0 & 1 & 2 & 3 & 4 \\\hline
    0 & \fh & \begin{tabular}{@{}>{$}c<{$}@{}} S' \to S'\\ V' \to
                    V'\end{tabular} & \odot^2 S' \to V' & & \\\hline
    2 & & V' \to \fh & \begin{tabular}{@{}>{$}c<{$}@{}}
                            \Lambda^2 V' \to V'\\ V' \otimes S' \to S' \\
                            \odot^2 S' \to \fh \end{tabular}
         & \begin{tabular}{@{}>{$}c<{$}@{}} \odot^3 S' \to S'\\
             \odot^2 S' \otimes V' \to V'\end{tabular} & \odot^4 S' \to V'
        \\\hline
    4 & & & \Lambda^2 V' \to \fh & \begin{tabular}{@{}>{$}c<{$}@{}}
                                        \odot^2 S' \otimes V' \to
                                        \fh \\ \Lambda^2 V' \otimes
                                        S' \to S' \\ \Lambda^3 V' \to
                                        V' \end{tabular} & \begin{tabular}{@{}>{$}c<{$}@{}}
                                        \odot^4 S' \to \fh \\
                                        \odot^3 S' \otimes V' \to S' \end{tabular} \\
  \end{tabular}
\end{table}

The cochains of the Spencer complex of $\fa$ are linear maps
$\Lambda^q \fa_-\to \fa$ or, equivalently, elements of
$\fa \otimes \Lambda^q \fa_-^*$, where $\Lambda^ \bullet$ is meant
here in the super sense. We extend the degree in $\fa$ to such
cochains by declaring that $\fa_p^*$ has degree $-p$. The spaces in
the complexes of even cochains for small degree are given in
Table~\ref{tab:even-cochains-small}; although for degree $d=4$ there
are cochains also for $q=5,6$ which we omit.

Let $C^{d,q}(\fa_-,\fa)$ be the space of $q$-cochains of degree $d$.
The Spencer differential
\begin{equation*}
  \partial: C^{d,q}(\fa_-,\fa) \to C^{d,q+1}(\fa_-,\fa)
\end{equation*}
coincides with the restriction of the Chevalley--Eilenberg
differential for the Lie superalgebra $\fa_-$ relative to its module
$\fa$ with respect to the adjoint action.

For $q=0,1,2$ and $d\equiv 0 \pmod 2$, the Spencer differential is
explicitly given by the following expressions:
\begin{align}
  \partial\phi(x) &= [x,\phi]   \label{eq:Spencer0}\\
  \partial\phi(x,y) &= [x,\phi(y)] - (-1)^{|x||y|} [y,\phi(x)] - \phi([x,y]) \label{eq:Spencer1}\\
  \partial\phi(x,y,z) &= [x,\phi(y,z)]+(-1)^{|x|(|y|+|z|)}[y,\phi(z,x)] + (-1)^{|z|(|x|+|y|)} [z,\phi(x,y)]\notag \\
    & \quad {} - \phi([x,y],z) - (-1)^{|x|(|y|+|z|)} \phi([y,z],x) -(-1)^{|z|(|x|+|y|)} \phi([z,x],y),\label{eq:Spencer2}
\end{align}
where $|x|,|y|,\dots$ are the parity of elements $x,y,\dots$ of $\fa_-$
and $\phi\in C^{d,q}(\fa_-,\fa)$ with $q=0,1,2$, respectively.

We say that a $\ZZ$-graded Lie superalgebra $\fa$ with
negatively graded part $\fa_-$ is a \emph{full prolongation of degree
  $k$} if $H^{d,1}(\fa_-,\fa)=0$ for all $d\geq k$ (see \cite{MR1688484}).

\begin{lemma}
  \label{lem:fil2}
  Let $\fa$ be a highly supersymmetric graded subalgebra of
  $\fp$. Then $\fa$  is a full prolongation of degree $2$ and
  $H^{d,2}(\fa_-,\fa)=0$ for all even $d\geq 4$.
\end{lemma}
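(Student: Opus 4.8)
The plan is to reduce the statement to a computation in Spencer cohomology that only involves the structure of the \emph{Poincaré} superalgebra $\fp$ together with the ``local homogeneity'' fact for $\kappa$, and then to exploit the fact that $\fa$ sits inside $\fp$ as a graded subalgebra in order to transfer vanishing results from $\fp$ to $\fa$. First I would recall that for a highly supersymmetric $\fa$ we already know from Lemma~\ref{lem:fil1} that $\fa_{-2}=V$, so $\fa_- = S' \oplus V$ with $S' \subset S$ of dimension $>16$; moreover $\fa$ is fundamental and transitive. Transitivity is the key structural input: it means that the degree-$0$ Spencer cohomology in positive degrees is forced to be small, and in particular it ties down the kernel of $\partial$ on $1$-cochains. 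The strategy for the ``full prolongation of degree $2$'' part is thus to show directly that $H^{d,1}(\fa_-,\fa)=0$ for every $d\geq 2$ by analysing the cocycle and coboundary conditions degree by degree, using \eqref{eq:Spencer0}--\eqref{eq:Spencer1}.

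Concretely, for a $1$-cocycle $\phi \in C^{d,1}(\fa_-,\fa)$ with $d\geq 2$ even the cocycle condition $\partial\phi(x,y)=0$ reads $[x,\phi(y)] - (-1)^{|x||y|}[y,\phi(x)] = \phi([x,y])$ for $x,y\in\fa_-$; I would first restrict to $x,y\in S'$ and use surjectivity of $\kappa$ on $\odot^2 S'$ (local homogeneity) to pin down $\phi$ on $V = \fa_{-2}$ in terms of $\phi|_{S'}$, then feed this back in to constrain $\phi|_{S'}$, and finally show the resulting $\phi$ is a coboundary, i.e.\ $\phi = \partial\psi$ for some $\psi\in C^{d,0}(\fa_-,\fa) = \fa_d$ (which is the relevant graded piece of $\fa$, nonzero only for $d=0$). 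For $d\geq 2$ one has $\fa_d = 0$, so the content is really that every such cocycle vanishes outright; the transitivity of $\fa$ is what makes the ``$\phi$ determined by lower data'' bootstrap terminate. The analogous argument, one degree up in cochain rank, handles $H^{d,2}(\fa_-,\fa)=0$ for even $d\geq 4$: a $2$-cocycle is fixed by its components in Table~\ref{tab:even-cochains-small}, and for $d\geq 4$ each listed component lands in a space ($\fh$, $S'$ or $V'$) where the cocycle identity \eqref{eq:Spencer2}, restricted to purely odd arguments in $S'$ and using surjectivity of $\kappa$, forces it to be a coboundary of a $1$-cochain of degree $d$, which by the first part is cohomologically trivial.

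The cleanest route, and the one I would actually take, is to deduce both vanishing statements from the corresponding statements for $\fp$ itself. The point is that for $\fp$ the Spencer cohomology $H^{d,\bullet}(\fp_-,\fp)$ in even degrees $d\geq 2$ is known (this is exactly the computation underlying the rigidity results of \cite{Figueroa-O'Farrill:2015efc}; the only nontrivial even cohomology of $\fp$ sits in degree $0$), and there is a restriction map in Spencer cohomology induced by $\fa_- \hookrightarrow \fp_-$, $\fa \hookrightarrow \fp$. Because $\fa_{-2} = V = \fp_{-2}$ and $\fa_{-1}=S' \subset S = \fp_{-1}$ with $\dim S' > 16$, the map on cochains is injective in the relevant degrees, and local homogeneity guarantees that a cocycle on $\fa_-$ extends compatibly, so vanishing of $H^{d,q}(\fp_-,\fp)$ pulls back to vanishing of $H^{d,q}(\fa_-,\fa)$. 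I would make this precise with a short restriction/extension argument and then simply quote the $\fp$-computation.

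The main obstacle will be the extension step: a priori the restriction map $H^{d,q}(\fp_-,\fp) \to H^{d,q}(\fa_-,\fa)$ need not be surjective, so vanishing upstairs does not formally imply vanishing downstairs without extra input. The resolution is precisely ``local homogeneity'': because $\kappa(S'\otimes S') = V$, any cocycle on $\fa_-$ is already determined by its restriction to $\odot^\bullet S'$, which does extend to a cocycle on $\fp_-$ (one uses the $\kappa$-surjectivity to define the extension on the extra spinor directions consistently, and the Jacobi/cocycle identities to check it is well defined). So the real work is to verify that this extension is well defined and cocycle-preserving; once that is done, the vanishing is immediate from the known cohomology of $\fp$. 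I expect this compatibility check, rather than any part of the degree-by-degree bookkeeping, to be where the argument needs care.
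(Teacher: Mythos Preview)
Your proposal is far more elaborate than necessary, and the ``cleanest route'' you settle on (transfer from $\fp$ via a restriction/extension argument) introduces a difficulty---the extension step you yourself flag as the main obstacle---that simply does not arise in the actual proof.  The paper's argument is a two-line direct computation, and the key observation you are missing is that for the degrees in question the cochain spaces $C^{d,q}(\fa_-,\fa)$ are almost all \emph{zero} for elementary grading reasons, so there is nothing to do.

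Concretely: since $\fa$ is concentrated in degrees $-2,-1,0$ and $\fa_-^*$ in degrees $1,2$, one reads off from Table~\ref{tab:even-cochains-small} that $C^{d,1}(\fa_-,\fa)=0$ for all $d>2$ and $C^{d,2}(\fa_-,\fa)=0$ for all $d>4$.  So the only nontrivial cases are $d=2$, $q=1$ and $d=4$, $q=2$.  In the first case $C^{2,1}(\fa_-,\fa)=\Hom(V,\fh)$, and evaluating $\partial\phi$ on $s_1,s_2\in S'$ gives $\partial\phi(s_1,s_2)=-\phi(\kappa(s_1,s_2))$; surjectivity of $\kappa|_{\odot^2 S'}$ then forces any cocycle to vanish.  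In the second case $C^{4,2}(\fa_-,\fa)=\Hom(\Lambda^2 V,\fh)$, and $\partial\phi(s_1,s_2,v)=-\phi(\kappa(s_1,s_2),v)$ is again injective in $\phi$ by the same surjectivity.  That is the entire proof.

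There is no bootstrap, no need to relate $\phi|_V$ to $\phi|_{S'}$ (in the relevant degrees $\phi$ has only a $V$-component to begin with), and no need to invoke transitivity beyond what already went into Lemma~\ref{lem:fil1}.  Your proposed transfer from $H^{d,q}(\fp_-,\fp)$ is not wrong in spirit, but it replaces a trivial injectivity check with a nontrivial extension problem, and the cited computation from \cite{Figueroa-O'Farrill:2015efc} concerns $H^{2,2}(\fp_-,\fp)\cong\Lambda^4 V$, which is neither zero nor in the range of $(d,q)$ relevant here.
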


\begin{proof}
  Since $\fa$ is fundamental, many of the components of the Spencer
  differential turn out to be injective. For instance every
  $\phi\in C^{2,1}(\fa_-,\fa)=\Hom(V,\fh)$ satisfies
  $\partial\phi(s_1,s_2)=-\phi(\kappa(s_1,s_2))$ for all
  $s_1,s_2\in S'$ so that $\phi=0$ is the only cocycle and
  $H^{2,1}(\fa_-,\fa)=0$. If $d>2$ the space of cochains
  $C^{d,1}(\fa_-,\fa)=0$ and first claim follows. 
  
  If $\phi\in C^{4,2}(\fa_-,\fa)=\Hom(\Lambda^2 V, \fh)$, one has
  $\partial\phi(s_1,s_2,v)=-\phi(\kappa(s_1,s_2),v)$ where
  $s_1,s_2\in S'$ and $v\in V$. In particular
  $\ker \partial|_{C^{4,2}(\fa_-,\fa)}=0$ and
  $H^{4,2}(\fa_-,\fa)=0$. If $d>4$ then $C^{d,2}(\fa_-,\fa)=0$ and
  last claim follows.
\end{proof}

Let $\fg$ be a filtered deformation of a graded subalgebra
$\fa=\fh\oplus S'\oplus V'$ of $\fp$. It satisfies
$[\fg^i, \fg^j] \subset \fg^{i+j}$, where the filtration $\fg^\bullet$
is
\begin{equation*}
  \fg^\bullet : \qquad \fg = \fg^{-2} \supset \fg^{-1} \supset \fg^0
  \supset 0,\qquad\fg^{-1} = \fh \oplus S',\qquad\fg^0 = \fh,
\end{equation*}
hence its Lie brackets take the following general form 
\begin{equation}
\label{eq:generalbrackets}
  \begin{aligned}[m]
    [A,B]&=AB-BA\\
    [A,s]&=\sigma(A)s\\
    [A,v]&=Av+\delta(A,v)
  \end{aligned}
  \qquad\qquad
  \begin{aligned}[m]
    [s,s]&=\kappa(s,s)+\gamma(s,s)\\
    [v,s]&=\beta(v,s)\\
    [v,w]&=\alpha(v,w)+\rho(v,w),
  \end{aligned}
\end{equation}
for some maps $\alpha\in\Hom(\Lambda^2 V',V')$,
$\beta\in\Hom(V'\otimes S',S')$, $\gamma\in\Hom(\odot^2 S',\fh)$ and
also $\delta\in\Hom(\fh\otimes V',\fh)$ of degree $2$ and a map
$\rho\in\Hom(\wedge^2V',\fh)$ of degree $4$, where $A,B\in \fh$,
$s\in S'$ and $v,w\in V'$.

\begin{definition}\label{def:highsusy}
  The filtered deformation $\fg$ of $\fa$ is called \emph{highly
    supersymmetric} if $\fa$ is highly supersymmetric; that is, if
  $\dim\fg_{\bar 1}=\dim S'>16$.
\end{definition}

To introduce the notion of isomorphism between filtered
subdeformations of $\fp$, we note that the spin group $\Spin(V)$
naturally acts on $\fp=\fso(V)\oplus S\oplus V$ by $0$-degree Lie
superalgebra automorphisms. In particular any element $g\in\Spin(V)$
sends a graded subalgebra of $\fp$ into an (isomorphic) graded
subalgebra of $\fp$.

\begin{definition} 
\label{def:isofil}
 An \emph{isomorphism}  of filtered subdeformations $\fg$ and
  $\widetilde\fg$ of $\fp$ is a map
  $\Phi:\fg\longrightarrow\widetilde\fg$ such that:
  \begin{enumerate}[label=(\roman*)]
  \item $\Phi$ is an isomorphism of Lie superalgebras; 
  \item $\Phi$ is compatible with the filtrations;
    i.e., $\Phi(\fg^i)=\widetilde\fg^i$ for $i=-2,-1,0$;
  \item the induced $0$-degree Lie superalgebra isomorphism of
    associated graded Lie superalgebras $\fa$ and $\widetilde \fa$ is
    given by the natural action of some $g\in\Spin(V)$.\footnote{It is
      well known that the bosonic field equations of $11$-dimensional
      supergravity are invariant under a homothety which rescales both
      the metric and the 4-form.  With this definition of isomorphism
      of filtered subdeformations, the Killing superalgebras of two
      different supergravity backgrounds related by a homothety would
      \emph{not} be isomorphic.  They would be isomorphic, however,
      were we to modify the definition by replacing $\Spin(V)$ by
      $\CSpin(V)$ in (iii).}
  \end{enumerate}
  If we do not wish to mention $\Phi$ explicitly, we simply say that
  $\fg$ and $\widetilde\fg$ are \emph{isomorphic}.
\end{definition}

It is easy to see that an isomorphism $\Phi:\fg\to\widetilde\fg$ takes
the following general form, for some $g\in\Spin(V)$ and
$X':V'\to\widetilde\fh$:
\begin{equation}
  \label{eq:generaliso}
  \Phi(A)=g\cdot A,\qquad
  \Phi(s)=g\cdot s,\qquad\text{and}\qquad
  \Phi(v)=g\cdot v+ X'_v,
\end{equation}
where $A\in\fh$, $s\in S'$ and $v\in V'$. In the following, we
consider isomorphisms of highly supersymmetric filtered
subdeformations whose associated $0$-degree map is the identity, that
is with $g=e$ in \eqref{eq:generaliso}. We denote the sum of all
components in \eqref{eq:generalbrackets} of degree $2$ by the symbol
$\mu=\alpha+\beta+\gamma+\delta:\fa\otimes\fa\to\fa$, where
$\alpha,\beta,\gamma,\delta$ are the maps introduced just before
Definition~\ref{def:highsusy}.

\begin{proposition}
  \label{prop:fil1}
  Let $\fg$ be a highly supersymmetric filtered deformation of a
  graded subalgebra $\fa$ of $\fp$.  Then:
  \begin{enumerate}
  \item $\mu|_{\fa_-\otimes\fa_-}$ is a cocycle in $C^{2,2}(\fa_-,\fa)$ and
    its cohomology class
    \begin{equation*}
      [\mu|_{\fa_-\otimes\fa_-}]\in H^{2,2}(\fa_-,\fa)
    \end{equation*}
    is $\fh$-invariant (that is, the cocycle $\mu|_{\fa_-\otimes\fa_-}$ is
    $\fh$-invariant up to coboundaries);
  \item if $\widetilde\fg$ is another filtered deformation of the same
    $\fa$ such that $[\widetilde\mu|_{\fa_-\otimes\fa_-}]=
    [\mu|_{\fa_-\otimes\fa_-}]$ then $\widetilde\fg$ is isomorphic to
    $\fg$.
  \end{enumerate}
\end{proposition}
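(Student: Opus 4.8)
The plan is to extract both assertions from the (super) Jacobi identity of $\fg$, organised into homogeneous components with respect to the degree appearing in Table~\ref{tab:even-cochains-small}.

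\emph{Part (1).} First observe that $\mu|_{\fa_-\otimes\fa_-}=\alpha+\beta+\gamma$ (the piece $\delta$ drops out because it pairs $\fh$ with $V'$), and that, by the degree count, $\alpha+\beta+\gamma$ lies in $C^{2,2}(\fa_-,\fa)$. I would then expand the Jacobi identity $[[x,y],z]+\dots=0$ for $x,y,z\in\fa_-$ using the brackets \eqref{eq:generalbrackets}: since $[x,y]$ has leading term in $\fa_-$, next term $\mu(x,y)$, and since $\rho$ contributes only in degree $4$, the degree-$2$ component of this identity is exactly $\partial(\mu|_{\fa_-\otimes\fa_-})(x,y,z)=0$ computed via \eqref{eq:Spencer2} relative to the $\fa$-bracket. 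Hence $\mu|_{\fa_-\otimes\fa_-}$ is a cocycle in $C^{2,2}(\fa_-,\fa)$. For the $\fh$-invariance of its class I would repeat the computation on the Jacobi identity for $A\in\fh$, $x,y\in\fa_-$: setting $\delta_A:=\delta(A,-)\in\Hom(V',\fh)$, extended by zero on $S'$ so that $\delta_A\in C^{2,1}(\fa_-,\fa)$, the degree-$2$ part of that identity --- checked separately in the three cases $x,y\in V'$, $x\in V',y\in S'$, and $x,y\in S'$ --- assembles into the single cochain relation $A\cdot(\mu|_{\fa_-\otimes\fa_-})=\partial\delta_A$, where $A\cdot(-)$ is the natural $\fh$-action on cochains. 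Thus $A\cdot[\mu|_{\fa_-\otimes\fa_-}]=0$ in $H^{2,2}(\fa_-,\fa)$ for all $A\in\fh$, which is the claimed invariance.

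\emph{Part (2).} The essential point, and where high supersymmetry enters, is that the structure maps $\delta$ and $\rho$ --- invisible to the class $[\mu|_{\fa_-\otimes\fa_-}]$ --- are forced by $\alpha,\beta,\gamma$ through the Jacobi identity. I would argue in three steps. (i) Since $[\widetilde\mu|_{\fa_-\otimes\fa_-}]=[\mu|_{\fa_-\otimes\fa_-}]$, write $\widetilde\mu|_{\fa_-\otimes\fa_-}-\mu|_{\fa_-\otimes\fa_-}=\partial\psi$ with $\psi\in C^{2,1}(\fa_-,\fa)=\Hom(V',\fh)$; a direct check using \eqref{eq:generaliso} shows that the isomorphism with $g=e$ and $X'=-\psi$ transforms $\fg$ into a filtered deformation $\fg'$ of $\fa$ with $\mu'|_{\fa_-\otimes\fa_-}=\mu|_{\fa_-\otimes\fa_-}-\partial X'=\widetilde\mu|_{\fa_-\otimes\fa_-}$, so after replacing $\fg$ by $\fg'$ I may assume $\alpha=\widetilde\alpha$, $\beta=\widetilde\beta$, $\gamma=\widetilde\gamma$, and it remains to prove $\delta=\widetilde\delta$ and $\rho=\widetilde\rho$. (ii) The $\fh$-valued component of the Jacobi identity for $A\in\fh$ and $s_1,s_2\in S'$ reads $\delta(A,\kappa(s_1,s_2))=-(A\cdot\gamma)(s_1,s_2)$ (and the same with tildes); since $\fa$ is highly supersymmetric, $\kappa|_{\odot^2 S'}$ is onto $V'=V$ by Lemma~\ref{lem:fil1}, so the right-hand side --- which involves only $\gamma=\widetilde\gamma$ --- determines $\delta=\widetilde\delta$. (iii) The $\fh$-valued component of the Jacobi identity for $s_1,s_2\in S'$ and $v\in V'$ expresses $\rho(\kappa(s_1,s_2),v)$ in terms of $\beta,\gamma,\delta$ and $\kappa$; by (ii) the right-hand sides for $\fg$ and $\widetilde\fg$ coincide, and surjectivity of $\kappa|_{\odot^2 S'}$ then gives $\rho=\widetilde\rho$. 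Hence $\fg=\fg'=\widetilde\fg$, and composing with the isomorphism of step (i) proves $\fg\cong\widetilde\fg$.

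The crux is the rigidity used in steps (ii)--(iii): one has to recognise that the ``off-diagonal'' structure maps $\delta$ and $\rho$ are pinned down by the Jacobi identity once $\alpha,\beta,\gamma$ are known, and that the mechanism making this work is precisely the surjectivity of the Dirac current on $\odot^2 S'$ --- i.e.\ high supersymmetry. A secondary, purely clerical, difficulty running through the whole argument is the careful tracking of signs and of the degree shifts produced when the Jacobi identity of a filtered (rather than graded) superalgebra is split into homogeneous components.
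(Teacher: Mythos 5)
Your proposal is correct and follows essentially the same strategy as the paper: Part~(1) is the standard split of the super Jacobi identity into homogeneous degree components, and Part~(2) hinges, exactly as in the paper, on the fact that high supersymmetry forces $V'=V$ and the surjectivity of $\kappa|_{\odot^2 S'}$ (Lemma~\ref{lem:fil1}), so that the higher-degree structure maps $\delta$ and $\rho$ are rigid once $\alpha,\beta,\gamma$ are fixed. The only cosmetic difference is how $\delta$ is pinned down: you read it off the $\fh$-valued part of the $[\fh S'S']$ Jacobi identity, $\delta(A,\kappa(s_1,s_2))=-(A\cdot\gamma)(s_1,s_2)$, and invoke surjectivity of $\kappa$; the paper instead writes $\alpha(v,w)=X_vw-X_wv$ for a unique $X:V\to\fso(V)$ (using $V'=V$ and vanishing of the first prolongation of $\fso(V)$) and records the closed formula $\delta(A,v)=[A,X_v]-X_{Av}$, which is the same constraint packaged via the $[\fh V'V']$ Jacobi. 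Your determination of $\rho$ via the $[S'S'V']$ Jacobi is precisely the paper's $\partial\rho(s,s,v)=2\gamma(s,\beta(v,s))+\delta(\gamma(s,s),v)$ together with the injectivity of $\partial$ on $C^{4,2}$ (Lemma~\ref{lem:fil2}). Both routes appeal to the same underlying Cheng--Kac formalism that the paper cites; your writeup simply makes the relevant Jacobi components explicit rather than citing \cite[Theorem~9]{Figueroa-O'Farrill:2015efc}.
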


\begin{proof}
  The proof relies on general results on filtered deformations in
  \cite{MR1688484} and the full line of arguments is the same as in
  \cite[Theorem 9]{Figueroa-O'Farrill:2015efc} or \cite[Proposition
  10]{deMedeiros:2016srz}. Here we simply record that, since $V'=V$,
  any $\alpha\in\Hom(\Lambda^2 V,V)$ can be written as
  \begin{equation*}
    \alpha(v,w)=X_{v}w-X_{w}v,\qquad v,w\in V,
  \end{equation*}
  for a unique linear map $X:V\to\fso(V)$ and that 
  \begin{equation*}
    \begin{split}
      \delta(A,v)&=[A,X_v]-X_{Av},\\
      \partial
      \rho(s,s,v)&=2\gamma(s,\beta(v,s))+\delta(\gamma(s,s),v),
    \end{split}
  \end{equation*}
  for all $A\in\fh$, $v\in V$ and $s\in S'$. It follows from
  Lemmas~\ref{lem:fil1}~and~\ref{lem:fil2} that the components
  $\delta$ and $\rho$ are uniquely determined, once
  $\mu|_{\fa_-\otimes\fa_-}=\alpha+\beta+\gamma$ has been fixed. In a
  similar way the components $\widetilde\delta$ and $\widetilde\rho$
  of $\widetilde\fg$ are also fixed in terms of
  $\widetilde\mu|_{\fa_-\otimes\fa_-}$.

  By hypothesis $\widetilde\mu|_{\fa_-\otimes\fa_-} =
  \mu|_{\fa_-\otimes\fa_-} - \partial X'$, with $X':V\to\fh$ giving
  the required isomorphism.
\end{proof}

We have seen that highly supersymmetric filtered subdeformations with
associated graded subalgebra $\fa$ of $\fp$ are determined, up to
isomorphisms of filtered subdeformations, by the space
$H^{2,2}(\fa_-,\fa)^{\fh}$ of $\fh$-invariant elements in
$H^{2,2}(\fa_-,\fa)$.  We will obtain improved versions of Proposition
\ref{prop:fil1} in Section~\ref{sec:class-probl-kill}, in the case of
(odd-generated) realizable filtered subdeformations. The concept of
realizability is introduced in the next section.

\subsection{Realizable filtered deformations}

In \cite[Proposition~7]{Figueroa-O'Farrill:2015efc}, we determined the
group $H^{2,2}(\fa_-,\fa)$ when $\fa=\fp$, and found that
$H^{2,2}(\fp_-,\fp)\cong\wedge^4 V$ as an $\fso(V)$-module. More
precisely any class admits a canonical representative of the form
$\beta^{\varphi}+\gamma^{\varphi}$ for a unique
$\varphi\in\Lambda^4 V$, where $\beta^\varphi:V\otimes S\to S$ and
$\gamma^\varphi:\odot^2 S\to\fso(V)$ are given by
\begin{equation}
\label{eq:thetatoo}
\begin{aligned}
\beta^{\varphi}(v,s)&=\tfrac1{24}(v\cdot\varphi-3\varphi\cdot v)\cdot s,\\
\gamma^{\varphi}(s,s)(v)&=-2\kappa(\beta^\varphi(v,s),s),
\end{aligned}
\end{equation}
for all $s\in S$ and $v\in V$.

Associated to the natural inclusion $\imath:\fa\to\fp$ there are chain
maps $\imath_*:C^{\bullet,\bullet}(\fa_-,\fa)\to
C^{\bullet,\bullet}(\fa_-,\fp)$ and $\imath^*:C^{\bullet,\bullet}(\fp_-,\fp)\to
C^{\bullet,\bullet}(\fa_-,\fp)$ inducing the corresponding maps in
cohomology
\begin{equation}
  \label{eq:mapadm}
  \begin{split}
    \imath_*&:H^{2,2}(\fa_-,\fa)\to H^{2,2}(\fa_-,\fp),\\
    \imath^*&:H^{2,2}(\fp_-,\fp)\to H^{2,2}(\fa_-,\fp).
  \end{split}
\end{equation}
Both maps in \eqref{eq:mapadm} are $\fh$-equivariant. Moreover we have
the following

\begin{lemma}
  \label{lem:kernelpriori}
  Let $\fa$ be a highly supersymmetric graded subalgebra of $\fp$.
  Then $\imath_*$ is injective and $\ker
  \imath^*\cong\left\{\varphi\in\Lambda^4 V ~ \middle|
    ~\beta^\varphi|_{V\otimes  S'}=0\right\}$.
\end{lemma}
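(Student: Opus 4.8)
The plan is to reduce both claims to explicit comparisons of cochains in degree~$2$. By Lemma~\ref{lem:fil1} we have $\fa_{-2}=V$, so $C^{2,1}(\fa_-,\fa)=\Hom(V,\fh)$, $C^{2,1}(\fa_-,\fp)=\Hom(V,\fso(V))$, and $C^{2,2}(\fa_-,\fp)=\Hom(\Lambda^2 V,V)\oplus\Hom(V\otimes S',S)\oplus\Hom(\odot^2 S',\fso(V))$ --- the three summands being the entries of the $(d,q)=(2,2)$ cell of Table~\ref{tab:even-cochains-small}, with the coefficient spaces $S'$ and $\fh$ enlarged to $S$ and $\fso(V)$. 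Using \eqref{eq:Spencer1} I would first record that for $\psi\in\Hom(V,\fso(V))$ the coboundary $\partial\psi$ has $\Lambda^2 V$-component $(v,w)\mapsto\psi(v)w-\psi(w)v$, $V\otimes S'$-component $(v,s)\mapsto\sigma(\psi(v))s$, and $\odot^2 S'$-component $(s_1,s_2)\mapsto-\psi(\kappa(s_1,s_2))$.

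For the injectivity of $\imath_*$, I would take a cocycle $\mu=\alpha+\beta+\gamma\in C^{2,2}(\fa_-,\fa)$ (notation of \eqref{eq:generalbrackets}) with $\imath_*[\mu]=0$, say $\imath_*\mu=\partial\psi$ for some $\psi\in\Hom(V,\fso(V))$. Matching $\odot^2 S'$-components forces $\gamma=-\psi\circ\kappa$ on $\odot^2 S'$; since $\gamma$ is $\fh$-valued and, by high supersymmetry (local homogeneity), $\kappa$ is surjective from $\odot^2 S'$ onto $V$, this forces $\psi(V)\subseteq\fh$, i.e.\ $\psi\in C^{2,1}(\fa_-,\fa)$. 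As $\imath_*$ intertwines the differentials and is injective on cochains, $\imath_*(\partial\psi)=\imath_*\mu$ gives $\partial\psi=\mu$ in $C^{2,2}(\fa_-,\fa)$, so $[\mu]=0$ and $\imath_*$ is injective.

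For $\ker\imath^*$, recall from \cite[Proposition~7]{Figueroa-O'Farrill:2015efc} that $H^{2,2}(\fp_-,\fp)\cong\Lambda^4 V$ with canonical representatives $\beta^\varphi+\gamma^\varphi$ as in \eqref{eq:thetatoo}; then $\imath^*(\beta^\varphi+\gamma^\varphi)$ has $V\otimes S'$-component $\beta^\varphi|_{V\otimes S'}$, $\odot^2 S'$-component $\gamma^\varphi|_{\odot^2 S'}$, and no $\Lambda^2 V$-component. If $\beta^\varphi|_{V\otimes S'}=0$, then by the second identity in \eqref{eq:thetatoo} and polarisation $\gamma^\varphi|_{\odot^2 S'}=0$ as well, so $\imath^*(\beta^\varphi+\gamma^\varphi)=0$ already as a cochain and a fortiori $[\beta^\varphi+\gamma^\varphi]\in\ker\imath^*$. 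Conversely, suppose $\imath^*[\beta^\varphi+\gamma^\varphi]=0$, say $\imath^*(\beta^\varphi+\gamma^\varphi)=\partial\psi$ with $\psi\in\Hom(V,\fso(V))$. Comparing $\Lambda^2 V$-components gives $\psi(v)w=\psi(w)v$ for all $v,w\in V$; the standard computation using that each $\psi(v)$ is $\eta$-skew then forces $\psi=0$, i.e.\ $\partial$ is injective on $\Hom(V,\fso(V))$. Hence $\imath^*(\beta^\varphi+\gamma^\varphi)=0$ as a cochain and in particular $\beta^\varphi|_{V\otimes S'}=0$. Combining the two directions, $\varphi\mapsto[\beta^\varphi+\gamma^\varphi]$ restricts to a bijection $\{\varphi\in\Lambda^4 V\mid\beta^\varphi|_{V\otimes S'}=0\}\to\ker\imath^*$, which is an isomorphism of $\fh$-modules since $\Lambda^4 V\cong H^{2,2}(\fp_-,\fp)$ is.

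I expect the two substantive points --- the only non-formal inputs --- to be: (i) for $\imath_*$, the surjectivity of $\kappa|_{\odot^2 S'}$ onto $V$ when $\dim S'>16$, which is precisely what upgrades the a priori $\fso(V)$-valued potential $\psi$ to an $\fh$-valued one; and (ii) for $\imath^*$, the injectivity of $\psi\mapsto\bigl((v,w)\mapsto\psi(v)w-\psi(w)v\bigr)$ on $\Hom(V,\fso(V))$, which collapses the cohomological condition $[\imath^*(\beta^\varphi+\gamma^\varphi)]=0$ to the cochain condition $\imath^*(\beta^\varphi+\gamma^\varphi)=0$. Everything else is bookkeeping with the degree-$2$ part of the Spencer complex, e.g.\ checking that the various components land where claimed, using that $\fa$ and $\fa_-$ are subalgebras.
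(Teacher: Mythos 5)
Your argument is correct and matches the paper's approach: for the injectivity of $\imath_*$ you use exactly the paper's observation that the $\odot^2 S'$-component of $\partial\psi$ forces $\psi(\kappa(s_1,s_2))=-\gamma(s_1,s_2)\in\fh$, and then surjectivity of $\kappa|_{\odot^2 S'}$ (high supersymmetry) upgrades $\psi$ to an $\fh$-valued cochain. The paper dismisses the $\ker\imath^*$ claim as ``straightforward''; your filled-in argument (injectivity of $\partial$ on $\Hom(V,\fso(V))$ via $\eta$-skewness, plus the identity $\gamma^\varphi(s,s)(v)=-2\kappa(\beta^\varphi(v,s),s)$ to deduce $\gamma^\varphi|_{\odot^2 S'}=0$ from $\beta^\varphi|_{V\otimes S'}=0$) is precisely the intended one.
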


\begin{proof}
  Let $[\alpha+\beta+\gamma]\in H^{2,2}(\fa_-,\fa)$ be such that
  $\imath_*[\alpha+\beta+\gamma]=0$; that is, $\alpha + \beta + \gamma
  = \partial\phi$ for some $\phi : V \to \fso(V)$.  However, in
  particular $\phi(\kappa(s_1,s_2))=-\gamma(s_1,s_2)\in\fh$ for all
  $s_1,s_2\in S'$, hence $\phi(v)\in\fh$ for all $v\in V$ and
  $[\alpha+\beta+\gamma]=0$.  This proves the first claim. The second
  claim is straightforward.
\end{proof}

\begin{remark}
  We will see that the space $\ker \imath^*$ parametrises
  the $4$-forms compatible with a highly supersymmetric flat
  supergravity background, so that $\imath^*$ too is injective (see
  Corollary \ref{cor:final!}).  It would be desirable to have an a
  priori representation-theoretic proof of this fact.
\end{remark}

In the following definition we introduce the concept of realizability
for highly supersymmetric filtered subdeformations of $\fp$.  This is
an algebraic criterion which, as we will see, singles out the filtered subdeformations which are
geometrically realizable as (subalgebras of) Killing superalgebras of supergravity backgrounds.  Since the main
purpose of this paper is to lay the foundations for the classification
of highly supersymmetric backgrounds, we restrict the definition to
the highly supersymmetric case.  This is not to suggest that the
notion of a (geometrically) realizable filtered subdeformation is not
interesting if $\dim\fg_{\bar 1}\leq 16$, but the definition
would be more involved.

\begin{definition}
  \label{def:realizable}
  A filtered deformation $\fg$ of a highly supersymmetric $\mathbb
  Z$-graded subalgebra $\fa$ of $\fp$ is said to be \emph{realizable}
  if the following conditions are satisfied:
  \begin{enumerate}[label=(\roman*)]
  \item the associated cohomology class $[\alpha+\beta+\gamma]\in
    H^{2,2}(\fa_-,\fa)$ is of the form
    \begin{equation}
      \label{eq:WAdm}
      \imath_*[\alpha+\beta+\gamma]=\imath^*[\beta^\varphi+\gamma^{\varphi}]
    \end{equation}
    for some $\varphi\in\Lambda^4 V$ (the collection of such forms is
    an affine space modeled on the vector space $\ker \imath^*$); and
  \item there exists a $\varphi\in\Lambda^4 V\cong\Lambda^4 V^*$ as in
    (i) which is also $\fh$-invariant and closed. The condition for
    an $\fh$-invariant $\varphi$ to be closed is equivalent to
    \begin{equation}
      \label{eq:Adm}
      d\varphi(v_0,\ldots,v_4) = \sum_{i<j}(-1)^{i+j}
      \varphi(\alpha(v_i,v_j), v_0, \ldots, \hat v_i,\ldots,\hat
      v_j,\ldots,v_4) = 0
    \end{equation}
    for all $v_0,\ldots, v_4\in V$. 
    \end{enumerate}
  We call \emph{admissible} any $\varphi\in\Lambda^4 V$ which is
  $\fh$-invariant and satisfies \eqref{eq:WAdm} and \eqref{eq:Adm}.
\end{definition}

\begin{definition}
  \label{def:srealizable}
  A filtered deformation $\fg$ is called \emph{odd-generated realizable} if
  it is realizable and generated by the odd part, that is $\fg_{\bar
    0}=[\fg_{\bar 1},\fg_{\bar 1}]$.
\end{definition}

We will see below that the Killing spinors of a highly supersymmetric
background generate a filtered Lie superalgebra which is
odd-generated and realizable.

We note that the condition of being (odd-generated) realizable is
preserved by the isomorphisms of filtered subdeformations of $\fp$. In
particular, it is worth remarking that even though the
condition~\eqref{eq:Adm} that $\varphi$ be closed seems to depend
explicitly on $\alpha$, it actually only depends on the cohomology
class of $\alpha + \beta + \gamma$ in $H^{2,2}(\fa_-,\fa)$. Indeed, if
we modify the cocycle by a coboundary $\partial\phi$, for some
$\phi: V \to \fh$, then $\alpha$ changes to
$\widetilde\alpha(v,w) = \alpha(v,w) + \phi_v w - \phi_w v$, and using
the $\fh$-invariance of $\varphi$ one sees that the expression for
$d\varphi$ in equation~\eqref{eq:Adm} remains unchanged.

It follows from \eqref{eq:WAdm} that the Lie brackets of a realizable
filtered deformation $\fg$ of $\fa=\fh\oplus S'\oplus V$ are as in
\eqref{eq:generalbrackets}, where $V'=V$ and
\begin{equation}
  \label{eq:generalbracketsII}
  \begin{split}
	\alpha(v,w) &= X_v w - X_w v\\
	\beta(v,s)&=\beta^\varphi(v,s) + \sigma(X_v) s\\
	\gamma(s,s)&=\gamma^\varphi(s,s) - X_{\kappa(s,s)}\\
    \delta(A,v) &= [A,X_v] - X_{Av},\\    
		\end{split}
\end{equation}
for some linear map $X:V\to\fso(V)$. Here $A,B \in \fh$, $v,w \in V$,
$s \in S'$. This implies that any highly supersymmetric graded
subalgebra $\overline\fa=\overline\fh\oplus\overline S'\oplus V$ of $\fa$
which is also closed under the Lie brackets of $\fg$ inherits a
natural structure $\overline\fg$ of realizable filtered
subdeformation. This motivates the following

\begin{definition}
  An \emph{embedding} of filtered subdeformations of $\fp$
  is an injective map $\Phi:\widetilde\fg\longrightarrow\fg$ such
  that $\Phi : \widetilde\fg \longrightarrow \overline\fg =
  \Phi(\widetilde\fg)$ is an isomorphism of filtered subdeformations,
  where $\overline\fg \subseteq \fg$ has the natural structure of
  filtered subdeformation induced by $\fg$.
 \end{definition}

\section{The Killing superalgebra as a filtered deformation}
\label{sec:kill-super-as}

We will first review in Section~\ref{sec:kill-super-supersymm} the
construction of the Killing superalgebra
$\fk=\fk_{\bar 0}\oplus\fk_{\bar 1}$ associated to a supersymmetric
background $(M,g,F)$ of $11$-dimensional supergravity, following the
description in \cite{FMPHom}.  Actually, it is known that $\fk$ can be
constructed for any $11$-dimensional lorentzian spin manifold
$(M,g,F)$ endowed with a closed $4$-form $F\in\Omega^4(M)$. In other
words, the supergravity Einstein and Maxwell equations are of no
consequence in what follows in Section~\ref{sec:kill-super-as}.

We will then show in Section~\ref{sec:admfilsubdef} that the Killing
superalgebra, as a filtered Lie superalgebra, is (isomorphic to) a
filtered subdeformation of the Poincaré superalgebra. The main result
of this section deals with the highly supersymmetric case and it is
given by Theorem \ref{thm:firstmain} in
Section~\ref{sec:admfilsubdefII}.

\subsection{The Killing superalgebra}
\label{sec:kill-super-supersymm}

Let $(M,g,F)$ be a connected $11$-dimensional lorentzian spin manifold
endowed with a closed $4$-form $F\in\Omega^4(M)$.  We denote the
Levi-Civita connection by $\nabla$ and the associated spinor bundle by
$\$ \to M$ (more precisely, this is a bundle of Clifford modules over
$\Cl(TM)$ associated to one of the two non-isomorphic irreducible
Clifford modules).

The spinor fields
$\varepsilon \in \Gamma(\$)$ which satisfy, for all vector fields $Z
\in \eX(M)$,
\begin{equation*}
  \nabla_Z \varepsilon = \tfrac1{24} (Z \cdot F - 3 F \cdot Z) \cdot
  \varepsilon~
\end{equation*}
are called \emph{Killing spinors} and they define a real vector space
$\fk_{\bar 1}$. We also let $\fk_{\bar 0}$ be the space of $F$-preserving
Killing vectors.  

The Killing superalgebra is a Lie superalgebra
structure on $\fk = \fk_{\bar 0} \oplus
\fk_{\bar 1}$; we will review the construction below.
For our purposes, it is convenient to introduce bundle morphisms
$\beta^F:TM\otimes\$\to\$$ and $\gamma^F:\odot^2 \$\to\fso(TM)$
defined by
\begin{equation}
  \label{eq:theta}
  \begin{split}
    \beta^{F}(Z,\varepsilon)&=\tfrac1{24} (Z\cdot F - 3 F \cdot Z) \cdot \varepsilon,\\
    \gamma^{F}(\varepsilon,\varepsilon)(Z)&=-2\kappa(\beta^F(Z,\varepsilon),\varepsilon),
  \end{split}
\end{equation}
where $Z\in \eX(M)$ and
$\varepsilon\in\Gamma(\$)$. In particular Killing spinors are exactly
those spinors $\varepsilon$ which satisfy $\nabla_Z
\varepsilon=\beta^F(Z,\varepsilon)$ for all $Z\in \eX(M)$.

Let $\eE = \eE_{\bar 0} \oplus \eE_{\bar 1}$ be the supervector bundle where
$\eE_{\bar 0} = TM \oplus \fso(TM)$ and $\eE_{\bar 1} = \$$.  On $\eE$ we
have an even connection $D$ defined by
\begin{equation*}
  D_Z \varepsilon = \nabla_Z \varepsilon - \beta^F(Z,\varepsilon),
\end{equation*}
for $\varepsilon \in \Gamma(\eE_{\bar 1})$ and
\begin{equation*}
  D_Z
  \begin{pmatrix}
    \xi \\ \Xi
  \end{pmatrix}
  =
  \begin{pmatrix}
    \nabla_Z\xi + \Xi(Z)\\ \nabla_Z\Xi - R(Z,\xi)
  \end{pmatrix},
\end{equation*}
for $(\xi,\Xi) \in \Gamma(\eE_{\bar 0})$, where $R:\Lambda^2 TM \to
\fso(TM)$ is the Riemann curvature.  Then
\begin{equation*}
  \begin{split}
    \fk_{\bar 1} &= \left\{\varepsilon \in \Gamma(\eE_{\bar 1})
      ~ \middle | ~ D\varepsilon=0\right\}\\
    \fk_{\bar 0} &= \left\{(\xi,\Xi) \in \Gamma(\eE_{\bar 0}) ~
      \middle | ~ D (\xi,\Xi) =0 \quad\text{and}\quad \nabla_\xi F +
      \Xi\cdot F = 0\right\},
  \end{split}
\end{equation*}
where $\Xi\cdot F$ is the natural action of $\fso(TM)$ on $4$-forms.  In
particular, an element of the Killing superalgebra is determined by
the value at a point in $M$ of the corresponding parallel
section of $\eE$.  In other words, given any point $o \in M$, the
Killing superalgebra defines a vector subspace of $\fso(T_o M) \oplus
\$_o \oplus T_o M$.  

We will introduce the notation
\begin{equation*}
    (V,\eta) = (T_o M, g_o), \qquad \fso(V) = \fso(T_o M), \qquad
    S = \$_o ,
\end{equation*}
so that $(V,\eta)$ is an
$11$-dimensional lorentzian vector space with Lie algebra
$\fso(V)$ of skew-symmetric endomorphisms, and
$S$ an irreducible $\Cl(V)$-module. Notice that $\fso(V) \oplus S
\oplus V$ is the vector space underlying the Poincaré superalgebra.

We now describe the Lie brackets of $\fk$. Let
$(\xi,X_\xi), (\zeta,X_\zeta) \in \fk_{\bar 0}$.  This means that
$\xi,\zeta$ are $F$-preserving Killing vector fields with
$X_\xi = - \nabla \xi$ and $X_\zeta = - \nabla \zeta$.  Their Lie
bracket is given by
\begin{equation}
  \label{eq:LiebracketR}
  [(\xi,X_\xi),(\zeta,X_\zeta)] = (X_\xi \zeta - X_\zeta \xi,
  [X_\xi,X_\zeta] + R(\xi,\zeta)),
\end{equation}
with the Riemann curvature measuring the deviation of $\fk_{\bar 0}$
from being a subalgebra of the Poincaré algebra $\fp_{\bar 0}$.  Now
let $\varepsilon \in \fk_{\bar 1}$ be a Killing spinor.  The action of
$\fk_{\bar 0}$ on $\fk_{\bar 1}$ is given by the spinorial Lie
derivative (\cite{MR0312413}; see also, e.g., \cite{JMFKilling})
\begin{equation}
  \label{eq:spinLieDer}
  \eL_\xi \varepsilon = \nabla_\xi \varepsilon + \sigma(X_\xi)
  \varepsilon,
\end{equation}
where $\sigma$ is the spinor representation of $\fso(TM)$.  From the
fact that $D \varepsilon = 0$, we may rewrite this action without
derivatives:
\begin{equation}
  \label{eq:LieBracketBeta}
  [(\xi,X_\xi), \varepsilon] = \beta^F(\xi,\varepsilon) +
  \sigma(X_\xi) \varepsilon.
\end{equation}
Lastly, the square of a Killing spinor is its Dirac current, which
belongs to $\fk_{\bar 0}$ (\cite{FMPHom}; see also, e.g.,
Corollary~\ref{cor:hsMax} in Section~\ref{sec:field-eqnsI}):
\begin{equation*}
  [\varepsilon,\varepsilon] = (\kappa(\varepsilon,\varepsilon) ,
  -\nabla\kappa(\varepsilon,\varepsilon)).
\end{equation*}
A calculation shows that
\begin{equation}
  \begin{split}
    \label{eq:LieBracketSS}
    -\nabla\kappa(\varepsilon,\varepsilon)(Z) &= - \nabla_Z
    \kappa(\varepsilon, \varepsilon) = - 2 \kappa(\nabla_Z \varepsilon,
    \varepsilon)\\ 
    &= - \tfrac1{12} \kappa\left((Z \cdot F - 3 F \cdot
      Z) \cdot \varepsilon, \varepsilon\right)\\
    &=\gamma^F(\varepsilon,\varepsilon)(Z),
  \end{split}
\end{equation}
for all vector fields $Z\in \eX(M)$.

In summary, the Lie brackets of $\fk$ are given, for
$(\xi,X_\xi), (\zeta,X_\zeta) \in \fk_{\bar 0}$ and
$\varepsilon \in \fk_{\bar 1}$, by the following:
\begin{equation}
  \label{eq:KSA}
  \begin{split}
    [(\xi,X_\xi), (\zeta,X_\zeta)] &= (X_\xi\zeta-X_\zeta\xi,
    [X_\xi,X_\zeta] + R(\xi,\zeta))\\
    [(\xi,X_\xi), \varepsilon] &= \beta^{F}(\xi,\varepsilon) + \sigma(X_\xi)\varepsilon\\
    [\varepsilon, \varepsilon] &= (\kappa(\varepsilon, \varepsilon),
    \gamma^F(\varepsilon,\varepsilon)).
  \end{split}
\end{equation}
As in every Lie superalgebra $\fk=\fk_{\bar 0}\oplus\fk_{\bar 1}$, the
odd subspace $\fk_{\bar 1}$ generates an ideal
$\widehat\fk=\widehat\fk_{\bar 0}\oplus\widehat\fk_{\bar 1}$ of $\fk$,
where $\widehat\fk_{\bar 0}=[\fk_{\bar 1},\fk_{\bar 1}]$ and
$\widehat\fk_{\bar 1}=\fk_{\bar 1}$. We refer to it as the
\emph{Killing ideal} of the Killing superalgebra.

\subsection{The Killing superalgebra is a filtered subdeformation of $\fp$}
\label{sec:admfilsubdef}

We now show that the Lie superalgebra $\fk$ described in
\eqref{eq:KSA} is isomorphic to a filtered subdeformation of the
Poincaré superalgebra $\fp$. It is not in general a subalgebra of
$\fp$.

We will first show that the Killing superalgebra $\fk$ determines a
$\mathbb Z$-graded vector subspace of $\fp$. As above, let us identify
$\fp$ as a vector space with the fibre $\eE_o$. Let
$\ev_o^{\bar 0}: \fk_{\bar 0} \to V$ be the composition of evaluation
at $o$ and projection onto $V=T_o M$, and let
$\ev_o^{\bar 1}: \fk_{\bar 1} \to S$ be evaluation at $o$. Let also
$\im\ev_o^{\bar 1} = S' \subseteq S$ and $\im \ev_o^{\bar 0} = V'
\subseteq V$ be the images of these evaluations.

Let $\fh = \ker \ev_o^{\bar 0}$ be the Lie subalgebra of
$\fk_{\bar 0}$ consisting of elements of $\fk_{\bar 0}$ which vanish
at $o \in M$; that is, which take the form
$(0,A) \in V \oplus \fso(V)$.  In other words, $\fh$ defines a
subspace of $\fso(V)$.  From the definition of $\fh$, we have a short
exact sequence of vector spaces
\begin{equation*}
  \begin{CD}
    0 @>>> \fh @>>> \fk_{\bar 0} @>{\ev_o^{\bar 0}}>> V'
    @>>> 0.
  \end{CD}
\end{equation*}
Since short exact sequences split in the category of vector spaces, we
have a vector space isomorphism $\fk_{\bar 0} \cong \fh \oplus V'$ and
since $\fk_{\bar 1} \cong S'$, we see that (again, as a vector space)
$\fk \cong \fh \oplus S' \oplus V' \subset \fso(V) \oplus S \oplus V =
\fp$.  However there is no canonical splitting and hence no preferred
isomorphism.  If there were, we could simply transport the Lie
superalgebra structure in $\fk$ to $\fh \oplus S' \oplus V'$.  In our
case, however, we will have to choose a splitting.  Geometrically,
this amounts to choosing (in a linear fashion) for every $v \in V'$ a
Killing vector field $\xi \in \fk_{\bar 0}$ with
$\ev_o^{\bar 0}(\xi)= v$.  Such a choice gives an embedding of $V'$
into $\fk_{\bar 0}\subset V\oplus \fso(V)$ by sending $v \in V'$ to
$(v, X_v)$, where $X_v \in \fso(V)$ is the image of $v$ under a linear
map $X: V' \to \fso(V)$.  Any other choice of splitting would result
in $(v,X'_v)$ for some other linear map $X': V' \to \fso(V)$, but
where $X - X' : V' \to \fh$. (In applications we can fix
$X:V'\to \fh^\perp$ to take values in a complement $\fh^\perp$ of
$\fh$ in $\fso(V)$, e.g., the orthogonal complement with respect to
the Killing form, whenever this exists).

Since $\fh$ consists of those Killing vectors in $\fk_{\bar 0}$ which
vanish at $o$, the corresponding parallel section of $\eE_{\bar 0}$
takes the form $(0,A)$ at $o\in M$.  Now it is clear from equation
\eqref{eq:KSA} that the Lie brackets of $\fk$ only depend on the value
of the sections at the point $o \in M$, hence we see that if
$(0,A),(0,B) \in \fh$, then
\begin{equation*}
  [(0,A),(0,B)] = (0, [A,B]),
\end{equation*}
so that $\fh$ defines a Lie subalgebra of $\fso(V)$. In addition, when
evaluated at $o \in M$, the condition $\eL_\xi F = 0$ for all vector
fields $\xi$ with $\ev_o^{\bar 0}(\xi)=0$ becomes
\begin{equation*}
  A\cdot F_o = 0\quad\text{for all}~A\in\fh.
\end{equation*}
In summary, $\fh$ defines a Lie subalgebra of $\fso(V) \cap
\stab(\varphi)$, where $\varphi = F_o \in \Lambda^4 V^*\cong\Lambda^4
V$ is the value of $F$ at $o \in M$.

It also follows from equation~\eqref{eq:KSA} that the action of $\fh$
on $\fk_{\bar 1}$ at $o \in M$ is the restriction to $\fh < \fso(V)$
of the action of $\fso(V)$ on $S$:
\begin{equation*}
  [(0,A),s] = \sigma(A)s.
\end{equation*}
This implies in particular that $\fh$ preserves the subspace $S'$.

Similarly, using the fixed embedding $V' \subset V \oplus \fso(V)$
given by $v \mapsto (v, X_v)$, we find that the remaining brackets are
\begin{equation*}
  \begin{split}
    [(0,A), (v, X_v)] &= (Av, [A,X_v]) = (Av, X_{Av}) + (0, [A, X_v] - X_{Av})\\
    [(v,X_v), s] &= \beta^\varphi(v,s) + X_v s\\
    [s,s] &= (\kappa(s,s), X_{\kappa(s,s)}) + (0,
    \gamma^{\varphi}(s,s) - X_{\kappa(s,s)})
  \end{split}
\end{equation*}
and
\begin{equation*}
  \begin{split}
    [(v,X_v), (w,X_w)] &= ( X_vw - X_w v, [X_v,X_w] + R(v,w))\\
    &= (X_vw - X_wv, X_{X_vw - X_wv}) + (0, [X_v,X_w] - X_{X_vw - X_wv} + R(v,w)).
  \end{split} 
\end{equation*}

In summary, the Killing superalgebra $\fk$ is isomorphic to a Lie
superalgebra structure defined on the graded subspace
$\fh \oplus S' \oplus V'$ of $\fp$, where $\fh$ is a Lie subalgebra of
$\fso(V) \cap \stab(\varphi)$ which preserves $S'$. The Lie brackets
are the following:
\begin{equation}
  \label{eq:KSAina}
  \begin{aligned}[m]
	 [A,B] &= AB - BA\\
      [A,s] &= \sigma(A)s\\
    [A,v] &= Av + \delta(A,v)\\
  \end{aligned}
  \qquad\qquad
  \begin{aligned}[m]
    [s,s] &= \kappa(s,s) + \gamma^\varphi(s,s) - X_{\kappa(s,s)}\\
      [v,s] &= \beta^\varphi(v,s) + \sigma(X_v)s\\
   [v,w] &= \alpha(v,w) + \rho(v,w),
  \end{aligned}
\end{equation}
where $A,B \in \fh$, $v,w \in V'$, $s \in
S'$ and 
\begin{equation*}
  \begin{split}
	\alpha(v,w) &= X_v w - X_w v\\
    \delta(A,v) &= [A,X_v] - X_{Av}\\    
		\rho(v,w) &= [X_v,X_w]- X_{\alpha(v,w)} + R(v,w).
  \end{split}
\end{equation*}
We observe that both $\delta: \fh \otimes V' \to \fh$ and
$\rho:\Lambda^2V' \to \fh$ take values in $\fh$.  Moreover the element
$\beta(v,s)=\beta^\varphi(v,s)+\sigma(X_{v})s$ is in $S'$ (and not
$S$) whilst the individual terms may not; similarly the sum
$\gamma(s,s)=\gamma^\varphi(s,s)-X_{\kappa(s,s)}$ is in $\fh$ (and not
$\fso(V)$).

From now on we will identify $\fk$ with the Lie superalgebra structure
defined on the graded subspace $\fh \oplus S' \oplus V'$ of $\fp$ by
\eqref{eq:KSAina}. The grading of the Poincaré superalgebra $\fp$
gives rise to a natural filtration of $\fp$:
\begin{equation*}
  \fp^\bullet : \qquad \fp = \fp^{-2} \supset \fp^{-1} \supset \fp^0
  \supset 0,
\end{equation*}
where $\fp^{-1} = \fso(V) \oplus S$, $\fp^0 = \fso(V)$, and therefore
also to a filtration of $\fk$
\begin{equation*}
  \fk^\bullet : \qquad \fk = \fk^{-2} \supset \fk^{-1} \supset \fk^0
  \supset 0,
\end{equation*}
where $\fk^{-1} = \fh \oplus S'$, $\fk^0 = \fh$.  One checks from the
Lie brackets \eqref{eq:KSAina} that
$[\fk^i, \fk^j] \subset \fk^{i+j}$, so that $\fk^\bullet$ is a
filtered Lie superalgebra. Its associated graded Lie superalgebra
$\fk_\bullet$ has graded pieces $\fk_{-2} = V'$, $\fk_{-1} = S'$,
$\fk_0 = \fh$ and, comparing again with the Lie brackets of $\fk$, we
see that $\fk_\bullet$ is a subalgebra of $\fp$.  Indeed, the maps
$\alpha, \beta, \gamma, \delta, \rho$ all have positive filtration
degree (compare also with equation \eqref{eq:generalbrackets}).

Of course, it is not an arbitrary filtered subdeformation, since some
of its components are prescribed by the supergravity theory via the
definition of Killing spinor (compare also with equation
\eqref{eq:generalbracketsII}).  

In summary, we have proved most of the following

\begin{theorem}
  \label{thm:KSAisFSD}
  The Killing superalgebra $\fk$ is a filtered subdeformation of the
  Poincaré superalgebra and if $\dim \fk_{\bar 1} > 16$ it is a
  realizable filtered subdeformation.  Moreover the Killing ideal
  $\widehat\fk$ is odd-generated realizable.
\end{theorem}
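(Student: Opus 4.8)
The plan is to treat the three assertions in turn. The first --- that $\fk$ is a filtered subdeformation of $\fp$ --- has essentially been established in the discussion preceding the theorem: $\fk$ has been realised as the Lie superalgebra structure \eqref{eq:KSAina} on the graded subspace $\fh\oplus S'\oplus V'\subset\fp$, one checks directly from those brackets that $[\fk^i,\fk^j]\subset\fk^{i+j}$, and since $\alpha,\beta,\gamma,\delta,\rho$ all strictly raise filtration degree the associated graded $\fk_\bullet$ (with $\fk_{-2}=V'$, $\fk_{-1}=S'$, $\fk_0=\fh$) is a graded subalgebra of $\fp$; hence $\fk$ is a filtered deformation of $\fk_\bullet<\fp$. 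So it remains to prove realizability under the high supersymmetry hypothesis, and then odd-generated realizability of the Killing ideal.

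For realizability I would first invoke Lemma~\ref{lem:fil1}: since $\dim S'=\dim\fk_{\bar 1}>16$ we get $V'=V$, so that the brackets \eqref{eq:KSAina} are of the form \eqref{eq:generalbracketsII} with $\varphi:=F_o\in\Lambda^4V^*\cong\Lambda^4V$ and with $X\colon V\to\fso(V)$ the linear map from the chosen splitting. At the level of cochains this says $\imath_*(\alpha+\beta+\gamma)-\imath^*(\beta^\varphi+\gamma^\varphi)=\partial X$, which one checks straight from \eqref{eq:Spencer1} using that the Poincaré brackets $[V,V]$ and $[V,S]$ vanish; this is condition~(i) of Definition~\ref{def:realizable} with this $\varphi$. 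It then remains to see that $\varphi=F_o$ is admissible. The $\fh$-invariance $A\cdot F_o=0$ has already been observed, since the sections making up $\fh$ satisfy $\eL_\xi F=0$ and vanish at $o$. For the closedness condition~\eqref{eq:Adm} I would fix $v_0,\dots,v_4\in V$ and --- using $V'=V$ --- choose for each $i$ the Killing vector field $\xi_i\in\fk_{\bar 0}$ corresponding under the splitting to $v_i$, so that $\xi_i|_o=v_i$ and $\nabla\xi_i|_o=-X_{v_i}$. Since $\nabla$ is torsion-free, the vector-field bracket then satisfies $[\xi_i,\xi_j]|_o=X_{v_i}v_j-X_{v_j}v_i=\alpha(v_i,v_j)$; feeding the $\xi_i$ into the global formula for $dF$ and using $\eL_{\xi_i}F=0$ to cancel all directional-derivative terms leaves $\sum_{i<j}(-1)^{i+j}F([\xi_i,\xi_j],\dots)=-\,dF(\xi_0,\dots,\xi_4)$, which vanishes since $F$ is closed; evaluating at $o$, this is exactly $d\varphi(v_0,\dots,v_4)=0$. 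Hence $\varphi=F_o$ is admissible and $\fk$ is realizable.

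For the Killing ideal $\widehat\fk=\widehat\fk_{\bar 0}\oplus\widehat\fk_{\bar 1}$ with $\widehat\fk_{\bar 1}=\fk_{\bar 1}$ and $\widehat\fk_{\bar 0}=[\fk_{\bar 1},\fk_{\bar 1}]$, odd-generatedness is immediate: $\widehat\fk_{\bar 0}=[\widehat\fk_{\bar 1},\widehat\fk_{\bar 1}]$. For realizability, $\widehat\fk$ is an ideal, hence a subalgebra of $\fk$, and with the inherited filtration $\widehat\fk^i=\widehat\fk\cap\fk^i$ its associated graded $\widehat\fk_\bullet$ is a graded subalgebra of $\fk_\bullet<\fp$ with odd part $S'$, hence highly supersymmetric, so that $\widehat\fk_{-2}=V$ by Lemma~\ref{lem:fil1}. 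Thus $\widehat\fk_\bullet=\overline\fh\oplus S'\oplus V$ with $\overline\fh=\widehat\fk\cap\fh$ is a highly supersymmetric graded subalgebra of $\fk_\bullet$ closed under the brackets of $\fk$, and by the observation recorded after \eqref{eq:generalbracketsII} it inherits a natural structure of realizable filtered subdeformation, realized by the same admissible $4$-form $\varphi=F_o$. So $\widehat\fk$ is realizable, and therefore odd-generated realizable.

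The one genuinely non-formal step is the passage, in the second paragraph, from the geometric identity $dF=0$ to the algebraic closedness~\eqref{eq:Adm}; this is where the high supersymmetry hypothesis enters, via Lemma~\ref{lem:fil1} forcing $V'=V$ so that every tangent vector at $o$ is the value of a Killing vector field, with the $F$-invariance $\eL_\xi F=0$ and the identity $[\xi_i,\xi_j]|_o=\alpha(v_i,v_j)$ supplying the rest. All the remaining steps are bookkeeping or direct appeals to Section~\ref{sec:filt-def}.
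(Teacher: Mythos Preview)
Your proof is correct and follows essentially the same approach as the paper's. The paper's own proof is very terse: it remarks that property~(i) of Definition~\ref{def:realizable} is immediate from the form of the brackets, and for property~(ii) simply observes that $dF$, being $\fk_{\bar 0}$-invariant, is (locally) determined by its value at $o$---implicitly invoking the standard fact that on a locally homogeneous space the exterior derivative of an invariant form is computed by the Chevalley--Eilenberg formula~\eqref{eq:Adm}. You instead verify this directly via the Cartan formula on the Killing vector fields $\xi_i$, using $\eL_{\xi_i}F=0$ and $[\xi_i,\xi_j]|_o=\alpha(v_i,v_j)$; this is the same argument spelled out in coordinates. One small phrasing issue: the directional-derivative terms do not literally ``cancel'' under $\eL_{\xi_i}F=0$ but rather combine with the bracket terms to flip the overall sign (as your formula $\sum_{i<j}(-1)^{i+j}F([\xi_i,\xi_j],\dots)=-\,dF$ correctly records); the conclusion is unaffected. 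Your treatment of $\widehat\fk$ also makes explicit what the paper leaves as ``from where it will follow'', by invoking the observation after~\eqref{eq:generalbracketsII}.
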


\begin{proof}
  It remains to prove that $\fk$ is realizable if
  $\dim\fk_{\bar 1}>16$, from where it will follow that $\widehat\fk$
  is odd-generated and realizable.  Property~(i) of
  Definition~\ref{def:realizable} is immediate, whereas property~(ii)
  follows from the fact that the exterior derivative $dF$ of a
  $\fk_{\bar 0}$-invariant $4$-form $F$ is a $\fk_{\bar 0}$-invariant
  $5$-form, hence (locally) determined by its value $(dF)_o$ at
  $o\in M$.
\end{proof}

\subsection{Highly supersymmetric lorentzian spin manifolds}
\label{sec:admfilsubdefII}

We will now restrict to the highly supersymmetric case and show that
any realizable filtered subdeformation
$\fg=\fg_{\bar 0}\oplus\fg_{\bar 1}$ of $\fp$ can be realised as (a
subalgebra of) the Killing superalgebra of a homogeneous
$(M=G_{\bar 0}/H,g,F)$.  To this end, it is actually more natural to
assume $\fg$ to be \emph{anti}-isomorphic to a realizable filtered
subdeformation; in other words, in this section $\fg$ has the opposite
Lie brackets to those in
equations~\eqref{eq:generalbrackets}~and~\eqref{eq:generalbracketsII}.

We first need to recall some basic definitions. Let $G$ be a connected
Lie supergroup with Lie superalgebra $Lie(G)=\fg$. We consider it as a
super Harish-Chandra pair \cite{MR0580292, MR760837, MR2640006}, a
pair $G=(G_{\bar 0},\fg)$ consisting of a connected Lie group
$G_{\bar 0}$ with Lie algebra $Lie(G_{\bar 0})=\fg_{\bar 0}$ and a Lie
superalgebra $\fg=\fg_{\bar 0}\oplus\fg_{\bar 1}$ admitting an
\emph{adjoint representation}, i.e., a morphism of Lie groups
\begin{equation}
\label{eq:adjoint}
\Ad:G_{\bar 0}\longrightarrow\GL(\fg)
\end{equation}
such that $\ad(x)y=\frac{d}{dt}|_{t=0}\Ad_{\exp(tx)}y$ for all
$x\in\fg_{\bar 0}$ and $y\in\fg$.  In particular $V'=V$ and the
analytic subgroup $H$ of $G_{\bar 0}$ with Lie algebra $Lie(H)=\fh$
acts orthogonally on $V\cong \fg_{\bar 0}/\fh$ via the natural
representation
\begin{equation}
  \label{eq:adjoint2}
  \Ad:H\longrightarrow\SO(V) 
\end{equation}
induced by \eqref{eq:adjoint}. 

If $H$ is closed in $G_{\bar 0}$, then $M=(G_{\bar 0}/H,g)$ is an
$11$-dimensional lorentzian homogeneous manifold, where $g$ is the
$G_{\bar 0}$-invariant lorentzian metric on $M$ corresponding to the
$H$-invariant inner product $\eta$ on $V$. Consider the
$\SO(V)$-bundle $P$ on $M$ of oriented orthonormal frames of
$(M,g)$. We have
\begin{equation*}
  P\cong G_{\bar 0}\times_{H} \SO(V) 
\end{equation*}
and $TM\cong P\times_{SO(V)} V\cong G_{\bar 0}\times_{H} V$. In
particular the vector fields on $M$ are identified with the
$H$-equivariant maps $\xi:G_{\bar 0}\to V$.

Any lift of the adjoint representation \eqref{eq:adjoint2} to the spin
group $\Spin(V)$ --- i.e., any homomorphism $H\to\Spin(V)$ such that the
diagram
\begin{equation}
\label{eq:liftspin}
\begin{CD}
H @> >> \Spin(V) \\
@|             @VV \sigma V    \\
H     @>\Ad >>    \SO(V) 
\end{CD}
\end{equation}
commutes --- allows us to define a spin structure
$Q=G_{\bar 0}\times_{H} \Spin(V)$ on $(M,g)$, usually referred to as the
homogeneous spin structure associated to the lift \eqref{eq:liftspin}
\cite{MR1164864}. If $G_{\bar 0}$ is simply connected, the homogeneous
spin structures are in one-to-one correspondence with the spin
structures \cite{MR1166019}. Now, since $\fa$ is transitive and
fundamental, any element $A\in\fh$ is uniquely determined by its
action on $\fg_{\bar 1}\simeq S'\subseteq S$ and it is not difficult
to see that the restriction of \eqref{eq:adjoint} to $H$ and
$\fg_{\bar 1}$ determines a unique lift
$\Ad:H\longrightarrow \Spin(V)$.  We call any triple
$(M=G_{\bar 0}/H,g,Q)$ with $Q$ determined by \eqref{eq:adjoint} as
above a \emph{homogeneous lorentzian spin manifold associated with}
$\fg$. For an analogous discussion in the special case of reductive
homogeneous manifolds with $S'=S$ we refer the reader to
\cite[\S\S5.1-2]{MR2640006} and \cite{ALOKilling}.  The spin bundle
on $M$ is $\$=Q\times_{\Spin(V)}S\cong G_{\bar 0}\times_{H} S$ and the
spinor fields on $M$ are identified with the $H$-equivariant maps
$\varepsilon:G_{\bar 0}\to S$.

Finally, it is often convenient to work on $G_{\bar 0}$ through the
natural projection $\pi:G_{\bar 0}\to M=G_{\bar 0}/H$.  For instance
invariant affine connections on $M=G_{\bar 0}/H$ are known to be in a
one-to-one correspondence with Nomizu maps; that is, linear maps
\begin{equation*}
  L:\fg_{\bar 0}\to\mathfrak{gl}(V)
\end{equation*}
satisfying \cite{MR1393941}:
\begin{enumerate}[label=(\roman*)]
\item $L(A)=\ad(A)$ for all $A\in\fh$; and
\item $L$ is $H$-equivariant.
\end{enumerate}
Let us consider the natural projection from $\fg_{\bar 0}$ to
$V\cong\fg_{\bar 0}/\fh$ and trivially extend $\eta$ to the
$H$-invariant symmetric bilinear map
$(-,-):\fg_{\bar 0}\otimes \fg_{\bar 0}\to\mathbb R$ with kernel
$\fh$, and let $U$ be the symmetric bilinear map on $\fg_{\bar 0}$
with values into $V$ uniquely determined by
\begin{equation*}
  2(U(x,y),z)=(x,[z,y])+([z,x],y),
\end{equation*}
where $x,y,z\in\fg_{\bar 0}$. It is not difficult to see that the
operator $\widetilde L:\fg_{\bar 0}\to\Hom(\fg_{\bar 0},V)$ given by
\begin{equation*}
  \widetilde L(x)y:=\tfrac12 [x,y]\!\!\!\!\!\mod\fh + U(x,y)
\end{equation*}
factors through a Nomizu map $L:\fg_{\bar 0}\to\mathfrak{gl}(V)$ which
satisfies
\begin{enumerate}[label=(\roman*),start=3]
\item $\im L\subseteq\fso(V)$;
\item $\widetilde L(x)y-\widetilde L(y)x-[x,y]\!\!\!\mod\fh=0$ for all
  $x,y\in\fg_{\bar 0}$.
\end{enumerate}
Indeed this is the Nomizu map associated to the Levi-Civita connection
of (M,g) (cf. \cite[Theorem 3.3]{MR1393941} for the case of reductive
homogeneous manifolds).

The Levi-Civita covariant derivative can be easily described, at least
locally. Let $\xi_i:G_{\bar 0}\to V\cong\fg_{\bar 0}/\fh$ be (locally
defined) vector fields on $M$, $i=1,2$, and choose (locally defined)
vector fields $\widetilde\xi_i:G_{\bar 0}\to \fg_{\bar 0}$ on
$G_{\bar 0}$ such that $\xi_i$ is $\pi$-related to $\widetilde\xi_i$,
i.e., such that $\pi_*(\widetilde\xi_i)=\xi_i$ for $i=1,2$.  Then
\begin{equation}
  \label{eq:exform}
  \nabla_{\xi_1}\xi_2=\pi_*(\widetilde\xi_1(\widetilde\xi_2) +
  \widetilde L (\widetilde\xi_1)(\widetilde\xi_2)),
\end{equation}
where $\widetilde\xi_1(\widetilde\xi_2)$ is the derivative of
$\widetilde\xi_2$ along $\widetilde\xi_1$ and $\widetilde L$ acts as
usual at any fixed $g\in G_{\bar 0}$. For more details, we refer the
reader to e.g. \cite[\S4]{MR2640006}.

We are now ready to state our main result, which subsumes
Theorem~\ref{thm:KSAisFSD}.

\begin{theorem}
  \label{thm:firstmain}
  Let $(M,g,F)$ be an $11$-dimensional lorentzian spin manifold
  endowed with a closed $F\in\Omega^4(M)$ and $\fk=\fk_{\bar
    0}\oplus\fk_{\bar 1}$ the associated Killing superalgebra.  If
  $\dim\fk_{\bar 1}>16$ then $(M,g,F)$ is locally homogeneous and the
  Killing superalgebra $\fk$ (resp. Killing ideal $\widehat\fk$) is a
  (resp. odd-generated) realizable filtered subdeformation of
  $\fp$.

  Conversely, let $\fg=\fg_{\bar 0}\oplus\fg_{\bar 1}$ be (the
  opposite Lie superalgebra to) a realizable filtered subdeformation
  of $\fp$, with corresponding $11$-dimensional homogeneous lorentzian
  spin manifold $(M=G_{\bar 0}/H,g,Q)$. Then there exist
  \begin{enumerate}[label=(\arabic*)]
  \item a $G_{\bar 0}$-invariant closed $4$-form $F\in\Omega^4(M)$;
  \item an (anti)embedding $\Phi:\fg\to\fk$ of realizable filtered
    subdeformations of $\fp$ from $\fg$ in the Killing superalgebra
    $\fk$ of $(M,g,F)$. If $\fg$ is odd-generated realizable, then
    $\Phi(\fg)\subseteq\widehat\fk$.
  \end{enumerate}
  In particular $\dim\fk_{\bar 1}>16$. 
\end{theorem}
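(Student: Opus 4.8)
The forward implication is, for the most part, a restatement of Theorem~\ref{thm:KSAisFSD}, so the only genuinely new point is local homogeneity. Since $M$ is connected and every element of $\fk$ is a $D$-parallel section, evaluation at a point is injective; in particular $\ev_o^{\bar 1}:\fk_{\bar 1}\to S=\$_o$ is injective, so $\dim S'=\dim\fk_{\bar 1}>16$ at \emph{every} $o\in M$. The Dirac current $\kappa(\varepsilon,\varepsilon)$ of a Killing spinor $\varepsilon$ is the vector part of $[\varepsilon,\varepsilon]\in\fk_{\bar 0}$, hence a Killing vector whose value at $o$ is $\kappa(\ev_o^{\bar 1}\varepsilon,\ev_o^{\bar 1}\varepsilon)$; by the linear-algebra fact behind Lemma~\ref{lem:fil1} these currents already satisfy $\kappa(\odot^2 S')=V=T_oM$, so the Killing vectors of $\fk$ span $T_oM$ for every $o$, which is precisely local homogeneity. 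The remaining assertions (that $\fk$ is a filtered subdeformation, realizable if $\dim\fk_{\bar 1}>16$, and $\widehat\fk$ odd-generated realizable) are Theorem~\ref{thm:KSAisFSD}.

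For the converse, let $\varphi\in\Lambda^4 V$ be an admissible form as provided by Definition~\ref{def:realizable}: it is $\fh$-invariant, satisfies \eqref{eq:WAdm}, and obeys the closedness condition \eqref{eq:Adm}. Since $H$ is connected, $\varphi$ is $\Ad(H)$-invariant, so it defines a unique $G_{\bar 0}$-invariant $4$-form $F\in\Omega^4(M)$ with $F_o=\varphi$, and by the discussion accompanying \eqref{eq:Adm} this condition is exactly the statement that $F$ is closed. Thus $(M,g,F)$ is an $11$-dimensional lorentzian spin manifold carrying a closed $4$-form and has an associated Killing superalgebra $\fk$. Note that $\beta^F$ and $\gamma^F$ of \eqref{eq:theta} reduce at $o$, under the identification $\$_o\cong S$, to $\beta^\varphi$ and $\gamma^\varphi$ of \eqref{eq:thetatoo}.

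The heart of the argument is the construction of Killing spinors. Identifying spinor fields on $M$ with $H$-equivariant maps $G_{\bar 0}\to S$, set $\varepsilon_s(g)=\Ad(g)^{-1}s$ for $s\in\fg_{\bar 1}=S'$; this takes values in $S'$, and since the homogeneous spin structure $Q$ was built so that the restriction of $\Ad$ to $H$ agrees on $\fg_{\bar 1}=S'$ with the spinor action, $\varepsilon_s$ is a genuine section of $\$$ with $\varepsilon_s(o)=s$. I claim each $\varepsilon_s$ is a Killing spinor, and \textbf{this is the step I expect to be the main obstacle.} As $g$, $F$ and $Q$ are $G_{\bar 0}$-invariant, both sides of $\nabla_Z\varepsilon_s=\beta^F(Z,\varepsilon_s)$ are $G_{\bar 0}$-equivariant and tensorial in $Z$, so it suffices to verify the identity at $o$ on vectors $Z=\eta^*_o$, $\eta\in\fg_{\bar 0}$, which span $T_oM$ because $G_{\bar 0}$ acts transitively. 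There $\nabla_{\eta^*}\varepsilon_s|_o$ is computed from the Levi-Civita Nomizu map and the obvious spinorial counterpart of formula \eqref{eq:exform}, while differentiating $\Ad(g)^{-1}s$ along the flow contributes the term $-[\eta,s]_{\fg}$; using $\beta^F(\eta^*,\varepsilon_s)|_o=\beta^\varphi(\eta^*_o,s)$ together with the structural relations \eqref{eq:generalbracketsII} (recall $V'=V$ by Lemma~\ref{lem:fil1}) and the Jacobi identity of $\fg$ — which is what pins down the components $\delta$ and $\rho$ — the identity collapses to an equality that holds identically. This is a finite but delicate computation, essentially a matching of the homogeneous-geometry data of $(M,g)$ against the algebraic relations defining a realizable filtered subdeformation.

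Granting this, define $\Phi:\fg\to\fk$ by sending $\eta\in\fg_{\bar 0}$ to the $F$-preserving Killing vector $(\eta^*,-\nabla\eta^*)\in\fk_{\bar 0}$ and $s\in\fg_{\bar 1}$ to $\varepsilon_s\in\fk_{\bar 1}$. On the even part this is, up to the usual sign, a Lie algebra homomorphism, because $\xi\mapsto(\xi,-\nabla\xi)$ intertwines the vector-field bracket with the bracket \eqref{eq:KSA} of $\fk_{\bar 0}$ (via the standard identity $-\nabla[\xi,\zeta]=[\nabla\xi,\nabla\zeta]+R(\xi,\zeta)$ for Killing fields); on the remaining brackets the homomorphism property follows from $\varepsilon_s$ being Killing — so that \eqref{eq:KSA} applies and $[\varepsilon_s,\varepsilon_s]$ evaluates at $o$ to $(\kappa(s,s),\gamma^\varphi(s,s))$ — together with the reductions of $\beta^F,\gamma^F$ at $o$ and the relations \eqref{eq:generalbracketsII}; the signs make $\Phi$ an \emph{anti}homomorphism, which is why $\fg$ was taken to be the opposite algebra. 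The map $\Phi$ respects the $\ZZ_2$-grading and is injective on each graded piece: $\ev_o$ is injective on $\fk_{\bar 1}$, while if $\Phi(\eta)=0$ then $\eta^*\equiv 0$, so $\bar\eta=\eta^*_o=0$, i.e.\ $\eta\in\fh$, and then $-\nabla\eta^*|_o$ recovers the image of $\eta$ in $\fso(V)$, forcing $\eta=0$. Finally $\Phi$ is filtration-compatible — $\Phi(\fg^0)\subseteq\fk^0$, $\Phi(\fg^{-1})\subseteq\fk^{-1}$ — and the induced map of associated graded superalgebras is, by construction ($\varepsilon_s(o)=s$, $\eta^*_o=\bar\eta$, $-\nabla(A^*)|_o=A$), the identity of $\fh\oplus S'\oplus V\subset\fso(V)\oplus S\oplus V$; hence (Definition~\ref{def:isofil}) $\Phi$ is an (anti)embedding of filtered subdeformations onto the subalgebra $\Phi(\fg)\subseteq\fk$, which inherits a realizable filtered-subdeformation structure as in Section~\ref{sec:filt-def}. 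If $\fg$ is odd-generated realizable then $\fg_{\bar 0}=[\fg_{\bar 1},\fg_{\bar 1}]$ gives $\Phi(\fg)=[\Phi(\fg_{\bar 1}),\Phi(\fg_{\bar 1})]\oplus\Phi(\fg_{\bar 1})\subseteq[\fk_{\bar 1},\fk_{\bar 1}]\oplus\fk_{\bar 1}=\widehat\fk$. Since $\fk_{\bar 1}\supseteq\Phi(\fg_{\bar 1})\cong S'$ and $\fa$ is highly supersymmetric, $\dim\fk_{\bar 1}>16$, which also lets us invoke the first part of the theorem for this $(M,g,F)$.
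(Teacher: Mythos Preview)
Your proposal is correct and follows essentially the same route as the paper: invoke local homogeneity and Theorem~\ref{thm:KSAisFSD} for the forward direction, and for the converse build $F$ from an admissible $\varphi$, define $\varepsilon_s(g)=\Ad(g)^{-1}s$ and $\Phi(\eta)=\eta^*$, and verify that $\Phi$ is an (anti)embedding of filtered subdeformations. The one place where the paper is sharper is the step you flag as ``the main obstacle'': rather than leaving the Killing-spinor check as an unspecified ``delicate computation'', the paper identifies the Levi-Civita Nomizu map explicitly as $L(A)=\ad(A)$ for $A\in\fh$ and $L(v)=-X_v$ for $v\in V$ (equation~\eqref{eq:LC}), after which $\nabla_\xi\varepsilon^{(s)}=\beta^\varphi(\xi,\varepsilon^{(s)})$ drops out in three lines---the $-\ad$ term from differentiating $\Ad(g)^{-1}s$ combines with $\sigma(L)=\sigma(-X_v)$ and the opposite bracket $[v,s]_{\fg}=-\beta^\varphi(v,s)-\sigma(X_v)s$ to cancel the $X_v$ contributions exactly; no appeal to the Jacobi identity (or to $\delta$, $\rho$) is needed at this stage.
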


\begin{proof}
  The first statement is a direct consequence of the local homogeneity
  theorem in \cite{FigueroaO'Farrill:2012fp} and
  Theorem~\ref{thm:KSAisFSD} in Section~\ref{sec:admfilsubdef}.

  Let now $\fg=\fg_{\bar 0}\oplus\fg_{\bar 1}$ be the opposite Lie
  superalgebra to a realizable filtered deformation of a graded
  subalgebra $\fa=\fh\oplus S'\oplus V$ of $\fp$ and
  $(M=G_{\bar 0}/H,g,Q)$ an associated homogeneous lorentzian spin
  manifold.  Since the Lie brackets of $\fg$ are the opposite of those
  in \eqref{eq:generalbrackets} and \eqref{eq:generalbracketsII}, we
  have that the map $L:\fg_{\bar 0}\to\fso(V)$,
  \begin{equation}
    \label{eq:LC}
    \begin{split}
      L(A)&=\ad(A)\qquad(A\in\fh)\\
      L(v)&=-X_v\qquad(v\in V),
    \end{split}
  \end{equation}
  satisfies properties (i)-(iv) and therefore is the Nomizu map
  corresponding to the Levi-Civita connection of $(M,g)$.

  Consider the fundamental vector field    
  \begin{equation}
    \label{eq:Killeven}
    \xi^{(x)}:G_{\bar 0}\to V,\qquad \xi^{(x)}(g)=(\Ad_{g^{-1}}x)\!\!\!\mod\fh
  \end{equation}
  associated to $x=(A,v)\in \fg_{\bar 0}\cong \fh\oplus V$. Clearly
  $\xi^{(x)}$ is a Killing vector field and using equations
  \eqref{eq:exform}, \eqref{eq:LC} and \eqref{eq:Killeven}, it can be
  checked directly that the value of the section
  $(\xi^{(x)},-\nabla\xi^{(x)})$ of $\eE_{\bar 0} = TM \oplus
  \fso(TM)$ at $o=eH\in M$ is $(v,A+X_v)\in V\oplus\fso(V)$.  This
  gives the realisation of the abstract Lie algebra $\fg_{\bar 0}$ as
  subalgebra of the algebra of Killing vector fields on $M$:
  \begin{equation}
    \label{eq:anti1}
    [\xi^{(x)},\xi^{(y)}]=-\xi^{([x,y])},
  \end{equation}
  for all $x,y\in\fg_{\bar 0}$.  Given any admissible
  $\varphi\in\Lambda^4 V^*\cong \Lambda^4 T^*_oM$, we let
  $F\in\Omega^4(M)$ be the unique $G_{\bar 0}$-invariant closed
  $4$-form with value $F_o=\varphi$ at $o\in M$.  As for the elements
  of $\fg_{\bar 0}$, every $s\in \fg_{\bar 1}\cong S'\subseteq S$ has
  an associated spinor field
  \begin{equation}
    \label{eq:Killodd}
    \varepsilon^{(s)}:G_{\bar 0}\to S,\qquad\varepsilon^{(s)}(g)=\Ad_{g^{-1}}s.
  \end{equation}
  For any vector field $\xi:G_{\bar 0}\to V$ with $\pi$-related
  $\widetilde\xi:G_{\bar 0}\to \fg_{\bar 0}$ we compute
  \begin{equation*}
    \begin{split}
      \nabla_{\xi}\varepsilon^{(s)} &=
      \pi_*(\widetilde\xi(\varepsilon^{(s)}) + \sigma(\widetilde
      L(\widetilde\xi))(\varepsilon^{(s)}))\\
      &=\pi_*(-\ad(\widetilde\xi)(\varepsilon^{(s)})+\sigma(\widetilde
      L(\widetilde\xi))(\varepsilon^{(s)}))\\
      &=\beta^{\varphi}(\xi,\varepsilon^{(s)})
    \end{split}
  \end{equation*}
  where the last equality follows from the Lie brackets of
  $\fg_{\bar 0}$ and \eqref{eq:LC}. This shows that
  $\varepsilon^{(s)}$ is a Killing spinor, for all $s\in S'$.

  The required map $\Phi:\fg\to\fk$ is defined by:
  \begin{equation*}
      \Phi(x)=\xi^{(x)} \qquad\text{and}\qquad \Phi(s)=\varepsilon^{(s)},
  \end{equation*}
  where $x=(v,A)\in \fg_{\bar 0}$ and $s\in\fg_{\bar 1}$. Note that
  \begin{equation*}
    \begin{split}
      \eL_{\xi^{(x)}} \varepsilon^{(s)} &= \nabla_{\xi^{(x)}}
      \varepsilon^{(s)} - \sigma(\nabla\xi^{(x)}) \varepsilon^{(s)}\\
      &= \beta^{\varphi}(\xi^{(x)},\varepsilon^{(s)}) -
      \sigma(\nabla\xi^{(x)})\varepsilon^{(s)}
    \end{split}
  \end{equation*}
  so that $\eL_{\xi^{(x)}}\varepsilon^{(s)}$ is the Killing spinor on
  $M$ with value $\beta^{\varphi}(v,s)+\sigma(A)s+\sigma(X_v)s$ at
  $o\in M$. In other words
  \begin{equation}
    \label{eq:anti2}
    [\xi^{(x)},\varepsilon^{(s)}]=-\varepsilon^{([x,s])}
  \end{equation}
  and one similarly checks
  \begin{equation}
    \label{eq:anti3}
    [\varepsilon^{(s)},\varepsilon^{(s)}]=-\xi^{([s,s])}.
  \end{equation}
  Identities \eqref{eq:anti1}, \eqref{eq:anti2} and \eqref{eq:anti3}
  show that $\Phi$ is a Lie superalgebra anti-homomorphism. The fact
  that $\Phi$ is an (anti)embedding of realizable filtered
  subdeformations of $\fp$ is immediate, as well as the last two
  claims of the theorem.
\end{proof}

\begin{remark}
  The $G_{\bar 0}$-invariant closed $F\in\Omega^4(M)$ associated to a
  realizable filtered deformation in Theorem \ref{thm:firstmain} is a
  priori not unique, since it appears to depend on the choice of an
  admissible $\varphi\in\Lambda^4 V$ (recall Definition
  \ref{def:realizable}). However, as already advertised, we will
  obtain $\ker \imath^*=0$ in Corollary \ref{cor:final!}, so that
  $\varphi$ (and $F$) are unique.
\end{remark}

\begin{remark}
  The Killing superalgebra $\fk$ of the homogeneous lorentzian spin
  manifold $(M,g,Q,F)$ associated to $\fg$ in Theorem
  \ref{thm:firstmain} is strictly larger than $\fg$ in general. (The
  analogous statement holds for Killing ideals $\widehat\fk$ and
  odd-generated realizable $\fg$.) We do not know of general conditions on
  $\fg$ under which equality actually holds.
\end{remark}

Theorem \ref{thm:firstmain} and the above remarks say that Killing
superalgebras (resp. Killing ideals) of highly supersymmetric
$(M,g,F)$, up to local equivalence, are in a one-to-one correspondence
with \emph{maximal} realizable (resp. odd-generated realizable) filtered
subdeformations of $\fp$, up to isomorphism of filtered
subdeformations.

In Sections~\ref{sec:class-probl-kill} and \ref{sec:field-eqns} below,
we set up the classification problem for the Killing superalgebras of
highly supersymmetric $11$-dimensional supergravity backgrounds as the
classification problem of realizable filtered subdeformations of
$\fp$. In particular, we show that high supersymmetry implies that
the Einstein and Maxwell equations are satisfied; that is, the
homogeneous lorentzian spin manifold reconstructed in
Theorem \ref{thm:firstmain} from a realizable filtered subdeformation
is automatically a supergravity background.

In this regard, we remark that one needs the full datum of a
realizable filtered subdeformation of $\fp$ to reconstruct the
supergravity background unambiguously; the assignment of a Lie
superalgebra is not sufficient in general. For instance there is an
example of a Lie superalgebra with (at least) two \emph{non-isomorphic}
structures of realizable filtered subdeformation of $\fp$: namely, the
Killing superalgebra of a supergravity background with $24$
supercharges described in \cite{Michelson:2002wa} and shown in
\cite{Fernando:2004jt} to be isomorphic \emph{as an abstract Lie
superalgebra} to a subalgebra of the Killing superalgebra of the
maximally supersymmetric pp-wave of \cite{KG}.

\section{The classification problem for Killing superalgebras}
\label{sec:class-probl-kill}

We have just seen that the Killing superalgebra is a filtered
subdeformation of the Poincaré superalgebra. In the highly
supersymmetric case, Proposition \ref{prop:fil1} applies and the aim
of this section is to improve that result in the case of (odd-generated)
realizable filtered subdeformations.

\subsection{The Jacobi identity of Killing superalgebras}
\label{sec:JacobiIdentities}

The Lie brackets of a Killing superalgebra are
given by equation~\eqref{eq:KSAina} in terms of the following data.

First we have a graded Lie subalgebra $\fa = \fh \oplus S' \oplus V'$
of the Poincaré superalgebra. In particular, this means that
$\fh < \fso(V)$ stabilises both $S' \subseteq S$ and $V' \subseteq V$
and that $\kappa(S',S') \subseteq V'$.  The rest of the data consists
of an $\fh$-invariant $\varphi\in \Lambda^4 V$, $X: V' \to \fso(V)$
(or, more precisely, $V' \to \fso(V)/\fh$) and
$R : \Lambda^2 V' \to \fso(V)$.  In terms of this data, we have the
following Lie brackets on the vector space $\fh \oplus S' \oplus V'$:
\begin{equation}
  \label{eq:preKSA}
  \begin{split}
    [A,B] &= AB - BA\\
    [A,s] &= \sigma(A)s\\
    [A,v] &= Av + [A, X_v] - X_{Av}\\
    [s,s] &= \kappa(s,s) + \gamma^{\varphi}(s,s) - X_{\kappa(s,s)}\\
    [v,s] &= \beta^{\varphi}(v,s) + \sigma(X_v)s \\
    [v,w] &= X_v w - X_w v + [X_v,X_w] - X_{X_v w - X_w v} + R(v,w),
  \end{split}
\end{equation}
where $A,B \in \fh$, $v,w \in V'$ and $s \in S'$. It bears reminding
that the right-hand sides of the Lie brackets in \eqref{eq:preKSA}
take values in $\fh \oplus S' \oplus V'$, but that the individual
terms may not.  For example, $[A,X_v], X_{Av} \in \fso(V)$, but their
difference $[A,X_v] - X_{Av} \in \fh$.  Similarly,
$\gamma^{\varphi}(s,s) - X_{\kappa(s,s)}\in \fh$, but
$\gamma^{\varphi}(s,s), X_{\kappa(s,s)}\in \fso(V)$, and the same
happens with $[X_v,X_w] - X_{X_v w - X_w v} + R(v,w) \in \fh$, even
though $[X_v,X_w], X_{X_v w - X_w v}, R(v,w) \in \fso(V)$. Also
$X_v w - X_w v \in V'$, $\beta^{\varphi}(v,s)+\sigma(X_v)s \in S'$,
but $X_v w \in V$ and $\beta^{\varphi}(v,s), \sigma(X_v)s \in S$.

The only additional conditions come from demanding that the Lie
brackets \eqref{eq:preKSA} do define a Lie superalgebra.  In other
words, they come from imposing the Jacobi identity.  There are ten
components of the Jacobi identity and we summarise the results for
each component in turn.

\subsubsection*{The $[\fh\fh\fh]$ Jacobi}

This is automatically satisfied because $\fh$ is a Lie subalgebra of
$\fso(V)$.

\subsubsection*{The $[\fh\fh S']$ Jacobi}

This is automatically satisfied because the action of $\fh$ on $S'$ is
the restriction to $\fh$ and $S'$ of the spinor representation
$\sigma$ of $\fso(V)$ on $S$.

\subsubsection*{The $[\fh\fh V']$ Jacobi}

This is also automatically satisfied, extending the adjoint action of
$\fh$ on itself to $\fso(V)$.

\subsubsection*{The $[\fh S' S']$ Jacobi}

This is automatically satisfied since
$\fh < \fso(V)\cap\stab(\varphi)$.  Indeed, for $A \in \fh$ and
$s \in S'$,
\begin{equation*}
  \begin{split}
    [A, [s,s]] &= [A, \kappa(s,s) + \gamma^\varphi(s,s) - X_{\kappa(s,s)}]\\
    &= A \kappa(s,s) + [A, \gamma^\varphi(s,s)] - X_{A\kappa(s,s)},
  \end{split}
\end{equation*}
whereas
\begin{equation*}
  2 [[A,s], s] = 2 [\sigma(A)s,s] = 2 \kappa(\sigma(A)s,s) + 2
  \gamma^\varphi(\sigma(A)s,s) - 2 X_{\kappa(\sigma(A)s,s)}.
\end{equation*}
Since $\fh < \fso(V)$, $A \kappa(s,s) = 2 \kappa(\sigma(A)s,s)$, so that the
Jacobi identity is satisfied provided that
\begin{equation*}
  [A, \gamma^\varphi(s,s)] =  2 \gamma^\varphi(\sigma(A)s,s).
\end{equation*}
But $\gamma^\varphi$ only depends on $\fso(V)$-equivariant operations
like Clifford product and Dirac current, and on $\varphi$. It follows
that $\gamma^\varphi$ is equivariant under
$\fso(V) \cap \stab(\varphi)$, and by restriction also under $\fh$.

\subsubsection*{The $[\fh S' V']$ Jacobi}

In this case, for $A \in \fh$, $v \in V'$ and $s \in S'$,
\begin{equation*}
  [A,[v,s]] - [[A,v],s] - [v,[A,s]] = \sigma(A) \beta^\varphi(v,s) -
  \beta^\varphi(Av,s) - \beta^\varphi(v,\sigma(A)s).
\end{equation*}
The Jacobi identity is again satisfied since $\fh < \fso(V) \cap
\stab(\varphi)$.

\subsubsection*{The $[\fh V' V']$ Jacobi}

A somewhat lengthy calculation shows that, for all $A \in \fh$ and
$v,w \in V'$,
\begin{equation*}
  [A, [v,w]] - [[A,v],w] - [v,[A,w]] = [A,R(v,w)] - R(Av,w) - R(v,Aw).
\end{equation*}
It follows that the Jacobi identity is satisfied if and only if
\begin{equation}
  \label{eq:Requiv}
 R : \Lambda^2 V' \to \fso(V) \quad\text{is $\fh$-equivariant}.
\end{equation}

\subsubsection*{The $[S'S'S']$ Jacobi}

The Jacobi identity says that $[[s,s],s] = 0$ for all $s \in S'$, and
it expands to
\begin{equation*}
 \sigma(\gamma^\varphi(s,s)) s = - \beta^\varphi(\kappa(s,s),s),
\end{equation*}
for all $s\in S'$. This identity is known to be automatically
satisfied for all $s\in S$,
cf. \cite[Proposition~7]{Figueroa-O'Farrill:2015efc}.

\subsubsection*{The $[S'S'V']$ Jacobi}

After another somewhat lengthy calculation, and letting 
\begin{equation*}
  \beta^\varphi_v(s) = \beta^\varphi(v,s)
\end{equation*}
for all $v\in V$ and $s\in S$, the Jacobi identity is
equivalent to
\begin{equation}
  \label{eq:ssvJac}
  \begin{split}
    \tfrac12 R(v,\kappa(s,s))w &=
    \kappa((X_v\beta^\varphi)(w,s),s) +
    \gamma^\varphi(\beta^{\varphi}_{v}s,s)(w)\\
    &=\kappa((X_v\beta^\varphi)(w,s),s) -
    \kappa(\beta^\varphi_v(s) , \beta^\varphi_w(s)) -
    \kappa(\beta^\varphi_w \beta^\varphi_v(s),s),
  \end{split}
\end{equation}
for all $s \in S'$, $v \in V'$ and $w \in V$. 

\begin{remark}
  If $\dim S'>16$, then by local homogeneity $V'=V$, and equation
  \eqref{eq:ssvJac} expresses the curvature operator
  $R: \Lambda^2 V\to \fso(V)$ in terms of $X$ and $\varphi$. By a
  further contraction, this determines the Ricci tensor and as we will
  show in Section~\ref{sec:field-eqns}, it implies the bosonic field
  equations of $11$-dimensional supergravity.
\end{remark}

\subsubsection*{The $[S'V'V']$ Jacobi}

This Jacobi identity expands to the following condition
\begin{equation}
  \label{eq:svvJac}
  R(v,w)s = (X_v\beta^\varphi)(w,s) - (X_w\beta^\varphi)(v,s) +
  [\beta^\varphi_v,\beta^\varphi_w](s),
\end{equation}
for all $s \in S'$ and $v,w \in V'$.

\subsubsection*{The $[V'V'V']$ Jacobi}

Finally the last component of the Jacobi identity expands to two
Bianchi-like identities, one algebraic
\begin{equation}
  \label{eq:algBianchi}
  R(u,v)w + R(v,w) u + R(w,u) v = 0,
\end{equation}
and one
differential
\begin{multline}
  \label{eq:difBianchi}
  R(X_u v - X_v u, w) + R(X_v w - X_w v, u) + R(X_w u - X_u w, v)\\
  = [X_w,R(u,v)] + [X_u,R(v,w)] + [X_v,R(w,u)],
\end{multline}
for all $u,v,w \in V'$. If $V' = V$, \eqref{eq:algBianchi} is
precisely the algebraic Bianchi identity for $R$, whereas the
differential identity simplifies to
\begin{equation}
  \label{eq:difBianchiVp=V}
  (X_u R)(v,w) + (X_v R)(w,u) + (X_w R)(u,v) = 0.
\end{equation}
(Notice that $X_u \in \fso(V)$, but unless $V' = V$, $R \in
\Hom(\Lambda^2 V', \fso(V))$, which is not an $\fso(V)$-module, but
only an $\fh$-module.)

\subsection{The classification problem for highly supersymmetric
  Killing superalgebras}
\label{sec:class-probl-highly}

Particularly interesting is the highly supersymmetric case, where
$\dim S'> 16$ so that $V'=V$.  In this case, the classification
problem for highly supersymmetric Killing superalgebras breaks up into
two main steps:
\begin{enumerate}
\item classify highly supersymmetric graded subalgebras $\fa = \fh
  \oplus S' \oplus V$ of the Poincaré superalgebra $\fp$;
\item for each such $\fa$, find $\varphi \in (\Lambda^4 V)^\fh$, $R \in
  \Hom(\Lambda^2 V, \fso(V))^\fh$ which is an algebraic curvature tensor (i.e.,
  satisfying the algebraic Bianchi identity~\eqref{eq:algBianchi}) and
  $X: V\to \fso(V)$ (only its image modulo $\fh$ matters) such that:
  \begin{enumerate}[label=(\roman*)]
  \item $\varphi$ is closed, cf. (ii) of
    Definition~\ref{def:realizable};
  \item the right-hand sides of the expressions in \eqref{eq:preKSA}
    take values in $\fh\oplus S'\oplus V$;
  \item the three equations~\eqref{eq:ssvJac}, \eqref{eq:svvJac} and
    \eqref{eq:difBianchiVp=V} are satisfied.
  \end{enumerate}
\end{enumerate}
The Jacobi identity \eqref{eq:ssvJac} determines $R$ in terms of
$\varphi$ and $X$ so that the highly supersymmetric
Killing superalgebra (or, more generally, any realizable filtered
subdeformation of $\fp$) is completely determined by
$(\fh, S', \varphi, X)$.  This result improves Proposition
\ref{prop:fil1} in the case of \emph{realizable} filtered
subdeformations.

Step (1) of the classification problem is too broad and not tied to
the existence of \emph{nontrivial} filtered subdeformation of a given
graded algebra.  We can fare better if we restrict the classification
problem to the Killing ideals.  In the next section, we consider the
odd-generated realizable case and we will show that one can fully specify
Killing ideals in terms of simpler data than $(\fh,S',\varphi, X)$
and, at the same time, modify step (1) by the addition of further
constraints.

\subsection{Killing ideals and Lie pairs}
\label{sec:LiePairs}

To state the main result of this section, we first need to introduce
some preliminary notions.  Let $S$ be the spinor representation of
$\fso(V)$. It is well-known that
\begin{equation}
  \label{eq:decompsymm}
  \odot^2 S\cong\Lambda^1 V\oplus \Lambda^2 V\oplus\Lambda^5 V,
\end{equation}
as $\fso(V)$-modules. This decomposition is unique, since all the
three summands are $\fso(V)$-irreducible and inequivalent, and we may
(and in this section will) consider $\Lambda^q V$ directly as a
subspace of $\odot^2 S$, for $q=1,2,5$. We decompose any element
$\omega\in\odot^2 S$ into
$ \omega=-\tfrac{1}{32}\big(\omega^{(1)}+\omega^{(2)}+\omega^{(5)}\big)$ according to
\eqref{eq:decompsymm}, where $\omega^{(q)}\in\Lambda^q V$ for
$q=1,2,5$. The overall factor of $-\tfrac{1}{32}$ is introduced so that $\omega^{(1)}$ coincides exactly with the Dirac current of $\omega$.

If $S'$ is a given linear subspace of $S$ with $\dim S'>16$, then
$\odot^2 S'\subset \odot^2 S$ projects surjectively on $\Lambda^1 V$,
through the Dirac current operation.  The embedding
\begin{equation*}
  \odot^2 S'\subset \odot^2 S=\Lambda^1V\oplus\Lambda^2
  V\oplus\Lambda^5 V
\end{equation*}
is in general diagonal and one cannot expect  $\odot^2 S'$ to contain
$\Lambda^q V$, not even if $q=1$. This motivates the following.

Let $S'$ be a subspace of $S$, $\dim S'>16$. Then restricting the
Dirac current $\kappa : \odot^2 S \to V$ to $\odot^2 S'$ gives rise to
a short exact sequence:
\begin{equation*}
  \begin{CD} 0 @>>> \fD @>>> \odot^2 S' @>\kappa>> V @>>> 0,
  \end{CD}
\end{equation*}
where $\fD = \fD(S')$ is the \emph{Dirac kernel} of $S'$; that is, the
subspace of $\odot^2 S$ given by
\begin{equation*}
  \begin{split} \mathfrak{D} &= \odot^2 S'\cap (\Lambda^2
    V\oplus\Lambda^5 V)\\ &=\left\{\omega\in\odot^2
      S'~\middle|~\omega^{(1)}=0\right\}.
  \end{split}
\end{equation*}
A splitting of the above short exact sequence --- that is, a linear
map $\Sigma:V\to \odot^2 S'$ such that $\Sigma(v)^{(1)}=v$ for all
$v\in V$ --- is called a \emph{section} associated to $S'$ and we may
write it as $\Sigma(S')$ if we need to specify $S'$.  A section
$\Sigma$ associated to $S'$ always exists and it is unique up to
elements in the Dirac kernel.

Let $(S',\varphi)$ be a pair consisting of a subspace $S'$ of $S$ with
$\dim S'>16$ and $\varphi\in\Lambda^4 V$.

\begin{definition}
  \label{def:envelope}
  \label{def:Lie pair}
  The \emph{envelope} $\fh_{(S',\varphi)}$ of $(S',\varphi)$ is the
  subspace of $\fso(V)$ given by
  \begin{equation*}
    \begin{split}
      \fh_{(S',\varphi)} &= \left\{\gamma^{\varphi}(\omega)
        ~\middle|~\omega\in\mathfrak{D}\right\}\\
      &= \left\{\gamma^{\varphi}(\omega)~\middle|~\omega\in\odot^2
        S'~\text{with}~\omega^{(1)}=0\right\}.
    \end{split}
  \end{equation*}
  The pair $(S',\varphi)$ is called a \emph{Lie pair} if
  \begin{enumerate}[label=(\roman*)]
  \item $A\cdot\varphi=0$ for every $A\in\fh_{(S',\varphi)}$; and
  \item $\sigma(A)s\in S'$ for every $A\in\fh_{(S',\varphi)}$ and $s\in
    S'$.
  \end{enumerate}
\end{definition}

The name ``Lie pair'' is motivated by the following

\begin{lemma}
  \label{lem:liepair}
  The envelope $\fh_{(S',\varphi)}$ of a Lie pair $(S',\varphi)$ is a
  Lie subalgebra of $\fso(V)$.
\end{lemma}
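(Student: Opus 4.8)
The plan is to prove that the envelope $\fh_{(S',\varphi)}=\gamma^{\varphi}(\fD)$ is closed under the commutator bracket of $\fso(V)$; that it is a linear subspace is automatic, since $\gamma^{\varphi}$ is linear and $\fD$ is a linear subspace of $\odot^2 S'$. The single structural fact I would rely on is the equivariance of $\gamma^{\varphi}$: because $\gamma^{\varphi}$ is assembled solely out of the $\fso(V)$-equivariant operations (Clifford multiplication and the Dirac current) together with the fixed tensor $\varphi$, for every $A\in\fso(V)$ and $\omega\in\odot^2 S$ one has
\begin{equation*}
  [A,\gamma^{\varphi}(\omega)] = \gamma^{A\cdot\varphi}(\omega) + \gamma^{\varphi}(\sigma(A)\cdot\omega),
\end{equation*}
where $\sigma(A)\cdot\omega$ denotes the induced $\fso(V)$-action on $\odot^2 S$; in particular, whenever $A\cdot\varphi=0$ this collapses to $[A,\gamma^{\varphi}(\omega)]=\gamma^{\varphi}(\sigma(A)\cdot\omega)$, which is exactly the $\fso(V)\cap\stab(\varphi)$-equivariance of $\gamma^{\varphi}$ already invoked for the $[\fh S'S']$ Jacobi identity in Section~\ref{sec:JacobiIdentities}.

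Given this, I would take $A,B\in\fh_{(S',\varphi)}$ and write $B=\gamma^{\varphi}(\omega)$ for some $\omega\in\fD$, so $\omega\in\odot^2 S'$ with $\omega^{(1)}=0$. By condition~(i) of the definition of a Lie pair, $A\cdot\varphi=0$, whence $[A,B]=\gamma^{\varphi}(\sigma(A)\cdot\omega)$, and it only remains to check that $\sigma(A)\cdot\omega\in\fD$. First, $\sigma(A)$ preserves $S'$ by condition~(ii) of the definition of a Lie pair, so $\sigma(A)\cdot\omega\in\odot^2 S'$. Second, the projection $\odot^2 S\to\Lambda^1 V$ onto the first summand of \eqref{eq:decompsymm} is (a nonzero multiple of) the Dirac current $\kappa$, and $\kappa$ is $\fso(V)$-equivariant; hence $(\sigma(A)\cdot\omega)^{(1)}$ is proportional to $\kappa(\sigma(A)\cdot\omega)=A\cdot\kappa(\omega)$, which vanishes since $\omega^{(1)}=0$. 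Therefore $\sigma(A)\cdot\omega\in\fD$ and $[A,B]=\gamma^{\varphi}(\sigma(A)\cdot\omega)\in\fh_{(S',\varphi)}$, so $\fh_{(S',\varphi)}$ is a subalgebra of $\fso(V)$.

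I do not expect a genuine obstacle here: the whole argument is bookkeeping with the $\fso(V)$-module structures on $\odot^2 S\cong\Lambda^1 V\oplus\Lambda^2 V\oplus\Lambda^5 V$ and on $\fso(V)$. The one point worth flagging is conceptual rather than technical: it is precisely the equivariance of $\gamma^{\varphi}$ under the \emph{full} stabiliser $\fso(V)\cap\stab(\varphi)$, combined with the $\fso(V)$-equivariance of the projection onto $\Lambda^1 V$, that makes conditions~(i) and~(ii) of Definition~\ref{def:envelope} exactly the two hypotheses needed to close the bracket; neither can be dropped.
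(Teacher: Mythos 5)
Your proof is correct and follows essentially the same route as the paper's: both rest on the equivariance of $\gamma^{\varphi}$ under $\fso(V)\cap\stab(\varphi)$ (which conditions~(i) and~(ii) of the Lie-pair definition guarantee applies to $\fh_{(S',\varphi)}$, together with preservation of $S'$), and both reduce to checking that the $\fh_{(S',\varphi)}$-action preserves $\fD$. You have merely spelt out, via the $\fso(V)$-equivariance of $\kappa$ and the preservation of $\odot^2 S'$, the step that the paper compresses into the phrase ``with $A\cdot\omega\in\mathfrak{D}$.''
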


\begin{proof}
  The map $\gamma^{\varphi}:\odot^2 S\rightarrow \fso(V)$ is
  equivariant under $\fso(V)\cap\stab(\varphi)$, hence the restriction
  $\gamma^{\varphi}|_{\odot^2 S'}:\odot^2 S'\rightarrow \fso(V)$ to
  $S'$ is equivariant under
  $\fso(V)\cap\stab(\varphi)\cap\stab(S')$. Now
  $\fh_{(S',\varphi)}\subset\fso(V)\cap\stab(\varphi)\cap\stab(S')$ by
  properties (i) and (ii) of Definition \ref{def:Lie pair}. In
  particular, for any $A\in\fh_{(S',\varphi)}$ and
  $\omega\in\mathfrak{D}$, we have
  $ [A,\gamma^{\varphi}(\omega)]=\gamma^{\varphi}(A\cdot\omega) $,
  with $A\cdot\omega\in\mathfrak{D}$. In other words
  $[\fh_{(S',\varphi)},\fh_{(S',\varphi)}]\subset\fh_{(S',\varphi)}$,
  proving the lemma.
\end{proof}

The following result gives necessary conditions that are satisfied by
any odd-generated realizable filtered subdeformation. We recall that a
realizable $\fg=\fg_{\bar 0}\oplus\fg_{\bar 1}$ is called odd-generated
realizable if in addition $\fg_{\bar 0}=[\fg_{\bar 1},\fg_{\bar 1}]$.

\begin{proposition}
  \label{prop:LiePair}
  Let $\fg$ be a odd-generated realizable filtered subdeformation of $\fp$,
  with associated graded algebra $\fa=\fh\oplus S'\oplus V$. Then the
  associated pair $(S',\varphi)$ is a Lie pair and
  \begin{enumerate}[label=(\arabic*)]
  \item the isotropy $\fh$ equals the envelope of $(S',\varphi)$;
    i.e., $\fh=\fh_{(S',\varphi)}$;
  \item the map $X:V\to\fso(V)$ is determined, up to elements in
    $\fh$, by the identity
    \begin{equation*}
      X=\gamma^\varphi\circ\Sigma,
    \end{equation*}
    where $\Sigma$ is any section associated to $S'$.
  \end{enumerate}  
  In particular $\fg$ is completely determined, up to isomorphisms of
  filtered subdeformations, by the associated Lie pair $(S',\varphi)$.
\end{proposition}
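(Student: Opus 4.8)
The plan is to read the envelope $\fh_{(S',\varphi)}$ and the map $X$ directly off the Lie brackets \eqref{eq:generalbracketsII} of the realizable deformation $\fg$, using that the Dirac current maps $\odot^2 S'$ onto $V$ (Lemma~\ref{lem:fil1}). Here $\varphi$ denotes an admissible $4$-form as in Definition~\ref{def:realizable}; two structural facts about $\fg$ are then available at no cost: since $\fa=\fh\oplus S'\oplus V$ is a graded subalgebra of $\fp$ we have $\sigma(\fh)S'\subseteq S'$, and by Definition~\ref{def:realizable}(ii) the form $\varphi$ is $\fh$-invariant. I would begin by observing that the degree-$2$ part $\gamma(s,s)=\gamma^\varphi(s,s)-X_{\kappa(s,s)}$ of the bracket $[s,s]$ lies in $\fh$; polarising and summing over $\omega=\sum_i s_i\odot t_i\in\odot^2 S'$ gives $\gamma^\varphi(\omega)-X_{\kappa(\omega)}\in\fh$. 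Taking $\omega=\Sigma(v)$ for a section $\Sigma$ (so that $\kappa(\Sigma(v))=v$) yields $X_v-\gamma^\varphi(\Sigma(v))\in\fh$, that is $X\equiv\gamma^\varphi\circ\Sigma\pmod\fh$; and since any two sections differ by a map into the Dirac kernel $\fD$, on which $\gamma^\varphi$ has image $\fh_{(S',\varphi)}\subseteq\fh$, the coset $X\bmod\fh$ does not depend on the choice of $\Sigma$. This proves (2), using only the realizable structure of $\fg$.

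For (1), the key step is to compute $[\fg_{\bar 1},\fg_{\bar 1}]\subseteq\fg_{\bar 0}=\fh\oplus V$. For $\omega=\sum_i s_i\odot t_i\in\odot^2 S'$ the bracket $[\omega]:=\sum_i[s_i,t_i]$ equals, by \eqref{eq:generalbracketsII}, $\bigl(\gamma^\varphi(\omega)-X_{\kappa(\omega)}\bigr)+\kappa(\omega)$, whose first summand lies in $\fh$ and whose second lies in $V$. Hence $[\omega]\in\fh$ precisely when $\kappa(\omega)=0$, i.e.\ $\omega\in\fD$, in which case $[\omega]=\gamma^\varphi(\omega)$; therefore $[\fg_{\bar 1},\fg_{\bar 1}]\cap\fh=\gamma^\varphi(\fD)=\fh_{(S',\varphi)}$. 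The odd-generated hypothesis $\fg_{\bar 0}=[\fg_{\bar 1},\fg_{\bar 1}]$ then forces $\fh\subseteq[\fg_{\bar 1},\fg_{\bar 1}]\cap\fh=\fh_{(S',\varphi)}$, and combined with the inclusion $\fh_{(S',\varphi)}\subseteq\fh$ from the first paragraph this gives $\fh=\fh_{(S',\varphi)}$, which is (1). Granting this, the two conditions of Definition~\ref{def:Lie pair} for the envelope $\fh_{(S',\varphi)}=\fh$ are exactly the two free structural facts recorded above, so $(S',\varphi)$ is a Lie pair.

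For the final assertion I would assemble the pieces as follows. By the discussion in Section~\ref{sec:class-probl-highly}, a realizable filtered subdeformation of $\fp$ is completely determined by the quadruple $(\fh,S',\varphi,X)$: the Jacobi identity~\eqref{eq:ssvJac} expresses the curvature $R$ in terms of $(\varphi,X)$, while \eqref{eq:svvJac} and \eqref{eq:difBianchiVp=V} hold automatically once $\fg$ is known to be a Lie superalgebra. Parts (1) and (2) show that $\fh$ and the coset $X\bmod\fh$ are recovered from $(S',\varphi)$ alone. Finally, replacing $X$ by $X+\psi$ with $\psi\colon V\to\fh$ linear changes $\mu|_{\fa_-\otimes\fa_-}=\alpha+\beta+\gamma$ only by the coboundary $\partial\psi$ --- a short check against the coboundary formula \eqref{eq:Spencer1} --- so by Proposition~\ref{prop:fil1}(2) the two deformations are isomorphic as filtered subdeformations of $\fp$. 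Hence $\fg$ is determined, up to isomorphism of filtered subdeformations, by the Lie pair $(S',\varphi)$.

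The substantive point is the identity $[\fg_{\bar 1},\fg_{\bar 1}]\cap\fh=\fh_{(S',\varphi)}$: it relies on the term $X_{\kappa(\omega)}$ dropping out precisely on the Dirac kernel, and on high supersymmetry guaranteeing that $\kappa$ is onto $V$, so that the part of $[\fg_{\bar 1},\fg_{\bar 1}]$ transverse to $\fh$ is all of $V$ and the only extra information extracted from odd-generatedness is the equality $\fh=\fh_{(S',\varphi)}$. Everything else is bookkeeping with \eqref{eq:generalbracketsII} and the definitions; the one place to be careful is that the $\fh$/$V$ splitting of $[\omega]$ must be taken with respect to the presentation $\fg=\fh\oplus S'\oplus V$ fixed by the chosen $X$, so that $X_{\kappa(\omega)}$ is genuinely absorbed into the $\fh$-component.
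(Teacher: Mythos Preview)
Your proof is correct and follows essentially the same approach as the paper: both read off $\fh$ and $X$ from the bracket formula $[\omega]=\kappa(\omega)+\bigl(\gamma^\varphi(\omega)-X_{\kappa(\omega)}\bigr)$ on $\odot^2 S'$, using that the $\fh$-component vanishes precisely on the Dirac kernel $\fD$. The only minor difference is that the paper uses surjectivity of $T=[-,-]|_{\odot^2 S'}$ (from odd-generatedness) to produce a section $\Sigma$ with $T(\Sigma(v))=v$ exactly, hence $X=\gamma^\varphi\circ\Sigma$ on the nose for that particular $\Sigma$, whereas you use an arbitrary section and obtain $X\equiv\gamma^\varphi\circ\Sigma\pmod\fh$; your observation that (2) requires only realizability (not odd-generatedness) is a small bonus.
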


\begin{proof}
  Let $T:=[-,-]|_{\odot^2 S'}:\odot^2 S'\to V\oplus\fh$ be the tensor
  given by the Lie bracket between odd elements. By equation
  \eqref{eq:preKSA}, it has the following explicit expression for all
  $\omega=-\tfrac{1}{32}\big(\omega^{(1)}+\omega^{(2)}+\omega^{(5)}\big)\in\odot^2 S'$:
  \begin{equation}
    \label{eq:formoftheoddbracket}
    \begin{split}
      T(\omega) &= \kappa(\omega) + \gamma^{\varphi}(\omega) - X_{\kappa(\omega)}\\
      &= \omega^{(1)} + \gamma^{\varphi}(\omega) - X_{\omega^{(1)}},
    \end{split}
  \end{equation}
  with $\omega^{(1)}\in V$ and
  $\gamma^{\varphi}(\omega) - X_{\omega^{(1)}}\in\fh$. The last
  identity in \eqref{eq:formoftheoddbracket} follows from the fact that
  the kernel of the Dirac current $\kappa:\odot^2 S\to V$ is
  $\Lambda^2 V\oplus\Lambda^5 V$.
  
  The tensor $T$ is surjective since $\fg$ is odd-generated.  In
  particular any $A\in\fh$ is of the form $A=T(\omega)$, for some
  $\omega\in\odot^2 S'$. By equation \eqref{eq:formoftheoddbracket},
  the condition $T(\omega)\in\fh$ is equivalent to $\omega^{(1)}=0$
  and hence $A=\gamma^{\varphi}(\omega)$ for some
  $\omega\in\mathfrak D$. In other words, $\fh=\fh_{(S',\varphi)}$,
  which proves~(1).
  
  Surjectivity of $T$ also allows one to choose (in a linear
  fashion) for every $v\in V$ an element $\Sigma(v)\in\odot^2 S'$
  with $T(\Sigma(v))=v$. Note that $\Sigma(v)^{(1)}=v$ by equation
  \eqref{eq:formoftheoddbracket}, i.e., $\Sigma:V\to\odot^2 S'$ is a
  section associated to $S'$. On the other hand
  $\gamma^{\varphi}(\Sigma(v))-X_v=0$
  for all $v\in V$, i.e., $X=\gamma^\varphi\circ\Sigma$.  Since
  sections associated to $S'$ differ by elements in $\mathfrak{D}$, a
  different choice of $\Sigma$ determines $X$ up to elements in
  $\fh=\fh_{(S',\varphi)}$. This proves (2).

  The fact that $(S',\varphi)$ is a Lie pair is a direct consequence of
  $\fh\subset\stab(S')\cap\stab(\varphi)$; the last claim of the
  proposition follows from (1), (2) and
  Section~\ref{sec:class-probl-highly}.
\end{proof} 

Proposition \ref{prop:LiePair} improves Proposition \ref{prop:fil1} in
the case of \emph{odd-generated realizable} filtered subdeformations. It
also allows to modify step (1) of the classification problem in
Section \ref{sec:class-probl-highly} with the following step:

\begin{enumerate}
\item[(1')] classify Lie pairs $(S',\varphi)$ (and therefore the
  corresponding graded algebras $\fa=\fh_{(S',\varphi)}\oplus S'\oplus
  V$), up to isomorphism.
\end{enumerate}

Here we say that two pairs
$(S',\varphi) \cong (g\cdot S', g\cdot\varphi)$ are isomorphic, where
$g\in\Spin(V)$. In this case
\begin{equation*}
  \mathfrak{D}(g\cdot S')=g\cdot \mathfrak{D}(S'),\qquad\Sigma(g\cdot
  S')=g\cdot\Sigma(S'),\qquad
  \fh_{(g\cdot S',g\cdot \varphi)}=g\cdot\fh_{(S',\varphi)}
\end{equation*}
and it is immediate that $(g\cdot S',g\cdot \varphi)$ is a Lie pair if
and only if $(S',\varphi)$ is a Lie pair.

\section{Towards the field equations}
\label{sec:field-eqns}

In this section we explore the possibility of deriving the field
equations from the Jacobi identity \eqref{eq:ssvJac}. The main result
is Theorem \ref{thm:mainII} in Section~\ref{sec:field-eqnsII}, which
states that if the Killing superalgebra is highly supersymmetric, then
the bosonic field equations are satisfied. We begin with some
preliminary results.  We shall only need some of the formulae in the
propositions below, but we record them all for completeness and
because one of the identities corrects a small error which has propagated in
the literature.

\subsection{The algebraic and differential relations}
\label{sec:field-eqnsI}

Let $(M,g,F)$ be an $11$-dimensional lorentzian spin manifold endowed
with a closed $4$-form $F\in\Omega^4(M)$. Associated to any spinor
field $\varepsilon\in\Gamma(\$)$, there are differential forms on $M$,
defined as follows:
\begin{enumerate}[label=(\roman*)]
\item $\omega^{(1)}\in\Omega^1(M)$, where
  $\omega^{(1)}(Z)=\left<\varepsilon,Z\cdot\varepsilon\right>$;
\item $\omega^{(2)}\in\Omega^2(M)$, where
  $\omega^{(2)}(Z_1,Z_2)=\left<\varepsilon,(Z_1\wedge
    Z_2)\cdot\varepsilon\right>$;
\item $\omega^{(5)}\in\Omega^5(M)$, where
  $\omega^{(5)}(Z_1,\ldots,Z_5)=\left<\varepsilon,(Z_1\wedge\ldots\wedge
    Z_5)\cdot\varepsilon\right>$.
\end{enumerate}
The $1$-form $\omega^{(1)}$ is the metric dual of the Dirac current
$K=\kappa(\varepsilon,\varepsilon)\in\mathfrak{X}(M)$ of $\varepsilon$.
Forms $\omega^{(q)}\in\Omega^q(M)$ can also be defined in a similar
way for $q=6,9,10$ and it is straightforward to check that they are
the Hodge duals $\omega^{(q)}=\star\omega^{(11-q)}$ of (i)-(iii).

The differential forms defined above satisfy certain algebraic
relations which are a consequence of the underlying Clifford
algebra. They are usually proved by repeated applications of Fierz
rearrangements.
\begin{proposition}(\cite[p. 5]{GauPak}, \cite[p. 21]{GauGutPak})
  \label{prop:algrel}
  Let $\varepsilon\in\Gamma(\$)$ be a spinor field, with associated
  Dirac current $K=\kappa(\varepsilon,\varepsilon)\in\mathfrak{X}(M)$. Then:
  \begin{align}
    \|\omega^{(2)}\|^2 &= 5\|\omega^{(1)}\|^2\\ 
    \|\omega^{(5)}\|^2 &= -6\|\omega^{(1)}\|^2\\
g(\imath_{Z}\omega^{(2)},\imath_{W}\omega^{(2)}) &= -\omega^{(1)}(Z)
                                                   \omega^{(1)}(W) +
                                                   g(Z,W) \|\omega^{(1)}\|^2\\ 
g(\imath_{Z}\omega^{(5)},\imath_{W}\omega^{(5)})&=14\omega^{(1)}(Z)\omega^{(1)}(W)-4g(Z,W)\|\omega^{(1)}\|^2\\
\imath_{K}\omega^{(1)}&=\|\omega^{(1)}\|^2\\
\imath_K\omega^{(2)}&=0\label{eq:alg6}\\
\imath_K\omega^{(5)}&=-\tfrac12 \omega^{(2)}\wedge\omega^{(2)}\label{eq:alg7}\\
\label{eq:wrong}
\imath_K\star\omega^{(5)}(Z_1,\ldots,Z_5)&=\sk_{Z_1,\ldots, Z_5} g(\imath_{Z_1}\omega^{(2)},\imath_{Z_2}\cdots\imath_{Z_5}\omega^{(5)})\\
\|\omega^{(1)}\|^2\omega^{(2)}\wedge\omega^{(5)}&=-\tfrac12\omega^{(1)}\wedge\omega^{(2)}\wedge\omega^{(2)}\wedge\omega^{(2)}\\
\notag
g(\imath_{Z_1}\omega^{(2)}, \imath_{Z_2}\cdots\imath_{Z_6}\star\omega^{(5)})&=5\sk_{Z_2,\ldots,Z_6}g(Z_1,Z_2)\imath_K\omega^{(5)}(Z_3,\ldots,Z_6)\\
&\quad {} - 5\sk_{Z_2,\ldots,Z_6}\omega^{(5)}(Z_1,Z_2,\ldots,Z_5)\omega^{(1)}(Z_6)\\
\omega^{(2)}\wedge\omega^{(2)}(Z_1,\ldots,Z_4) &= - \tfrac65 \sk_{Z_1,\ldots,Z_4}g(\imath_{Z_1}\imath_{Z_2}\omega^{(5)}, \imath_{Z_3}\imath_{Z_4}\omega^{(5)})
\end{align}
for all vector fields $Z,W,Z_i\in\mathfrak{X}(M)$, $i=1,\ldots,6$,
where $\sk$ is skew-symmetrisation with weight one.
\end{proposition}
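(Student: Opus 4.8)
The plan is to observe that every displayed relation is a pointwise identity, so we may fix a point $o\in M$ and work entirely in the fibre: put $(V,\eta)=(T_oM,g_o)$, $S=\$_o$, and replace $\varepsilon$ by its value $s\in S$. The single tool needed is the Fierz expansion of the squaring endomorphism $\Pi_s\in\End(S)$, $\Pi_s(x)=\left<s,x\right>\,s$, which is the image of $s\odot s$ under $\odot^2 S\hookrightarrow S\otimes S\cong\End(S)\cong\Cl(V)$. Since $\left<-,-\right>$ is symplectic, $\Pi_s$ is skew-adjoint with respect to it; a Clifford monomial $v_1\cdots v_k$ is skew-adjoint precisely when $k(k+1)/2$ is odd, i.e.\ for $k\in\{1,2,5,6,9,10\}$, so only those Clifford degrees can occur in $\Pi_s$. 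Using the Hodge identifications $\omega^{(q)}=\star\,\omega^{(11-q)}$ together with $\vol\cdot s=-s$ to eliminate degrees $6,9,10$, one gets $\Pi_s=\tfrac1{32}\bigl(\omega^{(1)}+\omega^{(2)}+\omega^{(5)}\bigr)\cdot$, which is just the decomposition $\odot^2 S\cong\Lambda^1 V\oplus\Lambda^2 V\oplus\Lambda^5 V$ of \eqref{eq:decompsymm} made explicit.

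Next, after dualising with $\eta$, each identity in the statement is a contraction of two copies of $s\odot s$, hence a quartic in $s$. Substituting the Fierz expansion for one or both factors converts every such expression into a sum of Clifford traces $\tr\bigl((v_{i_1}\cdots v_{i_p})\cdot(w_{j_1}\cdots w_{j_q})\bigr)$, which vanish unless $p=q$ and are then given by the usual alternating sums of $\eta$'s. The norm relations $\|\omega^{(2)}\|^2=5\|\omega^{(1)}\|^2$, $\|\omega^{(5)}\|^2=-6\|\omega^{(1)}\|^2$ and the formulae for $g(\imath_Z\omega^{(q)},\imath_W\omega^{(q)})$ arise from the ``diagonal'' contractions of $\Pi_s$ with itself; the relations $\imath_K\omega^{(2)}=0$, $\imath_K\omega^{(5)}=-\tfrac12\omega^{(2)}\wedge\omega^{(2)}$ and $\imath_K\omega^{(1)}=\|\omega^{(1)}\|^2$ come from ``off-diagonal'' ones, where one writes $\imath_K\omega^{(2)}(Z)=\left<s,(K\wedge Z)\cdot s\right>$, feeds in the Fierz expansion a second time to re-expand $K\cdot s$, and uses $\left<s,K\cdot s\right>=\eta(K,K)$ for $K=\kappa(s,s)$. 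Finally, the purely form-theoretic identities — $\|\omega^{(1)}\|^2\,\omega^{(2)}\wedge\omega^{(5)}=-\tfrac12\,\omega^{(1)}\wedge\omega^{(2)}\wedge\omega^{(2)}\wedge\omega^{(2)}$, equation \eqref{eq:wrong}, and the last three displayed equations — follow either in the same way, or more economically by applying $\star$ and $\imath_K$ to the earlier ones, trading $\omega^{(2)}\wedge\omega^{(2)}$ for $\imath_K\omega^{(5)}$ via \eqref{eq:alg7} and propagating the already-established relations among the $\imath_Z\omega^{(q)}$.

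The conceptual content is thus entirely contained in the Fierz expansion above; the real work — and the main obstacle — is bookkeeping. One must pin down all signs and numerical coefficients in Lorentzian signature $(1,10)$ with the convention $\vol\cdot s=-s$, keep straight the Hodge-duality identifications of the higher-degree forms, and be scrupulous about the weight-one skew-symmetrisations $\sk$ appearing in several of the displayed equations. The one genuinely delicate point is \eqref{eq:wrong}: as the text notes, a sign/contraction error in this identity has propagated in the literature, so reproducing it correctly requires particular care about the ordering of $\star$ relative to $\imath_K$ and about the normalisation of $\sk$. Since the Fierz rearrangements have already been carried out in \cite{GauPak,GauGutPak}, the proof in practice consists of quoting those computations while redoing the single contraction that needs correcting.
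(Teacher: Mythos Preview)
Your proposal is correct and matches the paper's treatment: the paper does not give a proof either, but simply attributes the identities to \cite{GauPak,GauGutPak} and remarks that ``they are usually proved by repeated applications of Fierz rearrangements,'' noting the sign conventions and the correction to \eqref{eq:wrong}. Your sketch of the Fierz expansion of $\Pi_s$ and the ensuing trace computations is exactly the mechanism those references use, so there is nothing to add.
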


Formulae in Proposition \ref{prop:algrel} are by no means
exhaustive. We note that some of our identities differ in sign from
those in \cite{GauPak} and \cite{GauGutPak}; this is due to our
conventions on the metric, which is ``mostly minus'', and Clifford
algebras.  Equation~\eqref{eq:wrong} corrects equation~(2.14) of
\cite{GauPak} and equation~(B.6) of \cite{GauGutPak}.

The covariant derivative of the differential forms were also
calculated in \cite{GauPak} and \cite{GauGutPak}. They are
summarised in the following.

\begin{proposition}(\cite[p. 6]{GauPak}, \cite[p. 5]{GauGutPak})
  \label{prop:difrel}
  Let $\varepsilon\in\Gamma(\$)$ be a Killing spinor on $(M,g,F)$,
  with associated Dirac current
  $K=\kappa(\varepsilon,\varepsilon)\in\mathfrak{X}(M)$. Then:
  \begin{align}
    \label{eq:cov1}
\nabla_{W}\omega^{(1)}(Z)&=\tfrac13\omega^{(2)}(\imath_{Z}\imath_{W}F)-\tfrac16\star\omega^{(5)}(Z\wedge W\wedge F)\\
\notag
\nabla_{W}\omega^{(2)}(Z_1,Z_2)&=-\tfrac13\omega^{(1)}(\imath_{W}\imath_{Z_1}\imath_{Z_2}F)-\tfrac13\omega^{(5)}(Z_1\wedge Z_2\wedge\imath_{W}F)\\
\notag
&\quad {} +\tfrac16\omega^{(5)}(W\wedge Z_1\wedge\imath_{Z_2}F)-\tfrac16\omega^{(5)}(W\wedge Z_2\wedge\imath_{Z_1}F)\\
\label{eq:cov2}
&\quad {} -\tfrac16g(W,Z_1)\omega^{(5)}(Z_2\wedge F)+\tfrac16g(W,Z_2)\omega^{(5)}(Z_1\wedge F)\\
\notag
\nabla_{W}\omega^{(5)}(Z_1,\ldots,Z_5)&=\tfrac53\sk_{Z_1,\ldots, Z_5}\star\omega^{(5)}(Z_1\wedge\ldots\wedge Z_4\wedge\imath_{Z_5}\imath_{W}F)\\
\notag
&\quad {} -\tfrac13\omega^{(2)}\wedge\imath_{W}F(Z_1,\ldots,Z_5)-\tfrac16\star \omega^{(1)}(Z_1\wedge\ldots\wedge Z_5\wedge W\wedge F)\\
\notag
&\quad {} -\tfrac{10}6\sk_{Z_1,\ldots, Z_5}\star\omega^{(5)}(Z_1\wedge Z_2\wedge Z_3\wedge\imath_{Z_4}\imath_{Z_5}(W\wedge F))\\
\label{eq:cov3}
&\quad {} -\tfrac56\sk_{Z_1,\ldots, Z_5}\omega^{(2)}(Z_1\wedge\imath_{Z_2}\ldots\imath_{Z_5}(W\wedge F)).
\end{align}
In particular the exterior differentials of the forms are given by:
\begin{align}
  \label{eq:diff1}
  d\omega^{(1)}&=\tfrac13\star(F\wedge\omega^{(5)}) + \tfrac23 \star(\star F\wedge\omega^{(2)})\\
  \label{eq:diff2}
  d\omega^{(2)} &=-\imath_K F\\
  \label{eq:diff3}
  d\omega^{(5)}&= \imath_K\star F - \omega^{(2)}\wedge F.
\end{align}
\end{proposition}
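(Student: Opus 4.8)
The plan is to compute the three covariant derivatives directly from the Killing spinor equation $\nabla_W\varepsilon=\beta^F(W,\varepsilon)$ and then to obtain the exterior derivatives by skew-symmetrisation; the whole argument is an exercise in Clifford algebra, essentially the computation carried out in \cite{GauPak,GauGutPak}, redone in our conventions.

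First I would note that $\nabla_W\omega^{(q)}$ is tensorial, so it can be evaluated at a point by extending the $Z_i$ to vector fields parallel there. Writing $\Gamma=Z_1\wedge\dots\wedge Z_q$ for the corresponding totally skew element of the Clifford bundle and $\Theta_W=\tfrac1{24}(W\cdot F-3F\cdot W)$, the Leibniz rule together with $\nabla_W\varepsilon=\Theta_W\cdot\varepsilon$ and the parallelism of the symplectic form give
\[
  (\nabla_W\omega^{(q)})(Z_1,\dots,Z_q)=\langle\Theta_W\cdot\varepsilon,\Gamma\cdot\varepsilon\rangle+\langle\varepsilon,\Gamma\cdot\Theta_W\cdot\varepsilon\rangle .
\]
By the defining property $\langle v\cdot s_1,s_2\rangle=-\langle s_1,v\cdot s_2\rangle$, the adjoint of Clifford multiplication by a product of $k$ vectors is $(-1)^k$ times the reversed product; since $F$ is a $4$-form it is self-adjoint, so $\Theta_W$ has adjoint $\Theta_W^{T}=\tfrac1{24}(3W\cdot F-F\cdot W)$ and the two terms combine into $\langle\varepsilon,(\Theta_W^{T}\cdot\Gamma+\Gamma\cdot\Theta_W)\cdot\varepsilon\rangle$.

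Next I would expand the Clifford element $\Theta_W^{T}\cdot\Gamma+\Gamma\cdot\Theta_W$ into homogeneous pieces by form degree, using repeatedly the identity $v\cdot\psi_p=v\wedge\psi_p\pm\imath_{v^\flat}\psi_p$ (with the sign fixed by our conventions) and its mirror for $\psi_p\cdot v$ to split $W\cdot F$, $F\cdot W$ and then to move the $q$ vectors of $\Gamma$ past them. The crucial simplification is that $\langle-,\Psi_k\cdot-\rangle$ is a \emph{symmetric} bilinear form on spinors precisely for $k\equiv 1,2\pmod 4$, i.e.\ $k\in\{1,2,5,6,9,10\}$; hence when paired with $\varepsilon\otimes\varepsilon$ only those homogeneous components survive, and each of them is by construction a multiple of one of $\omega^{(1)},\omega^{(2)},\omega^{(5)}$ or of the Hodge duals $\star\omega^{(5)},\star\omega^{(2)},\star\omega^{(1)}$, contracted against $W$, the $Z_i$ and $F$ in the patterns displayed in \eqref{eq:cov1}--\eqref{eq:cov3}. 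Collecting coefficients in the three cases $q=1,2,5$ gives those formulae; the case $q=5$ is by far the longest.

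Finally, because the Levi-Civita connection is torsion-free, $d\omega^{(q)}=(q+1)\,\sk(\nabla\omega^{(q)})$. Substituting \eqref{eq:cov1}--\eqref{eq:cov3} and skew-symmetrising over all $q+1$ arguments, every term carrying a factor $g(W,Z_i)$ or a ``free'' slot of $W$ on a form either vanishes by antisymmetry or recombines, and the duality relations $\omega^{(q)}=\star\omega^{(11-q)}$ together with the algebraic identities of Proposition~\ref{prop:algrel} (notably $\imath_K\omega^{(2)}=0$) produce \eqref{eq:diff1}--\eqref{eq:diff3}. The main obstacle is not conceptual but bookkeeping: carrying the signs correctly through the Clifford expansion for $\nabla_W\omega^{(5)}$, where the ``mostly minus'' signature, our choice of irreducible module with $\vol\cdot s=-s$, and the normalisation of $\star$ all intervene --- it is exactly such a slip that produced the error in \cite{GauPak,GauGutPak} corrected by \eqref{eq:wrong} above.
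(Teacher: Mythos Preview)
Your approach is correct and is precisely the standard computation carried out in \cite{GauPak,GauGutPak}. Note, however, that the paper does \emph{not} supply its own proof of this proposition: it is stated as a summary of results from those references (``The covariant derivative of the differential forms were also calculated in \cite{GauPak} and \cite{GauGutPak}. They are summarised in the following.''), with no argument given. So there is nothing in the paper to compare against beyond the citation, and your sketch is exactly the computation those papers perform --- Leibniz rule plus the Killing spinor equation, Clifford expansion of $\Theta_W^{T}\cdot\Gamma+\Gamma\cdot\Theta_W$, projection onto the symmetric-bilinear degrees, and skew-symmetrisation for the exterior derivatives.
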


From Propositions \ref{prop:algrel} and \ref{prop:difrel} we can
immediately deduce the following result; the important identity
\eqref{eq:important} already appeared in \cite[p. 7]{GauPak}.

\begin{corollary}
  \label{cor:hsMax}
  Let $(M,g,F)$ be an $11$-dimensional lorentzian spin manifold
  endowed with a closed $4$-form $F\in\Omega^4(M)$. If
  $\varepsilon\in\Gamma(\$)$ is a Killing spinor, then the associated
  Dirac current $K\in\mathfrak{X}(M)$ is an $F$-preserving Killing
  vector which satisfies
  \begin{equation}
    \eL_K\omega^{(1)}=\eL_K\omega^{(2)}=\eL_K\omega^{(5)}=0
  \end{equation} 
  and
  \begin{equation}
    \label{eq:important}
    \imath_K(d\star F-\tfrac12F\wedge F)=0.
  \end{equation}
  In particular if the Killing superalgebra $\fk=\fk_{\bar
    0}\oplus\fk_{\bar 1}$ is highly supersymmetric, then $(M,g,F)$
  satisfies the Maxwell equation of $11$-dimensional supergravity.
\end{corollary}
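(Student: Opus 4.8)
The plan is to derive the whole statement by elementary Cartan calculus from Propositions~\ref{prop:algrel} and~\ref{prop:difrel}, of which only a few identities are actually needed. First I would check that $K=\kappa(\varepsilon,\varepsilon)$ is a Killing vector: by the computation in~\eqref{eq:LieBracketSS} one has $-\nabla K=\gamma^F(\varepsilon,\varepsilon)\in\fso(TM)$, which is skew-symmetric---exactly the Killing equation---and alternatively this follows from equation~\eqref{eq:cov1}, whose right-hand side is manifestly skew in its two arguments. From $K$ Killing and $[K,K]=0$ one then gets $\eL_K\omega^{(1)}=\eL_K K^\flat=0$ at once. I would next show that $K$ preserves $F$: since $F$ is closed, Cartan's formula together with $d\omega^{(2)}=-\imath_K F$ (equation~\eqref{eq:diff2}) gives $\eL_K F=d\imath_K F=-d^2\omega^{(2)}=0$.

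For the remaining two Lie derivatives I would use Cartan's formula again. For $\omega^{(2)}$, equations~\eqref{eq:alg6} and~\eqref{eq:diff2} give $\eL_K\omega^{(2)}=\imath_K d\omega^{(2)}=-\imath_K\imath_K F=0$. For $\omega^{(5)}$, from $\imath_K\omega^{(5)}=-\tfrac12\omega^{(2)}\wedge\omega^{(2)}$ (equation~\eqref{eq:alg7}) and~\eqref{eq:diff2} I compute $d\imath_K\omega^{(5)}=\imath_K F\wedge\omega^{(2)}$, whereas equation~\eqref{eq:diff3} with $\imath_K\omega^{(2)}=0$ gives $\imath_K d\omega^{(5)}=-\omega^{(2)}\wedge\imath_K F$; since $\imath_K F$ has degree $3$ and $\omega^{(2)}$ degree $2$, the Koszul sign rule makes these two cancel.

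To reach the key identity~\eqref{eq:important} I would differentiate equation~\eqref{eq:diff3}: using $dF=0$ and~\eqref{eq:diff2}, this yields $d\imath_K\star F=-\imath_K F\wedge F$. On the other hand a Killing vector preserves the metric and the volume form, hence commutes with the Hodge star, so $\eL_K\star F=\star\eL_K F=0$ by the previous step, and Cartan's formula gives $\imath_K d\star F=-d\imath_K\star F=\imath_K F\wedge F$. Comparing with the immediate identity $\imath_K(F\wedge F)=2\,\imath_K F\wedge F$ we obtain $\imath_K(d\star F-\tfrac12 F\wedge F)=0$. Finally, for the Maxwell equation in the highly supersymmetric case: a Killing spinor is a parallel section of $\eE$, so it is determined by its value at any point, whence $\dim S'=\dim\fk_{\bar 1}>16$ at every point of $M$; by Lemma~\ref{lem:fil1} (``local homogeneity'') the Dirac currents $\kappa(\varepsilon,\varepsilon)$, $\varepsilon\in\fk_{\bar 1}$, therefore span each tangent space. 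Since the $8$-form $d\star F-\tfrac12 F\wedge F$ is annihilated by contraction with every vector of a spanning family, it vanishes identically, and this is the Maxwell equation of $11$-dimensional supergravity.

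The computations are entirely routine; the only points I would take care over are the Koszul signs in the wedge-product manipulations and the standard fact---used in deriving~\eqref{eq:important}---that the Lie derivative along a Killing vector commutes with the Hodge star.
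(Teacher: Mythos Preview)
Your proposal is correct and follows essentially the same route as the paper's proof: both use equation~\eqref{eq:cov1} for the Killing property, Cartan's formula together with \eqref{eq:alg6}, \eqref{eq:alg7}, \eqref{eq:diff2}, \eqref{eq:diff3} for the Lie derivatives, and the commutation of $\eL_K$ with $\star$ for the key identity~\eqref{eq:important}. The only cosmetic difference is that you obtain $d\imath_K\star F=-\imath_K F\wedge F$ by applying $d$ to~\eqref{eq:diff3}, whereas the paper reaches the same relation by substituting~\eqref{eq:diff3} into the Cartan expansion of $\eL_K\star F$; these are the same computation in a different order.
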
 

\begin{proof}
  The Dirac current $K$ is a Killing vector, since \eqref{eq:cov1} is
  evidently skew-symmetric in $W$ and $Z$. Moreover
  $\eL_K F=d\imath_K F+\imath _K dF=d\imath_ KF=0$, by $dF=0$ and
  \eqref{eq:diff2}. Equation $\eL_K\omega^{(1)}=0$ is immediate,
  whereas
  \begin{equation*}
    \begin{split}
      \eL_K\omega^{(2)}&=\imath_K d\omega^{(2)}=-\imath_{K}\imath_{K}F=0\\
      \eL_K\omega^{(5)}&=d\imath_K\omega^{(5)}+\imath_{K}d\omega^{(5)}\\
      &=-d\omega^{(2)}\wedge\omega^{(2)}-\imath_{K}(\omega^{(2)}\wedge F)\\
      &=\imath_{K}F\wedge\omega^{(2)}-\omega^{(2)}\wedge\imath_{K}
      F=0,
    \end{split}
  \end{equation*}
  using equations~\eqref{eq:alg6}, \eqref{eq:alg7}, \eqref{eq:diff2}
  and \eqref{eq:diff3}. Finally
  \begin{equation*}
    \begin{split}
      0=\star \eL_K F=\eL_K \star F&=d\imath_{K}\star F+\imath_{K}d\star F\\
      &=d(\omega^{(2)}\wedge F)+\imath_{K}d\star F=d\omega^{(2)}\wedge F+\imath_{K}d\star F\\
      &=-\tfrac12\imath_{K}(F\wedge F)+\imath_{K}d\star
      F=\imath_K(d\star F-\tfrac12F\wedge F),
    \end{split}
  \end{equation*}
  using \eqref{eq:diff2} and \eqref{eq:diff3}. The last claim is a
  direct consequence of \eqref{eq:important} and the surjectivity of
  the Dirac current.
\end{proof}

\subsection{High supersymmetry implies the field equations}
\label{sec:field-eqnsII}

The main result of this section is the following

\begin{theorem}
  \label{thm:mainII}
  Let $(M,g,F)$ be an $11$-dimensional lorentzian spin manifold
  endowed with a closed $4$-form $F\in\Omega^4(M)$.  If the associated
  Killing superalgebra $\fk=\fk_{\bar 0}\oplus\fk_{\bar 1}$ is highly
  supersymmetric (i.e., if $\dim\fk_{\bar 1}>16$) then $(M,g,F)$ is a
  supergravity background; i.e., it satisfies the bosonic field
  equations of $11$-dimensional supergravity:
  \begin{equation}
    \label{eq:bosfieldeqs}
    \begin{split}
      d\star F &= \tfrac12 F\wedge F,\\
      \Ric(Z,W) &=\tfrac12 g(\imath_ZF,\imath_W F) - \tfrac16
      \|F\|^2 g(Z,W),
    \end{split}
  \end{equation}
  for all $Z,W\in\mathfrak{X}(M)$.
\end{theorem}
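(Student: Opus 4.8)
The plan is to establish the two field equations in \eqref{eq:bosfieldeqs} separately. The Maxwell equation $d\star F=\tfrac12 F\wedge F$ is already available: it is the conclusion of Corollary~\ref{cor:hsMax}, deduced from the pointwise identity $\imath_K(d\star F-\tfrac12 F\wedge F)=0$ valid for the Dirac current $K$ of any Killing spinor, together with the fact that under high supersymmetry these Dirac currents span the tangent space at every point (local homogeneity, Lemma~\ref{lem:fil1}). So it remains to prove the Einstein equation, and here I would follow the strategy announced in the remark after \eqref{eq:ssvJac}: the $[S'S'V']$ component of the Jacobi identity of $\fk$ --- which $\fk$ satisfies, being a Lie superalgebra --- determines the Riemann tensor, and a suitable contraction of it should reproduce the Ricci tensor.

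Concretely, I would fix a point $o\in M$, identify $T_oM=V$ and $\$_o=S$, and set $\varphi=F_o$. Since $\dim\fk_{\bar 1}>16$ and $\ev_o^{\bar 1}$ is injective (Killing spinors are $D$-parallel, hence determined by their value at one point of the connected manifold $M$), we have $\dim S'>16$, whence $V'=V$ and $\kappa(S',S')=V$ by Lemma~\ref{lem:fil1}. Consequently $R=R_o$ is a genuine algebraic curvature tensor on $V$ (the algebraic Bianchi identity \eqref{eq:algBianchi} holds because $V'=V$), and \eqref{eq:ssvJac} holds for all $s\in S'$ and $v,w\in V$. I would then contract \eqref{eq:ssvJac} over an orthonormal frame $(e_a)$ of $(V,\eta)$, summing in the $v$-slot against the output slot. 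By the curvature symmetries the left-hand side becomes a nonzero constant multiple of $\Ric(\kappa(s,s),w)$, while on the right-hand side the defining property \eqref{eq:DiracCurrent} of the Dirac current turns each pairing $\eta(e^a,\kappa(-,s))$ into $\langle -,e^a\cdot s\rangle$.

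The heart of the argument is then to show that the resulting Fierz expression in $s$ equals $-\tfrac12 g(\imath_{\kappa(s,s)}F,\imath_w F)+\tfrac16\|F\|^2 g(\kappa(s,s),w)$. For this I would expand $\beta^\varphi$ using \eqref{eq:thetatoo}, note that the term $(X_v\beta^\varphi)(w,s)=\beta^{X_v\cdot\varphi}(w,s)$ carries the covariant derivative of $F$ at $o$ --- for $X_v$ is built from $-\nabla\xi|_o$ for the $F$-preserving Killing vector $\xi$ realizing $v$, so that $X_v\cdot\varphi=-(\nabla_v F)_o$ --- and then substitute the purely algebraic Fierz relations of Proposition~\ref{prop:algrel} together with the covariant-derivative relations of Proposition~\ref{prop:difrel}, arranging matters so that the $\nabla F$-dependence coming from the $(X_v\beta^\varphi)$-term cancels against the rest, leaving only $F$. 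Once this identity is established, both of its sides are quadratic forms in $s\in S'$ whose associated symmetric bilinear forms factor through $\kappa$, so polarising in $s$ and using $\kappa(S',S')=V$ gives $\Ric(v,w)=-\tfrac12 g(\imath_v F,\imath_w F)+\tfrac16\|F\|^2 g(v,w)$ for all $v,w\in T_oM$. As $o$ is arbitrary and this is pointwise, the Einstein equation holds throughout $M$.

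The main obstacle is precisely the middle step: the contraction of \eqref{eq:ssvJac} and the ensuing Fierz and tensor bookkeeping that matches the contracted right-hand side with the Einstein tensor. The conceptual content is slight --- local homogeneity upgrades $R$ to an honest curvature tensor, which the Jacobi identity then pins down in terms of $\varphi$ and $X$ --- but the computation is heavy: one must carefully track the $\Lambda^2 V$- and $\Lambda^5 V$-components that enter through Clifford products with $\varphi$ and through the auxiliary forms $\omega^{(q)}$, and in signature $(1,10)$ a stray sign is easy to introduce, as the authors note around \eqref{eq:wrong}.
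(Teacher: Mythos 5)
Your Maxwell-equation step and your overall strategy for Einstein --- contract the $[S'S'V']$ Jacobi identity \eqref{eq:ssvJac}, expand $\beta^\varphi$, substitute $X_v\cdot\varphi=-(\nabla_vF)_o$, and match the result to the Einstein tensor --- do track the paper's argument. But there is a genuine gap at the heart of the middle step.

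You write that after the Fierz bookkeeping ``the $\nabla F$-dependence coming from the $(X_v\beta^\varphi)$-term cancels against the rest, leaving only $F$,'' and then that both sides of the resulting identity ``are quadratic forms in $s\in S'$ whose associated symmetric bilinear forms factor through $\kappa$.'' Neither of these is true as stated. What the contraction actually produces is an identity of the form \eqref{eq:ricci}: besides the wanted Dirac-current term (the one proportional to $\langle u\cdot s,s\rangle$ and $\langle\Gamma^b s,s\rangle$), there survive genuinely extraneous terms
\begin{equation*}
  \tfrac16\bigl< (\,u\wedge F\wedge F - u\wedge dF + 2\,\iota_u\,\delta F\,)\cdot s,\, s\bigr>,
\end{equation*}
which couple to $s$ through the $5$-form bilinear $\omega^{(5)}$ (the $u\wedge dF$ piece) and the $2$-form bilinear $\omega^{(2)}$ (the $u\wedge F\wedge F + 2\iota_u\delta F$ piece). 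No Fierz rearrangement makes these cancel: a quadratic form in $s\in S'$ does \emph{not} factor through $\kappa$, because the embedding $\odot^2 S'\subset\odot^2 S=\Lambda^1 V\oplus\Lambda^2 V\oplus\Lambda^5 V$ is in general diagonal, and being zero when paired against $\odot^2 S'$ does not force each $\Lambda^q V$-component to vanish separately. (The paper flags exactly this point.) So ``polarising and using $\kappa(S',S')=V$'' is not available until you have independently disposed of the $\omega^{(2)}$- and $\omega^{(5)}$-pieces.

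The paper's resolution is dynamical, not algebraic: the $\omega^{(5)}$-piece vanishes because $dF=0$ is a hypothesis, and the $\omega^{(2)}$-piece vanishes because the Maxwell equation $d\star F=\tfrac12 F\wedge F$ --- which you already proved, and which is itself a consequence of high supersymmetry via Corollary~\ref{cor:hsMax} --- kills $u\wedge F\wedge F+2\iota_u\delta F$. In other words the Einstein equation is not obtained from the Jacobi identity alone; it requires feeding the Maxwell equation back in. Your proposal proves the Maxwell equation first but never uses it in the Einstein step, and without that input the contracted Jacobi identity simply does not reduce to \eqref{eq:einstein}. The fix is exactly the sequence \eqref{eq:preBianchi}--\eqref{eq:preMaxwell}--\eqref{eq:einstein}: identify the three bilinear channels, argue the $\omega^{(5)}$-channel is zero by $dF=0$, argue the $\omega^{(2)}$-channel is zero by Corollary~\ref{cor:hsMax}, and only then read off Einstein from the surviving $\omega^{(1)}$-channel and surjectivity of $\kappa$ on $\odot^2 S'$.
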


The proof of Theorem \ref{thm:mainII} will occupy the remainder of
this section, but before we start let us remark that the theorem is
sharp.  Indeed, there exist lorentzian $11$-dimensional manifolds
$(M,g)$ with $F=0$, which admit a 16-dimensional space of parallel
spinors and which are not Ricci-flat \cite{JMWaves,Bryant-ricciflat}.

Let us now turn to the proof of the theorem.  From now on, we will use
the Einstein summation convention and consider the canonical
isomorphism $\Lambda^\bullet V\cong\Cl(V)$ of vector spaces. It sends
a $p$-polyvector
$\Theta=\tfrac1{p!}\Theta^{a_1\cdots a_p}e_{a_1}\wedge\cdots\wedge
e_{a_p}$ into
$\tfrac1{p!}\Theta^{a_1\cdots a_p}\Gamma_{a_1\ldots a_p}$, where
$(e_a)$ is any $\eta$-orthonormal basis of $V$ and
$\Gamma_{a_1\ldots a_p}$ the totally antisymmetric product (with
weight one) of the corresponding operators $\Gamma_{a_i}\in\Cl(V)$ of
Clifford multiplication by $e_{a_i}\in V$. Finally, we denote by
$[\Xi]_p$ the $p$-form component of $\Xi \in \Cl(V)$.

We begin with two useful lemmas.

\begin{lemma}
  \label{lem:use2}
  Let $\Theta\in\Lambda^p V$ be a $p$-polyvector. Then
  \begin{equation}
    \label{eq:use2I}
    u \cdot \Theta = u \wedge \Theta - \iota_u \Theta
    \qquad\text{and}\qquad
    \Theta \cdot u = (-1)^p \left(u \wedge \Theta + \iota_u \Theta \right),
  \end{equation} 
  for all $u \in V$. In particular 
  \begin{equation}
    \label{eq:use2II}
    \tr_{v,w} v\cdot \Theta\cdot w = (-1)^{p+1} (11-2p) \Theta,
  \end{equation}
  where $\tr_{v,w}$ is the tracing operation over $v,w\in V$. 
\end{lemma}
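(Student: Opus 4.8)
The plan is to prove the two Clifford-algebraic identities in \eqref{eq:use2I} first, and then deduce \eqref{eq:use2II} by substituting one into the other and tracing. For the first identity I would argue on basis elements: by multilinearity it suffices to take $\Theta = e_{a_1}\wedge\dots\wedge e_{a_p}$ with distinct indices in an $\eta$-orthonormal basis $(e_a)$, so that under $\Lambda^\bullet V\cong\Cl(V)$ the polyvector $\Theta$ corresponds to the Clifford monomial $\Gamma_{a_1}\cdots\Gamma_{a_p}$. If $u=e_b$ with $b\notin\{a_1,\dots,a_p\}$, then $\Gamma_b\Gamma_{a_1}\cdots\Gamma_{a_p}$ is precisely the Clifford image of $u\wedge\Theta$ while $\iota_u\Theta=0$; if $u=e_{a_j}$, then anticommuting $\Gamma_{a_j}$ into place and using $\Gamma_{a_j}^2 = -\eta(e_{a_j},e_{a_j})\1$ reproduces $-\iota_u\Theta$, with $u\wedge\Theta=0$. (The minus sign in $u\cdot\Theta = u\wedge\Theta-\iota_u\Theta$ reflects our Clifford conventions; it is forced by the displayed formula.) The right-multiplication identity then follows either by the same case analysis, or by applying to $u\cdot\Theta$ the reversion anti-automorphism of $\Cl(V)$, which fixes $V$ pointwise and acts on $\Lambda^q V$ as the scalar $(-1)^{q(q-1)/2}$; a short bookkeeping of the signs $(-1)^{p(p-1)/2}$, $(-1)^{(p+1)p/2}$ and $(-1)^{(p-1)(p-2)/2}$ yields the common coefficient $(-1)^p$ for both terms in \eqref{eq:use2I}.

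Granting \eqref{eq:use2I}, I would compute $v\cdot\Theta\cdot w$ by first applying the right identity to $\Theta\cdot w$ and then the left identity to each of the two resulting products, together with the anti-derivation rule $\iota_v(w\wedge\Theta) = \eta(v,w)\Theta - w\wedge\iota_v\Theta$. This gives
\begin{equation*}
  v\cdot\Theta\cdot w = (-1)^p\bigl(v\wedge w\wedge\Theta - \eta(v,w)\,\Theta + w\wedge\iota_v\Theta + v\wedge\iota_w\Theta - \iota_v\iota_w\Theta\bigr).
\end{equation*}
Applying $\tr_{v,w}$ means setting $v=w=e_a$ and summing over the orthonormal basis with weight $\eta(e_a,e_a)$: the terms $v\wedge w\wedge\Theta$ and $\iota_v\iota_w\Theta$ drop out since $e_a\wedge e_a=0$ and $\iota_{e_a}\iota_{e_a}=0$; the term $\eta(v,w)\Theta$ contributes $\sum_a\eta(e_a,e_a)^2\,\Theta = (\dim V)\,\Theta = 11\,\Theta$; and each of $w\wedge\iota_v\Theta$ and $v\wedge\iota_w\Theta$ contributes $p\,\Theta$ by the degree-operator identity $\sum_a\eta(e_a,e_a)\,e_a\wedge\iota_{e_a}\Theta = p\,\Theta$ on $\Lambda^p V$ (itself checked on basis elements). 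Collecting terms yields
\begin{equation*}
  \tr_{v,w} v\cdot\Theta\cdot w = (-1)^p(2p-11)\,\Theta = (-1)^{p+1}(11-2p)\,\Theta,
\end{equation*}
as claimed.

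Everything here is routine; the only thing to watch is the sign bookkeeping — the minus in $u\cdot\Theta=u\wedge\Theta-\iota_u\Theta$ coming from the ``mostly minus'' signature and the chosen irreducible Clifford module, the anti-derivation behaviour of $\iota_v$, and checking that the summands that are odd under $e_a\mapsto e_a$ genuinely vanish in the trace rather than merely appearing to. I do not anticipate any real obstacle.
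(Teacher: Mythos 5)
Your proof is correct. The paper explicitly omits the proof of Lemma~\ref{lem:use2} (and of Lemma~\ref{lem:use1}), remarking only that the identities follow from ``routine calculations in $\Cl(V)$''; your argument is precisely that routine calculation, carried out with the correct sign conventions ($v^2=-\eta(v,v)\1$, ``mostly minus'' metric), the standard anti-derivation rule for $\iota_u$, and the degree-operator identity $\sum_a\eta(e_a,e_a)\,e_a\wedge\iota_{e_a}=p\,\mathrm{id}$ on $\Lambda^p V$. Both the reversion argument for the second identity in \eqref{eq:use2I} and the coefficient count $11-2p$ in \eqref{eq:use2II} check out, and the normalisation is consistent with the paper's subsequent uses of the lemma (e.g.\ $\tr_{v,w}v\cdot F\cdot w=-3F$ for a $4$-form $F$ and $\tr_{v,w}v\cdot w=-11\1$).
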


\begin{lemma}
  \label{lem:use1}
  Let $\varphi\in\Lambda^4 V$ be a $4$-polyvector. Then 
  \begin{equation}
    \label{eq:use1I}
    u \wedge \varphi = \tfrac12 (u\cdot \varphi + \varphi\cdot u ) \qquad\text{and}\qquad
    \iota_u \varphi = \tfrac12 (\varphi\cdot u - u\cdot \varphi),
  \end{equation}
  for all $u\in V$. Moreover
  \begin{equation}
    \label{eq:use1II}
    \varphi^2 = \varphi\cdot\varphi=\varphi \wedge \varphi + [\varphi^2]_4 + \|\varphi\|^2 \1,
  \end{equation}
  where $[\varphi^2]_4 = - \tfrac18 \varphi^{abmn}\varphi_{mn}{}^{cd} \Gamma_{abcd}$.
\end{lemma}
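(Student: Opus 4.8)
The first pair of identities in~\eqref{eq:use1I} is immediate from Lemma~\ref{lem:use2}. Applying~\eqref{eq:use2I} with $p=4$, so that $(-1)^{p}=1$, gives $u\cdot\varphi = u\wedge\varphi-\iota_u\varphi$ and $\varphi\cdot u = u\wedge\varphi+\iota_u\varphi$; adding and subtracting these two relations yields $u\wedge\varphi=\tfrac12(u\cdot\varphi+\varphi\cdot u)$ and $\iota_u\varphi=\tfrac12(\varphi\cdot u-u\cdot\varphi)$.

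For~\eqref{eq:use1II} the plan is to compute the Clifford square $\varphi\cdot\varphi$ degree by degree. Writing $\varphi=\tfrac1{4!}\varphi^{abcd}\Gamma_{abcd}$ in an $\eta$-orthonormal basis and expanding $\Gamma_{abcd}\Gamma_{efgh}$ by the standard formula into antisymmetrised products of $\Gamma$'s with $0,1,2,3$ or $4$ pairs of indices contracted, one sees that $\varphi\cdot\varphi$ a priori has homogeneous components in degrees $8,6,4,2,0$. I would first show that the degree-$2$ and degree-$6$ components vanish: the reversal anti-automorphism $\tau$ of $\Cl(V)$ fixes $\varphi$ (it acts on a $q$-form component by $(-1)^{q(q-1)/2}$, which is $+1$ for $q=4$), hence fixes $\varphi\cdot\varphi$, while it acts by $-1$ on the degree-$2$ and degree-$6$ parts; equivalently, only the part of $\Gamma_{abcd}\Gamma_{efgh}$ that is symmetric under exchanging $(abcd)\leftrightarrow(efgh)$ survives contraction with the symmetric tensor $\varphi^{abcd}\varphi^{efgh}$, and that symmetric part has no components in degrees $2$ and $6$. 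The top component is $[\varphi\cdot\varphi]_8=\varphi\wedge\varphi$, since the maximal-degree part of a Clifford product of forms is their exterior product.

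It then remains to identify the degree-$0$ and degree-$4$ components by explicit contraction. The scalar part comes from contracting all four index pairs between the two copies of $\varphi$, which yields $\tfrac1{4!}\varphi^{abcd}\varphi_{abcd}\,\1=\|\varphi\|^2\1$; here the overall sign of the four Clifford contractions is determined by the ``mostly minus'' convention and matches the chosen normalisation of $\|\cdot\|^2$. The degree-$4$ part comes from contracting exactly two index pairs; after collecting the binomial factors produced by the expansion of $\Gamma_{abcd}\Gamma_{efgh}$ and the $\tfrac1{4!}\tfrac1{4!}$ out front (the $72=\binom42\binom42\,2!$ ordered ways of contracting two indices of the first $\varphi$ against two of the second, divided by $(4!)^2$), one gets $[\varphi\cdot\varphi]_4=-\tfrac18\varphi^{abmn}\varphi_{mn}{}^{cd}\Gamma_{abcd}$. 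Summing the four surviving pieces gives~\eqref{eq:use1II}.

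The routine but error-prone part is the sign and coefficient bookkeeping in this last step: reconciling the orthonormal-basis Clifford relations (with ``mostly minus'' signature and the convention $\vol\cdot s=-s$) with the normalisation $\|\varphi\|^2=\tfrac1{4!}\varphi^{abcd}\varphi_{abcd}$ for the scalar term, and counting correctly the contractions of two index pairs, together with the induced antisymmetrisation of the four remaining indices, to arrive at the coefficient $-\tfrac18$ of the degree-$4$ term. The structural claims --- vanishing of the degree-$2$ and degree-$6$ parts, and $[\varphi\cdot\varphi]_8=\varphi\wedge\varphi$ --- are essentially immediate, so the whole argument is elementary Clifford algebra.
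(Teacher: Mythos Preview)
Your proposal is correct and is precisely the ``routine calculation in $\Cl(V)$'' that the paper alludes to but omits; the paper provides no proof of this lemma beyond that remark. Your argument---deducing \eqref{eq:use1I} from Lemma~\ref{lem:use2}, then using the reversal anti-automorphism to kill the degree-$2$ and degree-$6$ parts of $\varphi\cdot\varphi$ and reading off the remaining components by explicit contraction---is exactly the intended elementary Clifford-algebra verification.
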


The identities in Lemmas~\ref{lem:use2}~and~\ref{lem:use1} are
obtained by routine calculations in $\Cl(V)$. We omit the proof for
the sake of brevity.

So let us now fix a point $o\in M$ and assume $\dim S'> 16$, so that
$\kappa: \odot^2 S' \to V$ is surjective and $V'=V$.  We will abuse
notation slightly by using $F$ both for the $4$-form as for the value
at $o$, which is an element of $\Lambda^4 V\cong\Lambda^4 V^*$.

We consider the $[S'S'V]$ Jacobi identity \eqref{eq:ssvJac} and take
the inner product with a vector $u \in V$ to arrive at
\begin{equation*}
 \eta(u,R(v,\kappa(s,s))w) = 2 \left<u \cdot \beta^F_w \cdot \beta^F_v
   \cdot s, s\right> + 2 \left<u \cdot \beta^F_w \cdot s, \beta^F_v \cdot
   s \right> - 2 \left<u \cdot (X_v\beta^F)(w,s),s\right>\,,
\end{equation*}
for all $u, v,w\in V$ and $s\in S$. Now the symplectic transpose of 
$\beta^F_v = \tfrac1{24} (v \cdot F - 3 F \cdot v)$ is 
$\widetilde{\beta^F_v} = \tfrac1{24} (3 v \cdot
F - F \cdot v)$, so that
\begin{equation*}
 \eta(u,R(v,\kappa(s,s))w) = 2 \left<\left(u \cdot \beta^F_w \cdot \beta^F_v
   + \widetilde{\beta^F_v} \cdot u \cdot \beta^F_w \right) \cdot s,
 s\right> - 2 \left<u \cdot (X_v\beta^F)(w,s),s\right>.
\end{equation*}
We now expand by using the definition of $\beta^F$ and the fact that
\begin{equation*}
  (X_v\beta^F)(w,s) = \beta^{X_vF}(w,s)=\tfrac1{24} (w \cdot
(X_v F) - 3 (X_v F) \cdot w)\cdot s, 
\end{equation*}
for all $v,w\in V$ and $s\in S$. Dropping the Clifford multiplication
$\cdot$ from the notation, we arrive at
\begin{equation*}
  \begin{split}
    \eta(u,R(v,\kappa(s,s))w) & = \tfrac2{(24)^2}
    \bigl<\bigl(uwF vF -
        3uF wvF - 3uwF^2v + 9 u F wF v  + 3 vF uwF\\
    & \quad {} - 9 vF uF w - F vuwF +
        3F vuF w \bigr) \cdot s, s\bigr>\\ & \quad {} -
    \tfrac1{12}  \left<(u w (X_v F) - 3 u (X_v F) w)\cdot s, s\right>.
  \end{split}
\end{equation*}
The Ricci tensor is obtained by ``tracing'' over $v,w$ and taking the opposite:
\begin{equation}
\label{eq:RicciI}
  \begin{split}
    \Ric(u,\kappa(s,s)) &= -\tr_{v,w} \eta(u,R(v,\kappa(s,s))w)\\
    &= -\tfrac2{(24)^2} \left< \Upsilon_u \cdot s, s\right> +
    \tfrac1{12} \tr_{v,w} \left<(u w (X_v F) - 3 u (X_v F) w)\cdot s,
      s\right> ,
  \end{split}
\end{equation}
where
\begin{multline*}
  \Upsilon_u = \tr_{v,w}\left(uwF vF -
    3uF wvF - 3uwF^2v + 9 u F wF v \right. \\
    \left. {} + 3 vF uwF - 9 vF uF w - F
      vuwF  + 3F vuF w  \right).
\end{multline*}
We treat the two terms in the RHS of \eqref{eq:RicciI} separately and
in turn. First we expand the $\Upsilon$ term by making use of
\eqref{eq:use1II} in Lemma~\ref{lem:use1} and the following traces,
which are a direct consequence of \eqref{eq:use2II} in
Lemma~\ref{lem:use2}:
\begin{equation*}
  \begin{split}
    \tr_{v,w} v F w &= -3 F\\
    \tr_{v,w} v w &= -11 \1\\
    \tr_{v,w} v u w &= 9 u \\
    \tr_{v,w} v F^2 w &= 5 F \wedge F - 3
    [F^2]_4 - 11 \|F\|^2 \1.
  \end{split}
\end{equation*}
Therefore substituting this into $\Upsilon_u$ we find
\begin{multline*}
  \Upsilon_u = -12 u (F\wedge F) + 12 u [F^2]_4 + 36 \|F\|^2 u +
  3 (u\wedge F) F + 3 F (u \wedge F)\\
  {} + 15 (\iota_u F) F
  - 15 F (\iota_u F) - 9 F u F - 9 \tr_{v,w} v(F u F) w.
\end{multline*}
Remember, though, that this expression appears in
\begin{equation*}
\left<\Upsilon_u \cdot s, s\right> = - \left<s, \Upsilon_u \cdot s\right> = -
\left<\widetilde{\Upsilon_u} \cdot s, s\right>,
\end{equation*}
where $\widetilde{\Upsilon_u}$ is the symplectic transpose of $\Upsilon_u$, so that
\begin{equation*}
  \left<\Upsilon_u \cdot s, s\right> = \tfrac12
  \left<(\Upsilon_u-\widetilde{\Upsilon_u})\cdot s, s\right>.
\end{equation*}
Using that for $\Theta$ a $p$-polyvector, $\widetilde\Theta =
(-1)^{p(p+1)/2} \Theta$, we may thus replace $\Upsilon_u$ by the following term
\begin{multline*}
 -12 u (F\wedge F) + 12 u [F^2]_4 + 36 \|F\|^2 u +
  6 (u\wedge F) F \\
  {} + 30 (\iota_u F) F - 9 F u F - 9
  \tr_{v,w} v(F u F) w.
\end{multline*}
Identities \eqref{eq:use2I} in Lemma \ref{lem:use2} allows to further
expand this term, and keeping in mind that only the skewsymmetric
endomorphisms survive, we arrive at
\begin{multline*}
  \tfrac1{24} (\Upsilon_u- \widetilde{\Upsilon_u}) = 4 u \wedge
  F \wedge F + u \wedge [F^2]_4 +
  3 \|F\|^2 u - [(u\wedge F)F]_5\\
	- 7 [(u\wedge F) F]_1 + [(\iota_u F)
        F]_5 - 5 [(\iota_u F) F]_1\,.
\end{multline*}
Now observe that the 2nd, 4th and 6th terms add to zero, so that
\begin{equation*}
  \tfrac1{12} \left< \Upsilon_u \cdot s, s\right> =
  \left< (4 u \wedge F \wedge F + 3 \|F\|^2 u - 7
    [(u\wedge F) F]_1 -5 [(\iota_u F)F]_1 ) \cdot s,
    s\right>
\end{equation*}
and, from \eqref{eq:use1I} in Lemma \ref{lem:use1}, we arrive at
\begin{equation*}
  \tfrac1{12}\left< \Upsilon_u \cdot s, s\right>   =
  \left< (4 u \wedge F \wedge F + 2 \|F\|^2 u - 6 [F uF]_1)
    \cdot s, s\right>.
\end{equation*}
It is clear after a moment's thought that
\begin{equation*}
  [F u F]_1 = \left(\alpha \|F\|^2 \eta_{ab} + \beta
    F^2_{ab}\right) u^a \Gamma^b,
\end{equation*}
for some $\alpha,\beta \in \RR$, where
\begin{equation*}
  F^2_{ab}
  =\eta(\imath_{e_a}F,\imath_{e_b}F)=\tfrac16
  F_{amnp} F_b{}^{mnp}.
\end{equation*}
By taking $F = \Gamma_{0123}$ and taking $u= \Gamma_0$ and
$u=\Gamma_5$ in turn, say, we find that $\alpha = 1$ and $\beta = -2$,
so that in the end
\begin{equation}
\label{eq:riccipart1}
    \tfrac2{(24)^2} \left< \Upsilon_u \cdot s, s\right> = \tfrac16
    \left< u \wedge F \wedge F \cdot s, s\right> + \tfrac12 F^2_{ab}
    u^a \left<\Gamma^b s,s\right> - \tfrac16 \|F\|^2 \left<u \cdot s,
      s\right>.
\end{equation}

Now we tackle the other terms in \eqref{eq:RicciI}. We first observe
that $(v,X_v)$ is a Killing vector field which preserves $F$, by the
geometric interpretation of the Killing superalgebra in
Sections~\ref{sec:kill-super-supersymm}~and~\ref{sec:admfilsubdef}.  In
particular the Lie derivative $\eL_{(v,X_v)} F= \nabla_v F + X_vF = 0$
and hence $X_v F = - \nabla_vF$. Therefore,
\begin{equation*}
  \tr_{v,w} w \cdot (X_v F) = - d F + \delta F \qquad\text{and}\qquad
  \tr_{v,w} (X_v F) \cdot w = - dF - \delta F,
\end{equation*}
where $dF$ is the exterior derivative and $\delta F =-\star d\star F$
the divergence.  It follows that
\begin{equation*}
  - \tfrac1{12} \tr_{v,w} \left<(u w (X_v F) - 3 u (X_v F)
    w)\cdot s, s\right>  = - \tfrac16 \left< u \cdot (dF + 2
    \delta F) \cdot s, s\right>
\end{equation*}
and remembering that only the $1$-, $2$- and $5$-form terms (and their
duals) survive, we finally arrive at
\begin{equation}
  \label{eq:riccipart2}
  - \tfrac1{12} \tr_{v,w} \left<(u w (X_v F) - 3 u (X_v F)
    w)\cdot s, s\right>  = - \tfrac16 \left< (u \wedge dF - 2
    \iota_u \delta F) \cdot s, s\right>.
\end{equation}

In summary, we add (the opposite of) equations~\eqref{eq:riccipart1} and
\eqref{eq:riccipart2} to arrive at
\begin{multline}
  \label{eq:ricci}
  \Ric(u,\kappa(s,s)) = -\tfrac12 F^2_{ab} u^a \left<\Gamma^b s,s\right> +
  \tfrac16 \|F\|^2 \left<u \cdot s, s\right>\\
  - \tfrac16 \left< (u \wedge F \wedge F - u \wedge dF + 2
    \iota_u \delta F) \cdot s, s\right>.
\end{multline}
There are three kinds of terms which depend on
$s$ in equation~\eqref{eq:ricci}: terms which depend via the Dirac
current, terms which depend via the 2-form bilinear
$\omega^{(2)}$ and terms which depend via the
$5$-form bilinear
$\omega^{(5)}$ (see Section~\ref{sec:field-eqnsI} for definitions). The
embedding $\odot^2 S'\subset \odot^2 S=\Lambda^1 V\oplus \Lambda^2
V\oplus\Lambda^5
V$ is in general diagonal, and the fact that \eqref{eq:ricci} has to
be true for all $s\in
S'$ does not guarantee a priori that each of these three terms
satisfies the equation separately; although they do in the maximally
supersymmetric case when $S' = S$.

Notice however that the equation for the terms depending on the
$5$-form bilinear is
\begin{equation}
\label{eq:preBianchi}
  \left< (u \wedge dF) \cdot s, s\right> = 0,
\end{equation}
for all $u \in V$ and $s \in S'$. Similarly the equation for the terms
depending on the $2$-form (or, dually, the $9$-form) bilinear is
\begin{equation}
  \label{eq:preMaxwell}
  \left< (u \wedge F \wedge F + 2 \iota_u \delta F) \cdot s, s\right>
  = 0,
\end{equation}
for all $u\in V$ and $s\in S'$. By hypothesis $dF=0$, so that
\eqref{eq:preBianchi} is automatically satisfied. By high
supersymmetry and Corollary \ref{cor:hsMax}, the Maxwell equation of
$11$-dimensional supergravity is also satisfied and this directly
implies equation \eqref{eq:preMaxwell}.  This then boils down
equation~\eqref{eq:ricci} to the vanishing of the terms depending
just on the Dirac current, namely:
\begin{equation}
\label{eq:einstein}
  \Ric(u,\kappa(s,s)) = \tfrac12 F^2_{ab} u^a \left< s, \Gamma^b s\right> -
  \tfrac16 \|F\|^2 \left< s,u \cdot s\right>,
\end{equation}
which, since $\kappa$ is surjective, is nothing but the expected
Einstein equation $\Ric_{ab} = \tfrac12 F^2_{ab} - \tfrac16 \|F\|^2
g_{ab}$. Theorem \ref{thm:mainII} is hence proved.
\medskip\par
As a corollary, we now show that the space $\ker \imath^*$ given in Lemma
\ref{lem:kernelpriori} vanishes if $\dim S'>16$.

\begin{corollary}
  \label{cor:final!}
  Let $\fa=\fh\oplus S'\oplus V$ be a highly supersymmetric graded
  subalgebra of $\fp$. Then $\ker \imath^*=0$. In
  particular a filtered deformation $\fg$ of $\fa$ has at most one
  admissible $\varphi\in\Lambda^4 V$.
\end{corollary}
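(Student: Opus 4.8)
The plan is to exploit the geometric meaning of $\ker\imath^*$ alluded to in the remark following Lemma~\ref{lem:kernelpriori}: a class in this kernel should be the constant $4$-form of a \emph{flat} highly supersymmetric background, and Theorem~\ref{thm:mainII} then forces that background to satisfy the bosonic field equations, which a Ricci-flat space can meet only if the $4$-form vanishes. So rather than seeking an a priori representation-theoretic argument I would route the proof through the supergravity reconstruction already set up in the paper.

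In detail, by Lemma~\ref{lem:kernelpriori} we have $\ker\imath^*\cong\{\varphi\in\Lambda^4 V\mid\beta^\varphi|_{V\otimes S'}=0\}$, so I fix such a $\varphi$ and aim to show $\varphi=0$. Let $(M,g)$ be the flat Minkowski space modelled on $(V,\eta)$ and let $F=\varphi$ be the corresponding translation-invariant $4$-form, which is closed. Every constant spinor field with value in $S'$ then satisfies $\nabla_Z\varepsilon = \tfrac1{24}(Z\cdot F - 3 F\cdot Z)\cdot\varepsilon = \beta^\varphi(Z,\varepsilon) = 0$, so that $\dim\fk_{\bar 1}\geq\dim S'>16$ and $(M,g,F)$ is highly supersymmetric. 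Theorem~\ref{thm:mainII} now applies and in particular yields the Einstein equation, which on the Ricci-flat space $(M,g)$ reduces to the pointwise identity $g(\imath_Z\varphi,\imath_W\varphi) = \tfrac13\|\varphi\|^2\,g(Z,W)$ for all $Z,W\in V$.

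It then remains to run a short piece of Lorentzian linear algebra. Taking the metric trace of this identity gives $4\|\varphi\|^2=\tfrac{11}{3}\|\varphi\|^2$, whence $\|\varphi\|^2=0$; feeding this back in we obtain $\|\imath_Z\varphi\|^2=0$ for every $Z\in V$, i.e.\ each contraction $\imath_Z\varphi$ is a null $3$-form. Choosing $Z=t$ timelike, the orthogonal complement $t^\perp$ is negative definite and $\imath_t\varphi\in\Lambda^3(t^\perp)$, so being null it must vanish; since the timelike vectors span $V$ this forces $\varphi=0$, and therefore $\ker\imath^*=0$. The final assertion is then immediate: by Definition~\ref{def:realizable}(i) the $4$-forms $\varphi$ attached to a given filtered deformation $\fg$ of $\fa$ form an affine space over $\ker\imath^*=0$, so there is at most one of them, in particular at most one admissible $\varphi$.

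I expect the only step with real content to be the bridge to supergravity, namely recognising that the flat space carrying the constant form $\varphi$ is automatically highly supersymmetric --- a consequence precisely of $\beta^\varphi|_{V\otimes S'}=0$ --- and hence subject to Theorem~\ref{thm:mainII}. Once that is in place both the trace computation and the fact that a $4$-form all of whose single-vector contractions are null must vanish in Lorentzian signature are routine; the one point worth checking carefully is that the constant $S'$-valued spinors genuinely solve the Killing spinor equation for $F=\varphi$, and that the affine-space statement in Definition~\ref{def:realizable}(i) is exactly what is needed to read off uniqueness of $\varphi$.
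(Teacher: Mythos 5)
Your proof is correct and follows essentially the same route as the paper's: take a putative element $\varphi$ of $\ker\imath^*$, realise it as the flux of a flat highly supersymmetric background, feed the resulting Ricci-flatness into the Einstein equation furnished by Theorem~\ref{thm:mainII}, trace to obtain $\|\varphi\|^2=0$ and then $g(\imath_Z\varphi,\imath_W\varphi)=0$ for all $Z,W$, and finish with Lorentzian linear algebra. The only differences are cosmetic: the paper reaches the flat background by reducing to $\fh=0$, forming the trivial realizable subdeformation $\fg\cong\fa$ (with $X=\rho=0$) and invoking the reconstruction Theorem~\ref{thm:firstmain}, whereas you bypass that machinery by exhibiting Minkowski space with the constant $4$-form $F=\varphi$ directly and checking by hand that the constant $S'$-valued spinors are Killing spinors; and for the final step the paper gestures at a Witt basis while you give the explicit argument via a timelike $t$ and the negative-definiteness of $\Lambda^3(t^\perp)$. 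Both of these are fine; your version is, if anything, slightly more self-contained.
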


\begin{proof}
  We first note that
  $\imath^*:H^{2,2}(\fp_-,\fp)\to H^{2,2}(\fa_-,\fp)$ depends only on
  the negatively graded part $\fa_-=S'\oplus V$ of $\fa$. We can
  therefore assume without any loss of generality that $\fa=\fa_-$
  from now on, so that $\fh=0$.

  Now let $\varphi\in\Lambda^4 V$ such that the corresponding class
  $[\beta^\varphi+\gamma^\varphi]\in H^{2,2}(\fp_-,\fp)$ satisfies
  $\imath^*[\beta^\varphi+\gamma^\varphi]=0$. In other words
  $\beta^\varphi|_{V\otimes S'}=\gamma^\varphi|_{S'\otimes
    S'}=0$. Also let $\fg$ be the filtered deformation of $\fa$
  determined by the brackets \eqref{eq:generalbrackets} and
  \eqref{eq:generalbracketsII} with $X=\rho=0$; by construction $\fg$
  is a trivial realizable filtered subdeformation of $\fp$, with
  associated admissible $4$-polyvector $\varphi$. Triviality here
  refers to the fact that $\fg\cong\fa$ is actually a graded Lie
  subalgebra of $\fp$.

  It follows from Theorem \ref{thm:firstmain} that the associated
  homogeneous lorentzian spin manifold $(M,g,Q,F)$, where
  $F_o=\varphi$, has vanishing Riemann curvature. It is also highly
  supersymmetric so that, by Theorem \ref{thm:mainII}, it solves the
  bosonic field equations. In particular, the Einstein equation says
  \begin{equation}
    \label{eq:lasteq!}
    0=\tfrac12g(\imath _ZF,\imath_W F)-\tfrac16\|F\|^2 g(Z,W),
  \end{equation}
  for all $Z,W\in\mathfrak{X}(M)$. Taking the trace over $Z,W$ yields
  $0=\tfrac16\|F\|^2$ so that both terms in \eqref{eq:lasteq!} have
  to vanish separately and $g(\imath _ZF,\imath_W F)=0$ for all
  $Z,W\in\mathfrak{X}(M)$. Using a Witt basis for $T_oM$ it is then
  straightforward to see that this can only happen when
  $\varphi=F_o=0$.
\end{proof}

As we have had ample opportunity to see, filtered deformations $\fg$
of graded subalgebras $\fa$ of $\fp$ are not, in general, graded Lie
subalgebras of $\fp$. By Corollary~\ref{cor:final!}, the unique
\emph{highly supersymmetric} background associated to graded
subalgebras of $\fp$ is actually the Minkowski vacuum. In particular,
the Minkowski vacuum is also the unique \emph{highly supersymmetric}
background with vanishing flux $F$.

Corollary~\ref{cor:final!} fails to hold in the general case.  There
are indeed other supergravity backgrounds whose associated Killing
superalgebras are graded subalgebras of $\fp$. This is the case for
some $\tfrac12$-BPS solutions such as M2 and M5 branes, see
e.g., \cite{JMFSuperDeform}, and it  also seems to be the case for
backgrounds asymptotic to the Minkowski vacuum. Finally, any
Ricci-flat $11$-dimensional lorentzian spin manifold endowed with a
parallel spinor provides a low supersymmetric background with
vanishing flux $F$, cf.~\cite{JMWaves,Bryant-ricciflat}.

\section{Summary and conclusions}
\label{sec:summary-conclusions}

In this paper we have elucidated the algebraic structure of the Lie
superalgebra generated by the Killing spinors of an $11$-dimensional
supergravity background.  We have shown that it is a filtered
deformation of a $\ZZ$-graded subalgebra of the Poincaré superalgebra.
(Parenthetically, this is not unique to $11$-dimensional supergravity,
but it is known to be the case for the Lie algebra of automorphisms of
riemannian and conformal manifolds, as well as other supergravity
theories.  Moreover it is also expected to be the case for conformal
supergravities.)  Together with the (local) homogeneity theorem, which
states that ``highly supersymmetric'' backgrounds (i.e, those
preserving more than half of the supersymmetry) are locally
homogeneous, this provides a new approach to the classification
problem based on the classification of the Killing superalgebras (or
the Killing ideals) of such backgrounds, which we have identified with
a class of (odd-generated) realizable filtered subdeformations of the
Poincaré superalgebra.  We have outlined in purely algebraic terms the
classification problem of Killing ideals of highly supersymmetric
supergravity backgrounds.  It consists of two steps
\begin{enumerate}
\item classify all the Lie pairs $(S',\varphi)$ up to isomorphism; and
\item for each such isomorphism class, consider all $(R,X)$, where $R$
  is an ($\fh = \fh_{(S',\varphi)}$)-invariant algebraic curvature
  tensor and $X: V \to \fso(V)/\fh$, such that
  \begin{enumerate}[label=(\roman*)]
  \item the $4$-form $F$ defined by $\varphi$ is closed;
  \item the right-hand sides of \eqref{eq:preKSA} take values in
    $\fh\oplus S'\oplus V$; and
  \item the three equations~\eqref{eq:ssvJac}, \eqref{eq:svvJac} and
    \eqref{eq:difBianchiVp=V} are satisfied.
  \end{enumerate}
\end{enumerate}

Among the corollaries derived from this approach is the statement that
high supersymmetry (and $dF=0$) imply the bosonic field equations.
Hence we can be sure that classifying (maximal, odd-generated) realizable
filtered subdeformations one classifies highly supersymmetric
backgrounds. 

\section*{Acknowledgments}

We are grateful to the anonymous referee for comments on a previous
version of the paper.  We believe that the paper has improved as a
result of the peer-review process.  The present version of the paper
was finished while we were participating in the workshop ``Geometry,
Gravity and Supersymmetry'' at the Mainz ITP.  It is our pleasure to
thank them for their hospitality and for providing a pleasant
collaborative environment.

The research of JMF is supported in part by the grant ST/L000458/1
``Particle Theory at the Higgs Centre'' from the UK Science and
Technology Facilities Council.  The research of AS is fully supported
by a Marie-Curie research fellowship of the ''Istituto Nazionale di
Alta Matematica'' (Italy).  We are grateful to these funding agencies
for their support.


\providecommand{\href}[2]{#2}\begingroup\raggedright\endgroup

\end{document}